\DeclareMathOperator*{\argmin}{arg\,min}
\newtheorem{definition}{Definition}
\newtheorem{proposition}{Proposition}
\newtheorem{theorem}{Theorem}
\newtheorem{lemma}{Lemma}
\newtheorem{corollary}{Corollary}
\newtheorem{remark}{Remark}
\newtheorem{example}{Example}
\newcommand{\real}{\ensuremath{\mathbb{R}}}
\newcommand{\s}{\ensuremath{\mathbb{S}}}
\newcommand{\ltwo}{\ensuremath{\mathbb{L}^2}}
\newcommand{\cC}{\ensuremath{\mathbb{C}}}
\newcommand{\R}{\mathbb{R}}
\newcommand{\cE}{\mathcal{E}}
\newcommand{\cM}{\mathcal{M}}
\newcommand{\gauss}{\mathsf{G}}
\newcommand{\GM}{\mathsf{GM}}
\newcommand{\inner}[2]{\langle #1,#2 \rangle}
\title{A Wasserstein-type Distance for Gaussian Mixtures on Vector Bundles with Applications to Shape Analysis}
\author{Michael Wilson, Tom Needham, Chiwoo Park, Suprateek Kundu, and Anuj Srivastava}
\date{}
\begin{document}

\maketitle
	
	\begin{abstract}
	   This paper uses sample data to study the problem of comparing populations on finite-dimensional parallelizable Riemannian manifolds and more general trivial vector bundles. Utilizing triviality, our framework represents populations as mixtures of Gaussians on vector bundles and estimates the population parameters using a mode-based clustering algorithm. We derive a Wasserstein-type metric between Gaussian mixtures, adapted to the manifold geometry, in order to compare estimated distributions. Our contributions include an identifiability result for Gaussian mixtures on manifold domains and a convenient characterization of optimal couplings of Gaussian mixtures under the derived metric. We demonstrate these tools on some example domains, including the pre-shape space of planar closed curves, with applications to the shape space of triangles and populations of nanoparticles. In the nanoparticle application, we consider a sequence of populations of particle shapes arising from a manufacturing process, and utilize the Wasserstein-type distance to perform change-point detection.  
	\end{abstract}

	\section{Introduction} 
        Modern statistical analysis increasingly involves data objects that are nonlinear and non-Euclidean. A prominent example is {\it directional data}~\cite{mardia2000directional} where data naturally lies on a unit sphere. Another example is shape analysis, where one is interested in analyzing {\it shapes} of imaged objects. Although several approaches have been developed for shape analysis (see \cite{mardia-dryden-book,kendall-barden-carne,small-shapes,srivastava2016functional,younes-diffeo,bauer2022elastic} and others), they all agree in that the representation spaces of shapes are nonlinear. Examples of nonlinear data domains are also present in covariance analysis~\cite{dryden-covariance:2009,covariance-zhengwu:2018}, functional data analysis~\cite{srivastava2016functional}, and graphical data~\cite{jain2016geometry,chowdhury2020gromov,guo-JMIV:2022}. Analysis of non-Euclidean data requires statistical tools adapted to the differential geometries of the underlying representation spaces. These tools include statistical modeling, parameter estimation, and inferences. 
        Our paper is focused on a specific subproblem in this broad field, {\it i.e.},  comparing probability distributions on certain nonlinear domains. Precisely, we will model the probability distributions as {\it mixtures of Gaussians}, adapted to nonlinear domains of interest, and compare them using a novel variant of the Wasserstein distance. These choices -- mixtures of Gaussian models and Wasserstein metric -- are driven by convenience and applicability. Gaussian mixtures~\cite{delon-agnes:2019,takatsu:2008} provide a general yet parametric option for capturing population variability, and Wasserstein metrics have become a canonical choice for comparing distributions in a variety of contexts~\cite{villani2009,COTFNT}.

        We develop a general framework for domains that are trivial vector bundles. A \emph{vector bundle} is a set of isomorphic vector spaces indexed by points on a smooth manifold~\cite{milnor1974characteristic}; a vector bundle is called \emph{trivial} if this structure can be realized as a product of a smooth manifold with a vector space. The tangent bundle of a parallelizable manifold is an example of a trivial vector bundle. Examples of parallelizable manifolds include punctured spheres, spaces of symmetric positive definite matrices, Lie groups, and several shape spaces.
        
        Distributions on vector bundles provide a natural setting for combining results from the optimal transport~\cite{delon-agnes:2019,WGOT} and shape analysis~\cite{mardia-dryden-book,srivastava2016functional} literature. The two main goals of this paper are to develop a general theory for comparing certain distributions on trivial vector bundles (Section \ref{section:gaussians_on_vector_bundles}) and to apply this theory to study shape populations arising from images captured in a nano-manufacturing process (Section \ref{section: applications}).  The main challenges in deriving a framework for comparing probability distributions on vector bundles include (1)  the nonlinearity of underlying domains and the specification of convenient probability distributions for such domains, (2) efficient estimation of these distributions from data, and (3) comparisons of estimated distributions using proper metrics between distributions. We outline these choices next: 
         \begin{enumerate}
         
         \item  {\bf Forms of Probability Distributions}: Our first task is to define a probability distribution on a trivial bundle. While nonparametric approaches, often based on kernel methods, have gained prominence due to their generality and broad applicability, they require large sample sizes to capture the population variability effectively. In contrast, parametric families such as mixtures of Gaussians are robust under small sample sizes and have been covered extensively in the past literature. As mentioned earlier, the problem is complicated due to the nonlinearity of targeted domains. While some parametric families have been adapted from Euclidean to nonlinear domains, the choice is relatively limited. Some adaptations of Gaussians to nonlinear and compact domains include truncated Gaussians, von Mises, and wrapped Gaussian distributions~\cite{mardia2000directional}. This paper represents the underlying distribution as a mixture of Gaussians defined appropriately for vector bundles. (Note that modeling a population with a Gaussian mixture on vector bundles is akin to modeling that population as a wrapped Gaussian mixture on the base manifold.)  The primary motivation for choosing Gaussian mixtures is their generality, simplicity, and interpretability of the resulting Wasserstein distance.

        \item {\bf Estimating Probability Distributions}: The next issue is efficiently estimating Gaussian mixtures from given data. Depending on the chosen space and the Riemannian metric, several papers have studied the estimation of basic summary statistics from the data, such as means and covariances. However, the literature on estimating parameters of mixtures of Gaussians on nonlinear domains is relatively limited. The main issue is computational. The EM algorithm is an established approach for estimating Gaussian mixtures on vector spaces, but is expensive and prone to local solutions. Furthermore, its adaptations to nonlinear domains are costlier due to iterative computations of sample means~\cite{srivastava-joshi-etal:05,CHEN2021303}. Some papers have adapted EM algorithms for truncated Gaussians and von Mises distributions to nonlinear spaces (see \cite{Hauberg2018DirectionalSW}). We will modify and apply a recent method that performs clustering on manifold data by finding {\it modes} of the underlying distribution~\cite{deng-ISBI:2022}. We shall treat these modes as estimates of Gaussian means and further estimate covariances within individual clusters. The procedure for estimating cluster memberships, means, and covariances for general metric spaces is provided in~\cite{deng-ISBI:2022,deng-ICPR:2022}.

         \item  {\bf Metrics Between Probability Distributions}: The final issue is defining a metric for comparing and quantifying differences between chosen distributions in vector bundles. While there are several choices for this metric, the Wasserstein metric has become popular for several reasons. When available, it provides an interpretable solution for comparing probability distributions. It is also robust to misspecifications in distributions due to estimation errors. Finally, it leads to a closed form expression for comparing certain parametric families, in particular Gaussians. In this paper, we will utilize a Wasserstein-type metric previously developed for Euclidean domains~\cite{delon-agnes:2019}. (These are called Wasserstein-type because the couplings -- joint distributions for minimizing cost function -- are restricted to be mixtures of Gaussians, rather than all distributions.) Specifically, we will derive this metric for comparing mixtures of Gaussians on trivial vector bundles.

        A key idea that helps us define distances between Gaussians on trivial vector bundles is that any  trivialization leads to a consistent choice of basis for each of the tangent spaces of the manifold -- {\it i.e.}, a \emph{global moving frame}. Fixing consistent coordinates allows us to apply,  in a coherent manner, closed form expressions for distances between Gaussians on different tangent spaces. Later on we give examples of simple families of trivializations which are natural from an object-oriented data analysis persepective.
        \end{enumerate}

        We will demonstrate these ideas using both simulated and real-world datasets. In the simulated study, we will generate samples from mixtures of Gaussians on the (punctured) unit sphere $\s^2$ and demonstrate procedures for parameter estimation and population comparisons. We will also consider an example where the shape space of planar triangles~\cite{Kendall_triangle,cantarella2019random} is identified with $\s^2$, so that one can compare shape populations of triangles. In the real-data study, we investigate transmission-electron-microscopy videos of particles in nano-manufacturing processes, where each image frame contains hundreds of particles that differ in shape, size, and placement. We focus on the shapes of their contours and treat shapes in a frame as random samples from underlying shape populations. We model these shape populations as mixtures of Gaussians on the shape space of planar, closed contours. As mentioned above, we use a mode estimation procedure to infer the parameters of mixtures from the observed shape data for each frame separately. The goal is to track and compare temporal evolutions of these shape distributions and to quantify their changes over time. For instance, we use this quantification to detect change points in the manufacturing process.\\
    
    \noindent The salient contributions of this paper are as follows. 
        \begin{enumerate}
        \item It extends the notion of Gaussians and mixtures of Gaussians to trivial vector bundles, and uses them for statistical modeling and analysis. Examples of such domains include punctured spheres, tori, matrix Lie groups, spaces of symmetric positive-definite matrices, and other domains useful in statistical analysis. 
        
        \item It derives a convenient expression for comparing mixtures of Gaussians using a \\
        Wasserstein-type metric. This development provides useful insights into choices made for problem domain, probability model, and metric for comparing populations. 
       
         \item It applies these tools to comparisons of populations of planar contour shapes and for finding change points in temporal evolution of shape populations. 
        \end{enumerate}

        The paper proceeds as follows; in Section \ref{section: background}, we cover the background information necessary to introduce the Wasserstein-type distance for Gaussian mixtures on $\mathbb{R}^n$. In Section \ref{section:gaussians_on_vector_bundles}, we present our proposed framework for extending this Wasserstein-type distance to mixtures of Gaussians on vector bundles. In Section \ref{section: applications}, we present our experimental results involving real and simulated data. Section \ref{sec:conclusion} concludes the paper with some observations.

    \section{Background on Wasserstein Distances} \label{section: background}

        In this section, we introduce some background material and existing results to lay groundwork for our approach. Specifically, we focus on the mixtures of Gaussians on Euclidean spaces and the expressions for Wasserstein distances between such mixtures. 
        
        \subsection{Classical Wasserstein Distance}

        We begin with necessary background material on classical distances between probability distributions, called {\it Wasserstein distances}, on a general metric space.

        \paragraph{Wasserstein Distances for Metric Spaces}
        Let $(\mathcal{X},d)$ be a metric space.  
        
        \begin{definition}
        For $p \geq 1$, the \emph{Wasserstein space} $\mathcal{P}_p(\mathcal{X})$ is the set of probability measures on $\mathcal{X}$ with finite p-th moment, i.e., for every $x_0 \in \mathcal{X}$, the integral $\int_\mathcal{X} d(x_0,x)^p d\mu(x)$ is finite. The \emph{p-Wasserstein distance} $W^\mathcal{X}_p$ between probability measures $\mu_0,\mu_1 \in \mathcal{P}_p(\mathcal{X})$ is given by
        \begin{equation}\label{eq: w_p}
            W^\mathcal{X}_p(\mu_0, \mu_1) := \bigg(\inf_{\gamma \in \Pi(\mu_0, \mu_1)} \int_{\mathcal{X}\times\mathcal{Y}} d(x,y)^p d\gamma(x,y) \bigg)^\frac{1}{p}\ ,
        \end{equation}
        where $\Pi(\mu_0,\mu_1)$ is the set of \emph{couplings} of $\mu_0$ and $\mu_1$; that is, the set of joint probability measures $\gamma$ on $\mathcal{X} \times \mathcal{X}$ that have marginal distributions $\mu_0$ and $\mu_1$. A joint measure $\gamma$ that achieves the infimum of Equation~\eqref{eq: w_p} is called an \emph{optimal coupling}. 
        \end{definition} 
        
        The field of \emph{optimal transport (OT)} studies properties of the Wasserstein distance and related constructions; see, for example, \cite{COTFNT, Ambrosio2013, villani2009} for overviews of the well-developed theory of OT. In particular, the Wasserstein distance is a metric on $\mathcal{P}_p(\mathcal{X})$, under mild assumptions on $\mathcal{X}$ (e.g., $\mathcal{X}$ is a Polish space). 

        \paragraph{Finitely-Supported Measures}
        From an applications-oriented perspective, it is most common to consider the Wasserstein distance between finitely-supported distributions. In this setting, calculating the Wasserstein distance comes down to solving a constrained linear program.
        Indeed, for $i=0,1$, let
        \begin{equation*}
            \mu_i = \Sigma_{k=1}^{K_i} \alpha^k_{i} \delta_{x^k_{i}}, \ \Sigma_{k=1}^{K_i} \alpha^k_{i} = 1, \mbox{ where} \ \alpha^k_{i}>0 \ \forall \; i,k
        \end{equation*}
        be probability measures supported on points $x^k_{i} \in \mathcal{X}$. By an abuse of notation, we consider $\mu_i$ as a (column) vector $\mu_i = [\alpha^1_{i},\ldots,\alpha^{K_i}_k]^T \in \R^{K_i}$. Then the space of couplings can be identified with a set of matrices,
        \begin{equation*}
            \Pi(\mu_0, \mu_1) = \{ \pi \in \mathbb{R}^{K_0 \times K_1}: \pi^T\textbf{1} = \mu_0, \textbf{1}^T\pi = \mu_1^T \},
        \end{equation*}
        where $\textbf{1}$ always represents the column vector of all ones, whose size is inferred by context. When considering a coupling $\pi$ as a matrix, we write its $(i,j)$-entry as $\pi_{ij}$. Then the Wasserstein $p$-distance is given by
        \begin{equation*}
            W_p^\mathcal{X}(\mu_0, \mu_1)^p = \min_{\pi \in \Pi(\mu_0, \mu_1)} \Sigma_{i,j=1}^{K_0, K_1} \pi_{ij} d(x_0^{i}, x_1^{j})^p = \min_{\pi \in \Pi(\mu_0, \mu_1)} \langle D, \pi \rangle_F,
        \end{equation*}
        where $\langle \cdot, \cdot \rangle_F$ is the Frobenius inner product on $\R^{K_0 \times K_1}$ and $D \in \R^{K_0\times K_1}$ is the matrix with $(i,j)$-entry given by $d(x_0^{i}, x_1^{j})^p$. This shows that the objective of the Wasserstein distance computation is a linear function, and it is not hard to see that the constraint set $\Pi(\mu_0, \mu_1)$ is a convex polytope in $\R^{K_0 \times K_1}$. 

        \paragraph{Gaussian Distributions on $\real^d$}
        In general, calculating Wasserstein distances between continuous distributions is impossible due to the infinite-dimensional nature of the associated optimization problem. However, in the case of Gaussian distributions, there is a simple closed form equation for the Wasserstein distance in terms of the parameters of the distributions. We use $N_d(m, \Sigma)$ to denote the Gaussian distribution on $\R^d$ with mean $m \in \R^d$ and covariance $\Sigma \in \mathsf{Sym}_d^+$, where $\mathsf{Sym}_d^+ \subset \R^{d \times d}$ denotes the set of symmetric positive-definite matrices. When considering $\R^d$ as a metric space, we always use the standard Euclidean metric. The following result is classical.

        \begin{proposition}[See \cite{DOWSON, 10.1307/mmj/1029003026, MCCANN}]\label{prop:distance_between_gaussians}
            
        Given two Gaussian distributions on $\R^d$, $\eta_i = N_d(m_i, \Sigma_i), i \in \{0,1\}$, the squared 2-Wasserstein distance between $\mu_0$ and $\mu_1$ is given by
        
        \begin{equation}\label{eq: gaussian wasserstein}
        	W_2^{\R^d}({\eta_0},{\eta_1})^2 = \|m_0 - m_1\|^2 + \mathrm{tr}\left(\Sigma_0 + \Sigma_1 - 2(\Sigma_0^{\frac{1}{2}}\Sigma_1\Sigma_0^{\frac{1}{2}})^{\frac{1}{2}}\right).
        \end{equation}
        Moreover, an optimal coupling is given by a Gaussian measure on $\mathbb{R}^d \times \mathbb{R}^d$. If $m_0 = m_1 = 0 \in \R^d$, then there is an optimal Gaussian coupling with mean zero.
        \end{proposition}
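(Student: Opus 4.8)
\emph{Reduction to the centered case.} The plan is to reduce the computation to a finite-dimensional optimization over admissible cross-covariance matrices and then solve it explicitly, with the claims about Gaussian optimal couplings falling out of the reduction. Given any coupling $\gamma \in \Pi(\eta_0, \eta_1)$, realized as $(X,Y) \sim \gamma$, I would expand
\[
\mathbb{E}\|X - Y\|^2 = \|m_0 - m_1\|^2 + \mathbb{E}\big\|(X - m_0) - (Y - m_1)\big\|^2 ,
\]
the cross term vanishing since $\mathbb{E}[X - m_0] = \mathbb{E}[Y - m_1] = 0$. Because translation by $(m_0, m_1)$ is a measure-preserving bijection from $\Pi(N_d(0,\Sigma_0), N_d(0,\Sigma_1))$ onto $\Pi(\eta_0,\eta_1)$, this yields
\[
W_2^{\R^d}(\eta_0,\eta_1)^2 = \|m_0 - m_1\|^2 + W_2^{\R^d}\big(N_d(0,\Sigma_0), N_d(0,\Sigma_1)\big)^2 ,
\]
and a mean-zero Gaussian optimal coupling for the centered pair translates to a Gaussian optimal coupling for the original pair, with mean zero preserved when $m_0 = m_1 = 0$. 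So it suffices to treat $m_0 = m_1 = 0$.

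\emph{Reduction to a matrix problem.} For a centered coupling $(X,Y)\sim\gamma$, set $C = \mathbb{E}[XY^\top]$; then $\mathbb{E}\|X-Y\|^2 = \mathrm{tr}(\Sigma_0) + \mathrm{tr}(\Sigma_1) - 2\,\mathrm{tr}(C)$, so the task is to \emph{maximize} $\mathrm{tr}(C)$. The only constraint on $C$ is that the block matrix $\left(\begin{smallmatrix}\Sigma_0 & C\\ C^\top & \Sigma_1\end{smallmatrix}\right)$ be positive semidefinite: it is necessary for any coupling (evaluate the quadratic form of the second-moment matrix of $(X,Y)$ on vectors $(u,v)\in\R^d\times\R^d$), and it is sufficient because any such positive semidefinite block matrix is the covariance of a centered Gaussian on $\R^d\times\R^d$ whose marginals are precisely $N_d(0,\Sigma_0)$ and $N_d(0,\Sigma_1)$. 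Hence the infimum over all couplings equals the infimum over mean-zero Gaussian couplings and is attained, which already gives the two Gaussian-optimality assertions once $\max\mathrm{tr}(C)$ is identified.

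\emph{Solving the optimization.} Assume first $\Sigma_0,\Sigma_1 \succ 0$ and substitute $C = \Sigma_0^{1/2} K \Sigma_1^{1/2}$. By the Schur-complement criterion the semidefiniteness constraint becomes $I - K^\top K \succeq 0$, i.e.\ $\|K\|_{\mathrm{op}} \le 1$, while $\mathrm{tr}(C) = \mathrm{tr}\big(K\,\Sigma_1^{1/2}\Sigma_0^{1/2}\big)$ by cyclicity of the trace. The variational form of the nuclear norm (von Neumann's trace inequality) gives $\max_{\|K\|_{\mathrm{op}}\le 1}\mathrm{tr}(KB) = \|B\|_* = \mathrm{tr}\big((B^\top B)^{1/2}\big)$ for $B = \Sigma_1^{1/2}\Sigma_0^{1/2}$, attained at $K = VU^\top$ where $B = U\Lambda V^\top$ is a singular value decomposition. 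Since $B^\top B = \Sigma_0^{1/2}\Sigma_1\Sigma_0^{1/2}$, the maximum equals $\mathrm{tr}\big((\Sigma_0^{1/2}\Sigma_1\Sigma_0^{1/2})^{1/2}\big)$, which together with the centering reduction is exactly \eqref{eq: gaussian wasserstein}. The degenerate case where $\Sigma_0$ or $\Sigma_1$ is singular I would handle by applying the result to $\Sigma_i + \varepsilon I$ and letting $\varepsilon \downarrow 0$, using that both sides of \eqref{eq: gaussian wasserstein} are continuous in the parameters and that the corresponding optimal Gaussian couplings converge weakly.

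\emph{Main obstacle.} The substance of the argument is the matrix optimization in the third step: recognizing that admissibility of $C$ is \emph{exactly} the positive-semidefinite block condition, reparametrizing through a contraction $K$, and invoking the nuclear-norm characterization both to evaluate the optimum and to exhibit the maximizer. The centering reduction and the realizability of Gaussian couplings with prescribed marginals are routine bookkeeping.
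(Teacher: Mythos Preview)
Your argument is correct. The paper, however, does not give its own proof of this proposition: it records the result as classical and simply cites the original sources (Dowson--Landau, Olkin--Pukelsheim, and McCann). So there is no in-paper proof to compare against.

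For what it is worth, your route---reduce to the centered case, recast the problem as maximizing $\mathrm{tr}(C)$ over cross-covariances $C$ making the block matrix $\left(\begin{smallmatrix}\Sigma_0 & C\\ C^\top & \Sigma_1\end{smallmatrix}\right)$ positive semidefinite, then solve via the contraction substitution $C=\Sigma_0^{1/2}K\Sigma_1^{1/2}$ and the nuclear-norm duality---is essentially the Olkin--Pukelsheim argument, one of the three references the paper points to. Your observation that every admissible $C$ is realized by a centered Gaussian coupling is exactly what delivers the ``optimal coupling is Gaussian'' and ``mean-zero when $m_0=m_1=0$'' claims simultaneously, so the structure of your write-up matches the statement well. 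The $\varepsilon$-regularization for singular $\Sigma_i$ is the standard way to close the gap and is fine as stated.
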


        Let $\gauss_d := \{N_d(m,\Sigma): m \in \mathbb{R}^d, \Sigma \in \mathsf{Sym}_d^+ \}$ denote the set of Gaussian measures on $\R^d$. The above implies that $(\gauss_d,W_2)$ is a metric space whose metric is explicitly computable (here, we use $W_2 = W_2|_{\gauss_d \times \gauss_d}$, by abuse of notation). Due to this computational convenience, we focus on the $p=2$ version of Wasserstein distance for the rest of the paper.
        
        \subsection{Gaussian Mixture Measures on $\real^d$} 
        
        The closed formula \eqref{eq: gaussian wasserstein} for Wasserstein distance between Gaussians suggests that we consider a richer set of measures consisting of collections of Gaussians. More precisely:
        
        \begin{definition}\label{def:gaussian_mixture}
        	A measure $\mu$ on $\R^d$ is a \emph{Gaussian mixture measure} (or just \emph{Gaussian mixture}) if it can be written as
        	\begin{equation}\label{eqn:gaussian_mixture_defn}
        		\mu = \Sigma_{k=1}^K w_k\eta_k, \text{ where } \eta_k = N_d(m_k,\Sigma_k) \text{ and } \ \Sigma_{k=1}^K w_{k} = 1, \ w_k\geq0, \ \forall \; k.
        	\end{equation}

            A Gaussian mixture $\mu$ can also be considered as a discrete probability measure on $\gauss_d$. We use $\mu^\ast$ to distinguish this representation and write
            
            \begin{equation*}
        		\mu^* = \Sigma_{k=1}^K w_k\delta_{\eta_k}, \text{ where } \eta_k \in \gauss_d \text{ and } \ \Sigma_{k=1}^K w_{k} = 1, \ w_k\geq0 \ \forall \; k.
        	\end{equation*}

            We use $\GM_d$ to denote the collection of all Gaussian mixtures on $\R^d$. 
        \end{definition}

        An important property of Gaussian mixtures is that they are \emph{identifiable} in a certain precise sense, meaning that the representation given in \eqref{eqn:gaussian_mixture_defn} is essentially unique. Let us now explain this more precisely. The representation \eqref{eqn:gaussian_mixture_defn} is not strictly unique, as one could, for example, rearrange the terms or replace a term $w_k \eta_k$ by $(w_k/2)\eta_k + (w_k/2)\eta_k$ without changing the resulting measure. If a Gaussian mixture $\mu$ is written in the form \eqref{eqn:gaussian_mixture_defn} such that all Gaussians $\eta_k$ are pairwise distinct, we say that the representation is in \emph{minimal form}. We have the following classical result from \cite{yakowitz1968identifiability} (see also \cite[Proposition 2]{delon-agnes:2019}), which we record here for later use.

        \begin{proposition}[\cite{yakowitz1968identifiability}]\label{prop:identifiability}
            Let $\mu$ be a Gaussian mixture, with minimal form representations
            \[
            \sum_{k=1}^K w_k\eta_k \quad \mbox{ and } \quad \sum_{k=1}^{K'} w_k'\eta_k'.
            \]
            Then $K = K'$ and there exists a permutation $\sigma$ of $\{1,\ldots,K\}$ such that $w_k = w_{\sigma(k)}'$ and $\eta_k = \eta_{\sigma(k)}'$ for all $k$. 
        \end{proposition}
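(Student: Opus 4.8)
The plan is to reduce the statement to the \emph{linear independence} of distinct Gaussian measures, which then makes the matching of the two representations automatic. Suppose $\mu = \sum_{k=1}^K w_k\eta_k = \sum_{k=1}^{K'} w_k'\eta_k'$ with both representations in minimal form. Collect all the Gaussians occurring in either list into a single family of pairwise distinct Gaussians $\zeta_1,\dots,\zeta_N$, say $\zeta_j = N_d(\mu_j, S_j)$, and subtract the two representations to obtain $\sum_{j=1}^N c_j\,\zeta_j = 0$ as measures (equivalently, as densities on $\R^d$), where $c_j$ is the weight of $\zeta_j$ in the first list minus its weight in the second, with an absent term contributing weight $0$. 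If we can show this forces $c_j = 0$ for every $j$, we are done: each $\eta_k$ then appears in the second list with the same weight $w_k$ and conversely, which yields the bijection between the two lists and hence $K = K'$ together with a matching permutation $\sigma$.

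So the crux is the claim: \textbf{if $\zeta_1,\dots,\zeta_N$ are pairwise distinct Gaussians on $\R^d$ and $\sum_{j=1}^N c_j\zeta_j = 0$, then every $c_j = 0$}. I would prove this with characteristic functions. Taking Fourier transforms turns the identity into $\sum_{j=1}^N c_j \exp\!\big(i\,\mu_j^\top t - \tfrac12 t^\top S_j t\big) = 0$ for all $t \in \R^d$; since each summand is the restriction of an entire function on $\mathbb{C}^d$, the identity extends to complex $t$, in particular to $t = -i\tau$ with $\tau \in \R^d$, giving $\sum_{j=1}^N c_j \exp\!\big(\mu_j^\top\tau + \tfrac12 \tau^\top S_j\tau\big) = 0$. (One could instead stay on the real axis and use that a nonzero trigonometric polynomial cannot tend to $0$ at infinity.) Now restrict to a ray $\tau = su$ and let $s \to \infty$, choosing the direction $u$ outside the finitely many quadrics $\{u : u^\top(S_j - S_k)u = 0\}$ and hyperplanes $\{u : (\mu_j - \mu_k)^\top u = 0\}$ that could force two exponents to coincide; for such generic $u$ the pairs $\big(u^\top S_j u,\ \mu_j^\top u\big)$ are pairwise distinct. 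Ordering the terms lexicographically by $(u^\top S_j u, \mu_j^\top u)$ and dividing the identity by the fastest-growing exponential, every other term tends to $0$ as $s \to \infty$, so the coefficient of the dominant term must vanish; peeling off terms one at a time gives $c_j = 0$ for all $j$.

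The reduction in the first step is routine bookkeeping, so the main obstacle is exactly this independence argument. Two points there deserve care: the assertion that a generic $u$ separates all the exponent pairs (which uses only that a finite union of proper algebraic subsets of $\R^d$ has empty interior), and the domination step, where one checks that $\exp\!\big(s(\mu_j^\top u - \mu_{j_0}^\top u) + \tfrac12 s^2(u^\top S_j u - u^\top S_{j_0}u)\big) \to 0$ whenever $(u^\top S_j u, \mu_j^\top u)$ precedes $(u^\top S_{j_0}u, \mu_{j_0}^\top u)$ lexicographically. This is the multivariate analogue of the tail-ratio argument of Yakowitz--Spragins for univariate Gaussian CDFs, and it is precisely where the Gaussian form of the mixture components — rather than some other parametric family — is used.
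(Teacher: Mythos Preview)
Your argument is correct: the reduction to linear independence of distinct Gaussians, the passage to characteristic functions, the analytic continuation to real exponentials, the generic-direction selection, and the peeling step all go through as you describe. The only cosmetic point is that ``minimal form'' should be read as requiring strictly positive weights (otherwise the statement itself fails trivially), and with that reading your bookkeeping in the first paragraph is sound.

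There is nothing to compare against in the paper: Proposition~\ref{prop:identifiability} is stated as a classical result and attributed to \cite{yakowitz1968identifiability} without proof. Your approach is precisely the multivariate analogue of the Yakowitz--Spragins tail-ratio argument, recast via moment generating functions rather than CDF tails, so it is in the same spirit as the cited source.
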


        The two perspectives on Gaussian mixture measures described in Definition \ref{def:gaussian_mixture} lead to two candidate metrics on $\GM_d$. On one hand, one could compute the Wasserstein distance $W_2^{\R^d}(\mu_0,\mu_1)$ between $\mu_0,\mu_1 \in \GM_d$. On the other hand, one could compute the Wasserstein distance in the metric space of discrete measures on $\gauss_d$, which reads as 
        \begin{equation}\label{eq: distance1}
            W_2^{\gauss_d}(\mu_0^*, \mu_1^*)^2 = \min_{\pi \in \Pi(w_0, w_1)} \Sigma_{k,l} \pi_{kl} W_2^{\R^d}(\eta_{0}^k, \eta_{1}^\ell)^2,
        \end{equation}
        where $\mu_i = \sum_k w_{i}^k \eta_{i}^k$ and $w_i = (w_{i}^k)_k$ for $i \in \{0,1\}$. This latter notion of distance between Gaussian mixtures was first studied in \cite{chen2018optimal}.

        In general, $W_2^{\R^d}(\mu_0,\mu_1)$ and $W_2^{\gauss_d}(\mu_0^\ast,\mu_1^\ast)$ are not equal. It turns out that this discrepancy can be reconciled by adding an extra constraint to the feasible set in the Wasserstein distance optimization problem. The following was first introduced in \cite{delon-agnes:2019}.

        \begin{definition}[\cite{delon-agnes:2019}]\label{def:mixture_wasserstein distance}
        Given $\mu_i \in \GM_d$, $i \in \{0,1\}$, the \emph{mixture Wasserstein distance} is given by
        \begin{equation*}
        	MW_2^{\R^d}(\mu_0, \mu_1)^2 := \inf_{\pi \in \Pi(\mu_0,\mu_1)\cap \GM_{2d}} \int_{\mathbb{R}^d \times \mathbb{R}^d} \|x - y\|^2 \ d\pi(x,y).
        \end{equation*}
        \end{definition}

        It is shown in \cite[Proposition 4]{delon-agnes:2019} that the alteration of the Wasserstein distance given in Definition \ref{def:mixture_wasserstein distance} agrees with the distance  between Gaussian mixtures used in Equation \eqref{eq: distance1}. We record this result here:

        \begin{theorem}[\cite{delon-agnes:2019}]\label{thm:Gaussian_mixture_metrics}
        For $\mu_i \in \GM_d$, $i \in \{0,1\}$, we have
        $
            MW_2^{\R^d}(\mu_0, \mu_1) = W_2^{\gauss_d}(\mu_0^*, \mu_1^*).
        $
        \end{theorem}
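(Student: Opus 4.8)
The plan is to establish the two inequalities $MW_2^{\R^d}(\mu_0,\mu_1) \le W_2^{\gauss_d}(\mu_0^*,\mu_1^*)$ and the reverse separately; since both quantities are nonnegative it suffices to compare their squares, and I would write $\mu_i = \sum_k w_i^k \eta_i^k$ in minimal form throughout. For the upper bound I would start from an optimal discrete coupling $\pi^\star = (\pi^\star_{k\ell})\in\Pi(w_0,w_1)$ for the transport problem defining $W_2^{\gauss_d}(\mu_0^*,\mu_1^*)$, and for each pair $(k,\ell)$ invoke Proposition~\ref{prop:distance_between_gaussians} to obtain a Gaussian coupling $\gamma_{k\ell}$ on $\R^d\times\R^d$ of $\eta_0^k$ and $\eta_1^\ell$ with cost exactly $W_2^{\R^d}(\eta_0^k,\eta_1^\ell)^2$. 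The glued measure $\pi = \sum_{k,\ell}\pi^\star_{k\ell}\gamma_{k\ell}$ is then a Gaussian mixture on $\R^{2d}$; a short marginal check (using $\sum_\ell\pi^\star_{k\ell}=w_0^k$ and $\sum_k\pi^\star_{k\ell}=w_1^\ell$) shows $\pi\in\Pi(\mu_0,\mu_1)\cap\GM_{2d}$, and its transport cost is $\sum_{k,\ell}\pi^\star_{k\ell}W_2^{\R^d}(\eta_0^k,\eta_1^\ell)^2 = W_2^{\gauss_d}(\mu_0^*,\mu_1^*)^2$. This yields the upper bound, and incidentally shows the infimum in Definition~\ref{def:mixture_wasserstein distance} is attained.

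For the lower bound I would take an arbitrary $\pi\in\Pi(\mu_0,\mu_1)\cap\GM_{2d}$, write it in minimal form $\pi = \sum_{j=1}^J\lambda_j\zeta_j$ with each $\zeta_j$ a Gaussian on $\R^{2d}$, and push it forward along the coordinate projections $P_0,P_1\colon\R^{2d}\to\R^d$. Since marginals of Gaussians are Gaussian, $\mu_0 = (P_0)_\#\pi = \sum_j\lambda_j\,(P_0)_\#\zeta_j$ exhibits $\mu_0$ as a Gaussian mixture; after coalescing repeated components and applying the identifiability result Proposition~\ref{prop:identifiability}, each $(P_0)_\#\zeta_j$ must equal some $\eta_0^{k(j)}$ with $\sum_{j:\,k(j)=k}\lambda_j = w_0^k$, and similarly $(P_1)_\#\zeta_j = \eta_1^{\ell(j)}$ with $\sum_{j:\,\ell(j)=\ell}\lambda_j = w_1^\ell$. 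Then $\hat\pi_{k\ell} := \sum_{j:\,k(j)=k,\,\ell(j)=\ell}\lambda_j$ defines an element of $\Pi(w_0,w_1)$, and since each $\zeta_j$ is a coupling of $\eta_0^{k(j)}$ and $\eta_1^{\ell(j)}$ we have $\int\|x-y\|^2\,d\zeta_j \ge W_2^{\R^d}(\eta_0^{k(j)},\eta_1^{\ell(j)})^2$, whence
\[
\int\|x-y\|^2\,d\pi = \sum_j\lambda_j\int\|x-y\|^2\,d\zeta_j \;\ge\; \sum_{k,\ell}\hat\pi_{k\ell}\,W_2^{\R^d}(\eta_0^k,\eta_1^\ell)^2 \;\ge\; W_2^{\gauss_d}(\mu_0^*,\mu_1^*)^2 .
\]
Taking the infimum over $\pi$ completes the argument.

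I expect the main obstacle to be the bookkeeping in the lower bound: converting the minimal-form decomposition of a Gaussian-mixture coupling into a genuine discrete coupling of the weight vectors is exactly where identifiability is needed, and one must be careful that the projected components $(P_i)_\#\zeta_j$ may coincide even when the $\zeta_j$ are pairwise distinct, so the maps $k(\cdot)$ and $\ell(\cdot)$ are well defined only after grouping. A minor technical caveat is that optimal Gaussian couplings are generically degenerate, so $\GM_{2d}$ should be read as allowing positive-semidefinite covariances (or one passes to a limit); neither inequality is affected.
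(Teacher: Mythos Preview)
Your proposal is correct and mirrors the paper's approach: the paper quotes Theorem~\ref{thm:Gaussian_mixture_metrics} from \cite{delon-agnes:2019} without proof, but its proof of the generalization (Lemma~\ref{lem:gaussian_mixture_vector_bundles}) follows exactly the two-inequality strategy you outline---gluing optimal Gaussian couplings along an optimal discrete plan for the upper bound, and invoking identifiability (there via Corollary~\ref{cor:identifiability_vector_bundles}) to extract a discrete coupling of the weight vectors from the decomposition of a Gaussian-mixture coupling for the lower bound. Your caveats about the bookkeeping in matching projected components and about degenerate optimal Gaussian couplings are apt and do not affect the argument.
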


    \section{Distances Between Gaussian Mixtures on Vector Bundles}\label{section:gaussians_on_vector_bundles}

    The main contribution of this paper is to generalize the Wasserstein-type distance $MW_2^{\R^d}$ to Gaussian Mixtures defined on trivial vector bundles. We first introduce some preliminary ideas.

    \subsection{Preliminaries on Gaussian Mixtures on Vector Bundles}
    
    \paragraph{Gaussian Measures on Inner Product Spaces} Let $(V,\langle \cdot, \cdot \rangle_V)$ be a finite-dimensional inner product space. We wish to consider Gaussian measures on $V$. These could, of course, be defined by choosing an isometry to $\R^d$ and transferring the standard definition. It will be convenient for later computations to have a more coordinate-free description of Gaussian measures on $V$. We develop this point of view here.

    \begin{definition}\label{def:multivariate_gaussian}
        A Borel measure $\mu$ on $V$ is called \emph{Gaussian} if, for every linear functional $f:V \to \R$, the pushforward $f_\# \mu$ is a Gaussian measure (in the standard sense) on $\R$.
    \end{definition} 

    Definition \ref{def:multivariate_gaussian} is used in  \cite[Definition 1.2.1]{bogachev1998gaussian} to characterize Gaussian measures on Euclidean spaces. Indeed, it is straightforward to show that any Gaussian $N_d(m,\Sigma)$ on $\R^d$ (in the sense of the previous section) has this pushforward property. Using this definition for more general inner product spaces allows us to easily relate Gaussian measures to their Euclidean counterparts. 

    \begin{proposition}
        A Borel measure $\mu$ on $(V,\langle \cdot, \cdot \rangle_V)$, where $\mathrm{dim}(V) = d$, is Gaussian if and only if there is a Gaussian measure $\nu$ on $\R^d$ and a linear isometry $g:\R^d \to V$ such that $\mu = g_\# \nu$.
    \end{proposition}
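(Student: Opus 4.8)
The plan is to prove the two implications separately. The ``if'' direction is a short formal manipulation with pushforwards, while the ``only if'' direction reduces, after an isometric change of coordinates, to the classical Euclidean characterization of multivariate Gaussians, which is where the one genuinely analytic step sits.

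For the ``if'' direction, suppose $\mu = g_\#\nu$ with $\nu = N_d(m,\Sigma)$ Gaussian on $\R^d$ and $g \colon \R^d \to V$ a linear isometry. Given an arbitrary linear functional $f \colon V \to \R$, the composite $f \circ g \colon \R^d \to \R$ is linear, hence of the form $x \mapsto \langle a, x \rangle$ for some $a \in \R^d$. Functoriality of pushforward gives $f_\# \mu = (f \circ g)_\# \nu = \langle a, \cdot \rangle_\# N_d(m,\Sigma)$, and the latter is $N_1(\langle a,m\rangle,\, a^\top \Sigma a)$ --- the ``straightforward'' fact already invoked after Definition~\ref{def:multivariate_gaussian}, with the convention that a Dirac mass is a degenerate Gaussian when $a = 0$. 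Hence $f_\# \mu$ is Gaussian on $\R$ for every $f$, so $\mu$ is Gaussian in the sense of Definition~\ref{def:multivariate_gaussian}.

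For the ``only if'' direction, I would fix an orthonormal basis $e_1, \dots, e_d$ of $(V, \langle \cdot, \cdot \rangle_V)$ and take $h \colon V \to \R^d$, $h(v) = (\langle v, e_1 \rangle_V, \dots, \langle v, e_d \rangle_V)$, which is a linear isometry; setting $g := h^{-1}$ (again a linear isometry, as $\dim V = d$) and $\nu := h_\# \mu$, functoriality gives $g_\# \nu = (h^{-1} \circ h)_\# \mu = \mu$ automatically, so it suffices to show $\nu$ is Gaussian on $\R^d$. For any linear functional $\ell \colon \R^d \to \R$, the map $\ell \circ h \colon V \to \R$ is a linear functional on $V$, so $\ell_\# \nu = (\ell \circ h)_\# \mu$ is Gaussian on $\R$ by hypothesis; thus every one-dimensional linear marginal of $\nu$ is Gaussian. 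It then remains to invoke the classical converse: a Borel measure $\nu$ on $\R^d$ all of whose one-dimensional linear marginals are (possibly degenerate) Gaussians on $\R$ is of the form $N_d(m,\Sigma)$ for some $m \in \R^d$ and symmetric positive semidefinite $\Sigma$. I would prove this with characteristic functions --- taking $\ell = \langle e_j, \cdot \rangle$ shows each coordinate marginal has finite variance, so $\nu$ has a finite mean $m$ and finite covariance $\Sigma$; for fixed $a$ the Gaussian $\langle a, \cdot \rangle_\# \nu$ has mean $\langle a, m \rangle$ and variance $a^\top \Sigma a$, whence its characteristic function at $1$, which equals $\widehat\nu(a)$, is $\exp(i\langle a, m\rangle - \tfrac12 a^\top \Sigma a)$; since this agrees with the characteristic function of $N_d(m,\Sigma)$ on all of $\R^d$, uniqueness of characteristic functions forces $\nu = N_d(m,\Sigma)$.

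The main obstacle --- in fact the only step beyond bookkeeping --- is this last Euclidean converse, where one must leave the level of the definitions and argue via characteristic functions (or an equivalent moment computation); the rest is functoriality of pushforward plus one isometric change of coordinates. A minor loose end is degeneracy: either one follows \cite{bogachev1998gaussian} in treating Dirac masses as degenerate Gaussians throughout, so the statement holds verbatim, or one restricts to full-support measures and adds the observation that $\mu$ has full support in $V$ iff $\Sigma \in \mathsf{Sym}_d^+$ iff $\nu \in \gauss_d$.
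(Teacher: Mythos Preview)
Your proof is correct and follows essentially the same route as the paper: both directions are handled by composing with a fixed linear isometry and using functoriality of pushforwards to reduce to the Euclidean case. The only difference is that where the paper concludes ``$\nu$ must be Gaussian on $\R^d$'' by appealing to the Bogachev reference cited after Definition~\ref{def:multivariate_gaussian}, you actually supply the characteristic-function argument; this is extra detail rather than a different approach.
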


    \begin{proof}
        First assume that $\mu$ is Gaussian, let $h:V \to \R^d$ be an arbitrary linear isometry and set $\nu = h_\# \mu$. Then for any linear functional $f:\R^d \to \R$, we have
        $f_\#\nu = f_\# (h_\# \mu) = (f \circ h)_\# \mu$ and $f \circ h: V \to \R$ is a linear functional. It follows that $f_\# \nu$ is Gaussian on $\R$, so that $\nu$ must be Gaussian on $\R^d$. Setting $g = h^{-1}$, we have that $\mu = g_\# \nu$.

        Conversely, suppose that $\mu = g_\# \nu$ for a Gaussian $\nu$ on $\R^d$. Then for any linear functional $f:V \to \R$, we have that $f_\# \mu = f_\# g_\# \nu = (f \circ g)_\# \nu$ is a Gaussian on $\R$. It follows that $\mu$ is Gaussian on $V$.
    \end{proof}

    Given a Gaussian $\nu = N_d(m,\Sigma)$ on $\R^d$ and a linear isometry $g:\R^d \to V$, let $\mu = g_\# \nu$. We would like to define the mean and covariance of $\mu$ to be $g(m)$ and $g \Sigma g^{-1}$, respectively. Here, we consider $\Sigma$ as an operator $\R^d \to \R^d$. We first show that these quantities do not depend on the choice of $\nu$ or $g$. 

    \begin{proposition}\label{prop:covariance_well_defined}
        Let $\nu = N_d(m,\Sigma)$ and $\nu' = N_d(m',\Sigma')$ be Gaussians on $\R^d$ and let $g$ and $g'$ be linear isometries from $\R^d$ to $V$ such that $g_\# \nu = g'_\# \nu'$. Then $g(m) = g'(m')$ and $g \Sigma g^{-1} = g' \Sigma' (g')^{-1}$.
    \end{proposition}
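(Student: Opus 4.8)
The plan is to reduce the statement to two classical facts: that the mean and covariance of a Gaussian measure on $\R^d$ are uniquely determined by the measure, and that the pushforward of a Gaussian under a linear map transforms the parameters in the expected way. First I would set $\mu := g_\#\nu = g'_\#\nu'$ and form the composite $h := (g')^{-1}\circ g : \R^d \to \R^d$. Since $g$ and $g'$ are linear isometries (with respect to the Euclidean inner product on $\R^d$ and $\langle\cdot,\cdot\rangle_V$ on $V$), $h$ is a linear isometry of Euclidean space, i.e. an orthogonal transformation; write $h(x) = Ax$ with $A^{T} = A^{-1}$. Applying $(g')^{-1}_\#$ to both sides of $g_\#\nu = g'_\#\nu'$ then gives $h_\#\nu = \nu'$.

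Next I would identify $h_\#\nu$ explicitly. For a linear map $A:\R^d \to \R^d$ and $\nu = N_d(m,\Sigma)$, the pushforward $A_\#\nu$ is the Gaussian $N_d(Am, A\Sigma A^{T})$: it is Gaussian by Definition \ref{def:multivariate_gaussian}, since for any linear functional $f$ we have $f_\#(A_\#\nu) = (f\circ A)_\#\nu$, and $f\circ A$ is linear; and its mean and covariance are read off from the first and second moments. Hence $\nu' = h_\#\nu = N_d(Am,\, A\Sigma A^{-1})$, using $A^{T}=A^{-1}$. Comparing with $\nu' = N_d(m',\Sigma')$ and invoking uniqueness of Gaussian parameters, we get $m' = Am = h(m)$ and $\Sigma' = A\Sigma A^{-1} = h\Sigma h^{-1}$ (the last as an identity of operators on $\R^d$).

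Finally I would substitute back. For the means, $g'(m') = g'(h(m)) = g'\big((g')^{-1}(g(m))\big) = g(m)$. For the covariances, note $g'h = g'\circ(g')^{-1}\circ g = g$, hence also $(g'h)^{-1} = g^{-1}$, and therefore $g'\Sigma'(g')^{-1} = g'\,(h\Sigma h^{-1})\,(g')^{-1} = (g'h)\,\Sigma\,(g'h)^{-1} = g\Sigma g^{-1}$, which is exactly the claim.

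I expect the only genuinely substantive ingredient to be the transformation formula $A_\#N_d(m,\Sigma) = N_d(Am, A\Sigma A^{T})$ together with the uniqueness of $(m,\Sigma)$ given the measure; the rest is bookkeeping with the isometries. One point worth stating carefully is that the covariance is being treated as an operator $\R^d\to\R^d$, so that the conjugation $g\Sigma g^{-1}$ makes sense as an operator on $V$, and that for an \emph{orthogonal} $A$ the operator conjugation $A\Sigma A^{-1}$ agrees with the matrix expression $A\Sigma A^{T}$ — this coincidence is precisely what makes the construction well-defined on $V$, and it would fail for a non-isometric linear change of coordinates.
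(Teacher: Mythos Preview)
Your proof is correct, but it takes a genuinely different route from the paper's. The paper argues \emph{intrinsically}: it shows that $g(m)$ coincides with the vector-valued mean $\int_V v\,d\mu(v)$ and that $g\Sigma g^{-1}$ is characterized by the bilinear form $\langle g\Sigma g^{-1}u,v\rangle_V = \int_V \langle u,x-g(m)\rangle_V\langle v,x-g(m)\rangle_V\,d\mu(x)$, both computed via change of variables; since these integrals depend only on $\mu$, the conclusion follows. You instead reduce everything back to $\R^d$ via the orthogonal map $h=(g')^{-1}\circ g$, invoke the pushforward formula $A_\# N_d(m,\Sigma)=N_d(Am,A\Sigma A^{T})$ together with uniqueness of Gaussian parameters, and then conjugate back up to $V$. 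Your approach is slightly more elementary and keeps all computations in Euclidean space; the paper's approach has the advantage of simultaneously producing the intrinsic descriptions of mean and covariance operator on $V$ that motivate Definition~\ref{def:gaussian_inner_product}. Your closing remark about why orthogonality of $A$ is essential (so that $A\Sigma A^{T}=A\Sigma A^{-1}$ matches operator conjugation) is a nice observation that the paper leaves implicit.
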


    \begin{proof}
        We will show that the quantities $g(m)$ and $\widetilde{\Sigma} := g^{-1} \Sigma g$ are intrinsic to the measure $\mu := g_\# \nu = g'_\# \nu'$, whence the claims will follow. We have
        \[
        \int_V v \;d\mu(v) = \int_V v \;d(g_\# \nu) (v) = \int_{\R^d} g(w) \;d\nu(w) = g\left(\int_{\R^d} w \;d\nu(w) \right) = g(m),
        \]
        where the second equality is the change of variables formula and third follows by the assumption that $g$ is an isometry. Next, we have 
        \begin{align}
            \int_V \langle u, x-\tilde{m} \rangle_V \langle v, x-\tilde{m} \rangle_V \; d\mu(x) &= \int_{\R^d} \langle g(s), g(y) - g(m) \rangle_V \langle g(t), g(y) - g(m) \rangle_V \; d\nu(y) \label{eqn:invariance_proof_1} \\
            &= \int_{\R^d} \langle s, y - m \rangle \langle t, y - m \rangle \; d\nu(y) \label{eqn:invariance_proof_2} \\
            &= \langle \Sigma s, t \rangle \label{eqn:invariance_proof_3} \\
            &= \langle g^{-1} \widetilde{\Sigma} g g^{-1} u, g^{-1} v \rangle = \langle \widetilde{\Sigma}u, v \rangle_V. \nonumber
        \end{align}
        Equation \eqref{eqn:invariance_proof_1} is the change of variables formula with $s:= g^{-1}(u)$ and $t := g^{-1}(v)$, \eqref{eqn:invariance_proof_2} uses the fact that $g$ is a linear isometry, \eqref{eqn:invariance_proof_3} is \cite[Corollary 1.2.3]{bogachev1998gaussian}, and the remaining equalities follow by definition and the fact that $g^{-1}$ is an isometry. These identities give the desired intrinsic characterizations.
    \end{proof}

    As a corollary, we get that the following is a valid definition.

    \begin{definition}\label{def:gaussian_inner_product}
        Let $(V,\langle \cdot, \cdot \rangle_V)$ be an inner product space, $g:\R^d \to V$ a linear isometry and $\nu = N_d(m, \Sigma) \in \gauss_d$. The \emph{mean} of the Gaussian measure $g_\# \nu$ on $V$ is $g(m)$ and the \emph{covariance operator} is $g \Sigma g^{-1}$. 
    \end{definition}

    \paragraph{Gaussian Mixtures on Vector Bundles} Let $\mathcal{M}$ be a Riemannian manifold. When dealing with data valued in such a manifold, it is common to \emph{linearize} the analysis by choosing a basepoint $m \in \mathcal{M}$ and pulling the data back to the tangent space $T_m \mathcal{M}$ via the Riemannian log map; for example, this is a standard technique in statistical shape analysis \cite{srivastava2016functional, bharath2018radiologic} and computational optimal transport \cite{wang2013linear,chowdhury2020gromov}. From a modeling perspective, it is often useful to fit a distribution to the linearized data, resulting in a probability distribution on $T_m \mathcal{M}$. One can consider the resulting distribution as a highly singular probability measure on the tangent bundle $T\mathcal{M}$, in the sense that it is only supported on the fiber $T_m \mathcal{M} \subset T\mathcal{M}$. This is the perspective taken in~\cite{srivastava2016functional,WGOT}, where the authors consider Gaussian distributions on tangent spaces as models for ``wrapped Gaussians" on the underlying manifold (this terminology originates in the directional statistics literature---see \cite{collett1981discriminating,mardia2000directional}). Observe that the well definededness of this framework depends on technicalities such as the domain of the log map---this hints at the utility of considering parallelizable manifolds, as we do in the sequel.

    In this paper, we propose a data model which linearizes subsets of the data at various strategically chosen basepoints, $m_1,\ldots,m_K \in \mathcal{M}$. Fitting (weighted) distributions on each $T_{m_k} \mathcal{M}$ leads to a more general singular measure on $T\mathcal{M}$ whose support is contained in $\cup_k T_{m_k} \mathcal{M} \subset T\mathcal{M}$. Arguably the simplest such model involves fitting mean zero Gaussian distributions in each tangent space. 
    
    We now formalize the concepts described above. It will be convenient to work, more generally, in the setting of vector bundles. Let $p:\mathcal{E} \to \mathcal{M}$ be a rank-$d$ vector bundle over a smooth manifold $\mathcal{M}$; in what follows, we typically denote the vector bundle as $\cE \to \cM$, with the understanding that there is an underlying projection map that has been supressed from the notation. We denote the fiber over $m \in \mathcal{M}$ as $\mathcal{E}_m \approx \R^d$. Let $\langle \cdot, \cdot \rangle = \{\langle \cdot, \cdot \rangle_m\}_{m \in M}$ be a smoothly-varying family of inner products on the fibers $\mathcal{E}_m$. 
    
    \begin{definition}\label{def:gaussian_mixture_vector_bundle}
        A Borel measure $\eta$ on the vector bundle $\mathcal{E}$ is called a \emph{Gaussian measure} if it is a mean-zero Gaussian measure on the inner product space $(\mathcal{E}_m,\langle \cdot, \cdot \rangle_m)$, for some $m \in \cM$ (see Definition \ref{def:gaussian_inner_product}). If the covariance operator of $\eta$ is $\Sigma$, we write $\eta = N_\mathcal{E}(m,\Sigma)$. The collection of Gaussian measures on $\mathcal{E}$ is denoted $\mathsf{G}(\mathcal{E})$. 

        A Borel measure $\mu$ on $\mathcal{E}$ is called a 
        \emph{Gaussian mixture measure} (or just \emph{Gaussian mixture}) if it can be written as $\mu = \Sigma_{k=1}^K w_k \eta_k$, where each $\eta_k = N_{\mathcal{E}}(m_k,\Sigma_k)$ for some $m_k \in \mathcal{M}$, and where $\Sigma_{k=1}^K w_{k} = 1$. We denote the collection of all Gaussian mixtures on $\mathcal{E}$ as $\GM(\mathcal{E})$. 
    \end{definition}

    As an important example, consider a product bundle $\mathcal{E} = \mathcal{M} \times \R^d$, where $\mathcal{E}_m = \{m\} \times \R^d \approx \R^d$ is endowed with the standard inner product. Then a Gaussian mixture on $\mathcal{E}$ is simply a collection of mean-zero Gaussians indexed by a finite collection of points $\{m_1,\ldots,m_K\}$ in $\mathcal{M}$. In particular, the following result is immediate.

    \begin{proposition}\label{prop:gaussian_mixtures_correspondence}
        We have $\gauss_d \approx \gauss(\R^d \times \R^d)$ and $\GM_d \approx \GM(\R^d \times \R^d)$, as sets. To make this more precise, let $\eta = N_d(m,\Sigma)$ be a Gaussian measure on $\R^d$ and let $\overline{\eta}$ denote the measure when considered as a Gaussian measure on the trivial bundle $\mathcal{E} = \R^d \times \R^d$; that is, $\overline{\eta} = N_\mathcal{E}(m,\Sigma)$. The map $\eta \mapsto \overline{\eta}$ induces a bijective correspondence between Gaussian mixture measures $\GM_d$ on $\R^d$ (in the sense of Definition \ref{def:gaussian_mixture}) and Gaussian mixture measures $\GM(\mathcal{E})$ on $\mathcal{E}$ (in the sense of Definition \ref{def:gaussian_mixture_vector_bundle}). Explicitly, the bijection maps $\mu \in \GM_d$, written in minimal form as $\sum_k w_k \eta_k$, to $\sum_k w_k \overline{\eta}_k \in \GM(\mathcal{E})$. 
    \end{proposition}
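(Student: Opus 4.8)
The plan is to establish the single-Gaussian bijection $\gauss_d \approx \gauss(\R^d \times \R^d)$ first and then bootstrap to mixtures using the Euclidean identifiability result. Throughout, write $\mathcal{E} = \R^d \times \R^d$ with base manifold $\mathcal{M} = \R^d$ and fibers $\mathcal{E}_m = \{m\}\times\R^d$ carrying the standard inner product, and let $g_m:\R^d \to \mathcal{E}_m$, $g_m(v) = (m,v)$, which is a linear isometry; thus $\overline{\eta} = (g_m)_\# N_d(0,\Sigma)$ for $\eta = N_d(m,\Sigma)$, a Borel probability measure on $\mathcal{E}$ supported on the fiber $\mathcal{E}_m$. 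I would then observe that both $\gauss_d$ and $\gauss(\mathcal{E})$ are in canonical bijection with $\R^d \times \mathsf{Sym}_d^+$, compatibly with $\eta \mapsto \overline{\eta}$. On the Euclidean side this is the usual $N_d(m,\Sigma)\leftrightarrow(m,\Sigma)$. On the bundle side, every element of $\gauss(\mathcal{E})$ is $N_\mathcal{E}(m,\Sigma)$ for some $(m,\Sigma)$ by Definition \ref{def:gaussian_mixture_vector_bundle}, and conversely $(m,\Sigma)$ is recovered from the measure: its support lies in a unique fiber $\mathcal{E}_m$ (the fibers of the product bundle being disjoint), which determines $m$, and then $\Sigma$ is determined by Proposition \ref{prop:covariance_well_defined}. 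Since $\overline{N_d(m,\Sigma)} = N_\mathcal{E}(m,\Sigma)$, the map $\eta \mapsto \overline{\eta}$ is exactly the composite of these two parametrizations, hence bijective.

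For mixtures, define the map $\GM_d \to \GM(\mathcal{E})$ by sending $\mu$, written in minimal form $\sum_k w_k \eta_k$, to $\overline{\mu} := \sum_k w_k \overline{\eta}_k$. This is well defined: Proposition \ref{prop:identifiability} shows the minimal form is unique up to reordering, and $\sum_k w_k \overline{\eta}_k$ is insensitive to reordering; and it lies in $\GM(\mathcal{E})$ directly by Definition \ref{def:gaussian_mixture_vector_bundle}. Surjectivity is immediate: an arbitrary $\nu \in \GM(\mathcal{E})$ can be put in minimal form $\sum_j v_j \zeta_j$ by collecting repeated terms, with the $\zeta_j = N_\mathcal{E}(m_j,\Sigma_j)$ pairwise distinct; by the single-Gaussian bijection each $\zeta_j = \overline{\eta}_j$ for a unique $\eta_j = N_d(m_j,\Sigma_j) \in \gauss_d$, with the $\eta_j$ pairwise distinct, so $\mu := \sum_j v_j \eta_j \in \GM_d$ is in minimal form and maps to $\nu$.

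The main work, and the point I expect to be the obstacle, is injectivity of $\mu \mapsto \overline{\mu}$: a priori this is an identifiability statement for Gaussian mixtures on the bundle $\mathcal{E}$, which we have not proved directly. I would reduce it to Proposition \ref{prop:identifiability} by a \emph{restriction-to-fibers} argument. Suppose $\mu = \sum_k w_k \eta_k$ and $\mu' = \sum_\ell w'_\ell \eta'_\ell$ are in minimal form with $\overline{\mu} = \overline{\mu'}$. Each $\overline{\eta}_k$ is supported on its fiber $\{m_k\}\times\R^d$, and distinct fibers are disjoint, so for every $m \in \R^d$ the restricted finite measure $\overline{\mu}\big(\,\cdot\,\cap(\{m\}\times\R^d)\big)$ equals $\sum_{k:\, m_k = m} w_k \overline{\eta}_k$; this restriction, and the finite set $\{m : \overline{\mu}(\{m\}\times\R^d) > 0\} = \{m_k\}_k$, are intrinsic to $\overline{\mu}$ and hence also computed from $\mu'$. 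So the two representations visit the same basepoints, with common total fiber mass $W_m := \sum_{k:\,m_k = m} w_k = \sum_{\ell:\,m'_\ell = m} w'_\ell > 0$ at each. Transporting the fiberwise equality $\sum_{k:\,m_k = m} w_k \overline{\eta}_k = \sum_{\ell:\,m'_\ell = m} w'_\ell \overline{\eta}'_\ell$ back to $\R^d$ through $g_m^{-1}$ and dividing by $W_m$ yields an equality of genuine Gaussian mixtures on $\R^d$, both in minimal form — the relevant covariances are distinct within a fiber because the corresponding $\eta_k$ there share the mean $m$ and the $\eta_k$ are pairwise distinct. Proposition \ref{prop:identifiability} then matches the $(w_k,\Sigma_k)$ with the $(w'_\ell,\Sigma'_\ell)$ over that fiber, and taking the union of these fiberwise bijections over all basepoints $m$ produces a single permutation $\sigma$ with $w_k = w'_{\sigma(k)}$ and $\eta_k = \eta'_{\sigma(k)}$, i.e. $\mu = \mu'$. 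The only delicate point is the bookkeeping — checking that the fiberwise decomposition of $\overline{\mu}$ is genuinely canonical and that the per-fiber permutations assemble into one global permutation — which is routine once disjointness of the fibers is in hand.
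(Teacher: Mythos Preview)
Your argument is correct. The paper, however, gives no proof of this proposition: it simply declares the result ``immediate'' from the definitions and moves on. Your write-up is therefore substantially more detailed than anything in the paper.

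It is worth pointing out that your restriction-to-fibers argument for injectivity is precisely the technique the paper deploys \emph{immediately afterward} to prove Proposition~\ref{prop:identifiability_vector_bundles} (identifiability of Gaussian mixtures on a general vector bundle): restrict to a single fiber, normalize, and invoke the Euclidean identifiability Proposition~\ref{prop:identifiability}. Had that proposition been stated first, your injectivity step would reduce to a one-line citation; conversely, your argument here already contains the essential content of that later proof in the special case $\mathcal{E} = \R^d \times \R^d$. So the two approaches are not really different---you have simply filled in what the paper left implicit, anticipating its next result in the process.
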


    We now extend the discussion of identifiability of Gaussian mixture measures (see Proposition \ref{prop:identifiability}) to the setting of vector bundles. In analogy with the Euclidean setting, if a Gaussian mixture measure $\mu$ on $\mathcal{E}$ is written as $\mu = \sum_k w_k \eta_k$, we say that the representation is in \emph{minimal form} if the measures $\eta_k$ are pairwise distinct. 

    \begin{proposition}\label{prop:identifiability_vector_bundles}
        Let $\mu \in \GM(\mathcal{E})$ be a Gaussian mixture on a vector bundle $\mathcal{E}$ with minimal form representations
        \[
        \sum_{k=1}^K w_k \eta_k \quad \mbox{ and } \quad \sum_{k=1}^{K'} w_k' \eta_k'.
        \]
        Then $K = K'$ and there exists a permutation $\sigma$ of $\{1,\ldots,K\}$ such that $w_k = w_{\sigma(k)}'$ and $\eta_k = \eta_{\sigma(k)}'$ for all $k$. 
    \end{proposition}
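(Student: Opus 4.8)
The plan is to reduce everything to the Euclidean identifiability statement, Proposition \ref{prop:identifiability}, by decomposing a Gaussian mixture on $\mathcal{E}$ according to the fibers on which its components are supported.

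First I would record the key structural observation: each Gaussian measure $N_\mathcal{E}(m,\Sigma)$, regarded as a Borel measure on $\mathcal{E}$, is supported on the single fiber $\mathcal{E}_m = p^{-1}(m)$, and the fibers $\{\mathcal{E}_m\}_{m \in \mathcal{M}}$ are pairwise disjoint (being the point-preimages of the projection $p$). Consequently, given a minimal-form representation $\mu = \sum_{k=1}^K w_k \eta_k$ with $\eta_k = N_\mathcal{E}(m_k, \Sigma_k)$, I group the indices by basepoint: letting $B = \{m_k : 1 \le k \le K\}$, and for $m \in B$ setting $\mu^m := \sum_{k : m_k = m} w_k \eta_k$ and $w^m := \sum_{k : m_k = m} w_k$, we have $\mu = \sum_{m \in B} \mu^m$ with each $\mu^m$ supported on $\mathcal{E}_m$. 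Because the fibers are disjoint, $\mu^m$ is exactly the restriction $\mu|_{\mathcal{E}_m}$; in particular $B$ (the atoms of the pushforward $p_\#\mu$), the weights $w^m$, and the measures $\mu^m$ depend only on $\mu$ and not on the chosen representation. Here I use that minimal form implicitly requires $w_k>0$, so that no basepoint is "invisible"; this matches the convention already needed for Proposition \ref{prop:identifiability}.

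Next, for each fixed $m \in B$, I trivialize the fiber by choosing a linear isometry $g_m : \R^d \to \mathcal{E}_m$ and transport the fiberwise piece to $\R^d$. The measure $\frac{1}{w^m}\mu^m$ is a probability measure, and $(g_m^{-1})_\#\bigl(\frac{1}{w^m}\mu^m\bigr) = \sum_{k : m_k = m} \frac{w_k}{w^m}\,(g_m^{-1})_\#\eta_k$, where each $(g_m^{-1})_\#\eta_k$ is a genuine Euclidean Gaussian $N_d(0,\, g_m^{-1}\Sigma_k g_m)$ by Definition \ref{def:gaussian_inner_product} together with Proposition \ref{prop:covariance_well_defined}, which guarantees that this translation is well posed. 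Since $g_m$ is a bijection, two components with basepoint $m$ satisfy $\eta_k = \eta_{k'}$ if and only if their pushforwards coincide, so this is a minimal-form Gaussian-mixture representation of $\frac{1}{w^m}\mu^m$ on $\R^d$. Applying Proposition \ref{prop:identifiability} to the two representations of $\mu$ restricted to $\mathcal{E}_m$ and pushed to $\R^d$, I conclude they have the same number of terms and agree up to a permutation carrying weights to weights and Gaussians to Gaussians; pulling back along the bijection $g_m$ transfers this to the two sub-representations of $\mu^m$ itself.

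Finally I would assemble the pieces: the two given minimal-form representations of $\mu$ decompose, fiber by fiber over the common set $B$, into sub-representations that match within each fiber, so stitching the fiberwise bijections together produces a single permutation $\sigma$ with $K = K'$, $w_k = w'_{\sigma(k)}$, and $\eta_k = \eta'_{\sigma(k)}$ for all $k$. The only real content is the first step—recognizing that a Gaussian on $\mathcal{E}$ is concentrated on one fiber and that the fiberwise pieces of $\mu$ are intrinsic; once that is established, the remainder is a routine application of the Euclidean case. I therefore do not anticipate a serious obstacle, beyond phrasing the positivity-of-weights convention cleanly.
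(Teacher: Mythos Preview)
Your proposal is correct and follows essentially the same approach as the paper's proof: restrict $\mu$ to each fiber $\mathcal{E}_m$ in its support, transfer the fiberwise piece to $\R^d$ via a linear isometry, apply the Euclidean identifiability result (Proposition~\ref{prop:identifiability}), and assemble the resulting fiberwise permutations into a global one. You are slightly more explicit than the paper about why the fiberwise pieces and the set of basepoints are intrinsic to $\mu$ and why minimal form is preserved under the isometry, but the strategy is identical.
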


    \begin{proof}
        Because $\mu$ is a Gaussian mixture measure, its support is of the form $\cup_j^J \mathcal{E}_{m_j}$ for some pairwise distinct points $m_j \in \mathcal{M}$. Fix $m = m_j$ and consider the restriction of $\mu|_{\mathcal{E}_m}$ of $\mu$ to $\mathcal{E}_m$. There are some subcollections $\{\eta_{k_i}\}_{i=1}^{I}$ and $\{\eta_{k_i}'\}_{i=1}^{I'}$ of measures which are Gaussians on $\mathcal{E}_{m}$. Let us assume without loss of generality that $(\mathcal{E}_m, \langle \cdot, \cdot \rangle_m) = (\R^d,\langle \cdot, \cdot \rangle)$ (the latter endowed with the standard inner product)---the two inner product spaces are isometric, so this assumption can be made without loss of generality, allowing us to suppress the isometry from the notation. Then the measures
        \[
        \sum_{i=1}^I \left(\frac{w_{k_i}}{\mu(\mathcal{E}_m)}\right) \eta_{k_i} \quad \mbox{ and } \quad \sum_{i=1}^{I'} \left(\frac{w_{k_i}'}{\mu(\mathcal{E}_m)}\right) \eta_{k_i}'
        \]
        are representations of the Gaussian mixture measure $\frac{1}{\mu(\mathcal{E}_m)}\mu|_{\mathcal{E}_m}$ on $\R^d$ which are in minimal form. It follows from Proposition \ref{prop:identifiability} that $I = I'$ and that the $w_{k_i}$ and $\eta_{k_i}$ agree with the $w_{k_i}'$ and $\eta_{k_i}'$ up to a permutation of $\{1,\ldots,I\}$. Running the same argument on each $m_j$ completes the proof.
    \end{proof}

    We also have the following immediate corollary, which will be useful later on.
    
    \begin{corollary}\label{cor:identifiability_vector_bundles}
        Let $\mu \in \GM(\cE)$ with not necessarily minimal form representations $\sum_{j=1}^K w_j \eta_j$ and $\sum_{k=1}^{K'} w_k' \eta_k'$. Then for every $j \in \{1,\ldots,K\}$, there exists $k \in \{1,\ldots,K'\}$ such that $\eta_k' = \eta_j$. 
    \end{corollary}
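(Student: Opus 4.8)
The plan is to deduce this from the identifiability statement for minimal-form representations, Proposition~\ref{prop:identifiability_vector_bundles}, after first collapsing the two given (possibly non-minimal) representations of $\mu$ into minimal ones.

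First I would reduce $\sum_{j=1}^K w_j \eta_j$ to minimal form by merging repeated Gaussian components. For each Gaussian $\zeta$ occurring among $\eta_1,\ldots,\eta_K$, set $J_\zeta := \{\, j : \eta_j = \zeta \,\}$ and $W_\zeta := \sum_{j \in J_\zeta} w_j$. Since the weights in a representation are positive, each $W_\zeta > 0$, and $\sum_\zeta W_\zeta = \sum_{j=1}^K w_j = 1$; moreover $\mu = \sum_\zeta W_\zeta \zeta$ with the $\zeta$ pairwise distinct, so this is a minimal-form representation of $\mu$. Applying the same merging procedure to $\sum_{k=1}^{K'} w_k' \eta_k'$ yields a second minimal-form representation $\mu = \sum_{\zeta'} W_{\zeta'}' \zeta'$, whose set of distinct Gaussian components I will denote by $S' \subseteq \gauss(\cE)$.

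Next I would invoke Proposition~\ref{prop:identifiability_vector_bundles}: the two minimal-form representations of the same measure $\mu$ coincide up to a permutation of their terms. In particular the set of distinct Gaussians in the first minimal form equals $S'$. Now fix $j \in \{1,\ldots,K\}$. By construction $\eta_j$ is one of the components $\zeta$ of the first minimal form, hence $\eta_j \in S'$, hence $\eta_j = \zeta'$ for some component of the second minimal form; and any index $k \in J_{\zeta'}$ of the primed representation satisfies $\eta_k' = \zeta' = \eta_j$, which is the claim.

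I do not anticipate any real obstacle: the argument is essentially bookkeeping on top of Proposition~\ref{prop:identifiability_vector_bundles}. The only points that deserve a moment's care are verifying that the merged weights $W_\zeta$ are positive and sum to $1$, so that the collapsed expression is genuinely a minimal-form representation in the sense required by Proposition~\ref{prop:identifiability_vector_bundles}, and noting the convention that the mixture weights in a representation are strictly positive—without this, a term carrying weight zero could legitimately fail to appear in another representation, and the corollary would be false as stated.
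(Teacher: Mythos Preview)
Your argument is correct and is exactly the natural expansion of what the paper intends: the corollary is stated there as ``immediate'' with no proof given, and the obvious route is precisely the one you take---collapse each representation to minimal form by merging repeated Gaussians and then invoke Proposition~\ref{prop:identifiability_vector_bundles} to identify the two resulting sets of distinct components. Your caveat about strict positivity of the weights is well-taken; the paper's definitions allow $w_k \geq 0$, but in the one place the corollary is used (the proof of Lemma~\ref{lem:gaussian_mixture_vector_bundles}) the weights are explicitly assumed positive, so the issue does not arise in practice.
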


    \paragraph{Gaussian Mixtures on Trivial Bundles}\label{sub:trivial_bundles} From now on, we restrict our attention to the especially simple case of \emph{trivial} vector bundles; we recall the definition here. Vector bundles $\mathcal{E} \to \cM$ and $\cE' \to \cM$ over the same base space are called \emph{isomorphic} if there exists a diffeomorphism $\varphi:\mathcal{E} \to \cE'$ such that the diagram
    \[\begin{tikzcd}
    	\mathcal{E} && {\cE'} \\
    	& \cM
    	\arrow[from=1-1, to=2-2]
    	\arrow[from=1-3, to=2-2]
    	\arrow["\varphi", from=1-1, to=1-3]
    \end{tikzcd}\]
    commutes, where the diagonal arrows are vector bundle projections, and such that the induced maps on fibers $\varphi_m:=\varphi|_{\mathcal{E}_m}$ are linear isomorphisms $\mathcal{E}_m \to \cE'_m$ for each $m \in \mathcal{M}$. The map $\varphi$ is called a \emph{bundle isomorphism}. If $\cE$ and $\cE'$ are both endowed with smoothly-varying inner products and each $\varphi_m$ is an isometry of inner product spaces, then we say that $\varphi$ is a \emph{bundle isometry}. We now consider rank-$d$ vector bundles $\cE \to \cM$ which are isomorphic to the product bundle $\cM \times \R^d$; such bundles are called \emph{trivial}.  In this case, an isomorphism $\varphi:\cE \to \cM \times \R^d$ is called a \emph{trivialization} of $\mathcal{E}$. We will use the following basic result, which says that we can assume without loss of generality that trivializations are bundle isometries with respect to the standard structure on $\cM \times \R^d$.
    
    \begin{proposition}
        If $\mathcal{E} \to \cM$ is a rank-$d$ trivial bundle endowed with a smoothly-varying inner product $\{\langle \cdot, \cdot\rangle_m\}_{m \in \cM}$, we can choose a trivialization  which is a bundle isometry with respect to the standard inner product on $\R^d \approx \{m\} \times \R^d$.
    \end{proposition}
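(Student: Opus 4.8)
The plan is to begin with an arbitrary trivialization and to correct it fiberwise by a symmetric positive-definite square root, turning it into an isometry. First I would fix any trivialization $\varphi_0 : \mathcal{E} \to \cM \times \R^d$, which exists since $\mathcal{E}$ is assumed trivial, and write $\varphi_{0,m} := \varphi_0|_{\mathcal{E}_m} : \mathcal{E}_m \to \R^d$ for the induced linear isomorphisms. Transporting the given inner products through $\varphi_0$ produces, for each $m$, an inner product $h_m$ on $\R^d \approx \{m\}\times\R^d$ defined by $h_m(x,y) := \langle \varphi_{0,m}^{-1}x, \varphi_{0,m}^{-1}y\rangle_m$; writing $h_m(x,y) = \langle A_m x, y\rangle$ in terms of the standard inner product $\langle\cdot,\cdot\rangle$ on $\R^d$ defines a map $m \mapsto A_m \in \mathsf{Sym}_d^+$. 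This map is smooth: in terms of the standard basis $e_1,\ldots,e_d$, the entry $(A_m)_{ij}$ equals $\langle \varphi_0^{-1}(m,e_i), \varphi_0^{-1}(m,e_j)\rangle_m$, which varies smoothly in $m$ because $\varphi_0^{-1}$ is smooth and the family $\{\langle\cdot,\cdot\rangle_m\}$ varies smoothly.

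Next I would define the corrected trivialization $\varphi := \Psi \circ \varphi_0$, where $\Psi : \cM\times\R^d \to \cM\times\R^d$ is the fiberwise-linear map $\Psi(m,x) := (m, A_m^{1/2}x)$ and $A_m^{1/2} \in \mathsf{Sym}_d^+$ is the unique symmetric positive-definite square root of $A_m$. Granting that $m \mapsto A_m^{1/2}$ is smooth (see below), $\Psi$ is a diffeomorphism with inverse $(m,x) \mapsto (m, A_m^{-1/2}x)$, it covers the identity on $\cM$, and it restricts to a linear isomorphism on each fiber; hence $\varphi$ is again a trivialization of $\mathcal{E}$. To see that $\varphi$ is a bundle isometry, take $v,w \in \mathcal{E}_m$ and compute $\langle \varphi_m v, \varphi_m w\rangle = \langle A_m^{1/2}\varphi_{0,m}v, A_m^{1/2}\varphi_{0,m}w\rangle = \langle A_m \varphi_{0,m}v, \varphi_{0,m}w\rangle = h_m(\varphi_{0,m}v, \varphi_{0,m}w) = \langle v,w\rangle_m$, using the symmetry of $A_m^{1/2}$, the identity $(A_m^{1/2})^2 = A_m$, and the definitions of $A_m$ and $h_m$.

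The one step that genuinely needs justification — and the main obstacle — is the smoothness of the matrix square-root map $\mathsf{Sym}_d^+ \to \mathsf{Sym}_d^+$, $A \mapsto A^{1/2}$, whose composition with the smooth map $m \mapsto A_m$ yields the smoothness of $m \mapsto A_m^{1/2}$ used above. For this I would apply the inverse function theorem to the squaring map $S : \mathsf{Sym}_d^+ \to \mathsf{Sym}_d^+$, $S(X) = X^2$: its derivative at $X$ is the linear map $H \mapsto XH + HX$, which is invertible whenever $X \in \mathsf{Sym}_d^+$ (in an orthonormal eigenbasis of $X$ with eigenvalues $\lambda_i > 0$ it is diagonal with entries $\lambda_i + \lambda_j > 0$). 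Hence $S$ is a local diffeomorphism near every point, and since it is also a bijection onto $\mathsf{Sym}_d^+$ by uniqueness of the positive-definite square root, its global inverse $A \mapsto A^{1/2}$ is smooth. This completes the construction; everything else is routine bookkeeping about compositions of smooth and fiberwise-linear maps.
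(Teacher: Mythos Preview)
Your proof is correct and follows essentially the same strategy as the paper: start with an arbitrary trivialization, push the given inner products forward to $\cM\times\R^d$, and then post-compose with a fiberwise correction that sends the pushed-forward inner product to the standard one. The only difference is in how that correction is built: the paper simply asserts that one can choose a smoothly-varying orthonormal basis for each fiber (implicitly Gram--Schmidt), whereas you construct the correction explicitly as $x\mapsto A_m^{1/2}x$ and verify smoothness via the inverse function theorem --- a more concrete and more fully justified version of the same step.
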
 

    \begin{proof}
        First observe that any smooth inner product on $\cM \times \R^d$ is bundle isometric to the standard one. Indeed, this is achieved by choosing a smoothly-varying orthonormal basis (with respect to the arbitrary inner product) for each fiber $\{m\} \times \R^d$, then defining the bundle isomorphism by sending this to the standard orthonormal basis. Now, for an arbitrary trivialization $\varphi:\cE \to \cM \times \R^d$, define an inner product on $\cM \times \R^d$ by pulling back each $\langle \cdot, \cdot\rangle_m$ via $\varphi^{-1}$. Choose a bundle isometry $\psi:\cM \times \R^d \to \cM \times \R^d$ sending this pullback family of inner products to the standard one. Then $\psi \circ \varphi:\cE \to \cM \times \R^d$ is a bundle isometry.
    \end{proof}
    
    We justify our interest in the class of trivial vector bundles by the following remarks, elaborating on the discussion in the introduction.

    \begin{remark}
        \begin{enumerate}
            \item  From an applications-oriented perspective, we are especially interested in \emph{parallelizable Riemannian manifolds}. That is, the case where $\mathcal{M}$ is a Riemannian manifold and the trivial vector bundle $\mathcal{E}$ is the tangent bundle $T\mathcal{M}$. For example, any 3-dimensionsonal manifold is parallelizable \cite{benedetti2019framing}, as is any Lie group \cite{lee2012smooth}.

            Although we frequently work with manifolds $\mathcal{M}$ which are not parallelizable (e.g., spheres $\s^n$ with $n \not \in \{0,1,3,7\}$---see \cite{bott1958parallelizability}), it is typically the case in realistic data analysis applications that manifold-valued data lies in a subset $\widetilde{\mathcal{M}} \subset \mathcal{M}$ which is parallelizable, so that we may assume parallelizability without loss of generality. In the setting of the sphere $M = \s^n$, any proper open subset $\widetilde{\mathcal{M}} \subset M$ is parallelizable---indeed, a proper open subset must miss a point $p \in \mathcal{M}$ and $\widetilde{\mathcal{M}} = \mathcal{M} \setminus \{p\}$ is parallelizable, and any open submanifold of a parallelizable manifold is also parallelizable. 

            Thus, it frequently suffices to consider trivial vector bundles when focused on applications.

            \item In \cite{WGOT}, a Wasserstein distance is defined between Gaussian measures on the tangent bundle $T\cM \to \cM$ of a Riemannian manifold $\cM$. This is done with respect to an extended metric (i.e., a metric-like function which is allowed to take the value $\infty$) on $T\cM$, which is a true (finite) metric only in the case that $\cM$ has empty cut locus. This implies that $\cM$ is diffeomorphic to Euclidean space, and, in particular, that $T\cM$ is a trivial vector bundle. Our setting is therefore a generalization, in a formal sense, as it includes tangent bundles where the base manifold can have nontrivial topology (e.g., $\s^1$). In practice, the present setting and that of \cite{WGOT} are equivalent in the case of Gaussian measures on tangent bundles, but we further generalize to the novel case of Gaussian mixtures.
            
            \item As shown in Proposition \ref{prop:gaussian_mixtures_correspondence}, Gaussian mixtures on $\R^d$ (in the classical sense) correspond to Gaussian mixtures on the trivial vector bundle $\R^d \times \R^d$. The setting of trivial bundles is therefore a natural generalization of the classical setting.
        \end{enumerate}
    \end{remark}

    The next result shows that the property of a measure being a Gaussian mixture is well-defined on bundle isometry classes.
    
    \begin{proposition}\label{prop:gaussians_bundle_isomorphisms}
        Let $\cE,\cE' \to \cM$ be vector bundles endowed with inner products. If $\varphi:\cE \to \cE'$ is a bundle isometry then a measure $\mu$ on $\cE$ is a Gaussian mixture if and only if its pushforward $\varphi_\# \mu$ is a Gaussian mixture on $\cE'$. 

        Explicitly, if $\mu = \sum_{k=1}^K w_k \eta_k$ with $\eta_k = N_\cE(m_k,\Sigma_k)$, then  
        $
        \varphi_\# \mu = \sum_{k=1}^K w_k \varphi_\# \eta_k,
        $
        where $\varphi_\# \eta_k = N_{\cE'}(\varphi(m_k),\varphi_{m_k} \Sigma_k \varphi_{m_k}^{-1})$.
    \end{proposition}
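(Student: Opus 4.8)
The plan is to reduce the statement to the case of a single Gaussian measure and then exploit the fact that pushforward commutes with finite convex combinations of measures. Concretely, the core step is the assertion: if $\eta = N_\cE(m,\Sigma)$ is a Gaussian measure on $\cE$ and $\varphi:\cE \to \cE'$ is a bundle isometry, then $\varphi_\# \eta$ is a Gaussian measure on $\cE'$, with $\varphi_\# \eta = N_{\cE'}(\varphi(m),\, \varphi_m \Sigma \varphi_m^{-1})$. Once this is available, writing $\mu = \sum_{k=1}^K w_k \eta_k$ with $\eta_k = N_\cE(m_k,\Sigma_k)$ gives $\varphi_\# \mu = \sum_{k=1}^K w_k\, \varphi_\# \eta_k$ by linearity of the pushforward, and each summand is Gaussian on $\cE'$ by the core step while the weights are unchanged and still sum to one; hence $\varphi_\# \mu \in \GM(\cE')$ and the explicit formula follows. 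The converse implication is then free: $\varphi^{-1}:\cE' \to \cE$ is again a bundle isometry, so if $\varphi_\# \mu \in \GM(\cE')$ then $\mu = \varphi^{-1}_\#(\varphi_\# \mu) \in \GM(\cE)$ by applying the same reasoning to $\varphi^{-1}$.

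To prove the core step, I would first note that, by definition of a bundle isometry, $\varphi$ restricts to a linear isometry $\varphi_m$ of the fiber $\cE_m$ onto the fiber $\cE'_{\varphi(m)}$; since $\eta$ is by definition supported on $\cE_m$, its pushforward $\varphi_\# \eta$ coincides with $(\varphi_m)_\# \eta$ and is supported on $\cE'_{\varphi(m)}$. That $(\varphi_m)_\# \eta$ is Gaussian on this fiber in the sense of Definition~\ref{def:multivariate_gaussian} follows by checking linear functionals: for any linear $f:\cE'_{\varphi(m)} \to \R$ we have $f_\#(\varphi_m)_\# \eta = (f\circ\varphi_m)_\# \eta$, and $f \circ \varphi_m$ is a linear functional on $\cE_m$, so this pushforward is Gaussian on $\R$ because $\eta$ is Gaussian on $\cE_m$. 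For the mean and covariance operator I would use the representation afforded by Definition~\ref{def:gaussian_inner_product}: choose a linear isometry $g:\R^d \to \cE_m$ and $\nu = N_d(0,\Sigma_0) \in \gauss_d$ with $\eta = g_\# \nu$ and covariance operator $\Sigma = g\Sigma_0 g^{-1}$. Then $\varphi_\# \eta = (\varphi_m \circ g)_\# \nu$, and $\varphi_m \circ g:\R^d \to \cE'_{\varphi(m)}$ is again a linear isometry, so Definition~\ref{def:gaussian_inner_product} identifies $\varphi_\# \eta$ as the mean-zero Gaussian on $\cE'_{\varphi(m)}$ with covariance operator $(\varphi_m g)\Sigma_0(\varphi_m g)^{-1} = \varphi_m(g\Sigma_0 g^{-1})\varphi_m^{-1} = \varphi_m \Sigma \varphi_m^{-1}$; Proposition~\ref{prop:covariance_well_defined} guarantees this is independent of the auxiliary choices of $g$ and $\nu$.

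The genuinely delicate part is this core step, and within it the bookkeeping around the coordinate-free mean and covariance of Definition~\ref{def:gaussian_inner_product}: one must take care to use the composite isometry $\varphi_m \circ g$ consistently as the defining isometry for $\varphi_\# \eta$, and to perform the conjugation $\Sigma \mapsto \varphi_m \Sigma \varphi_m^{-1}$ between the correct fibers. Everything else --- linearity of pushforward over finite mixtures, preservation of the weights, and re-running the argument with $\varphi^{-1}$ for the reverse direction --- is routine.
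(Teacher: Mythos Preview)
Your proposal is correct and follows essentially the same approach as the paper: linearity of pushforward for the mixture, verification that each $\varphi_\#\eta_k$ is Gaussian via the linear-functional criterion, identification of the covariance operator through the composite isometry $\varphi_{m_k}\circ g$, and the converse by applying the argument to $\varphi^{-1}$. Your write-up is slightly more explicit (e.g., noting that $\varphi_\#\eta$ equals $(\varphi_m)_\#\eta$ because $\eta$ is supported on a single fiber), but the argument is the same.
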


    \begin{proof}
        We have
        \[
        \varphi_\# \mu = \sum_k w_k \varphi_\# \eta_k,
        \]
        by the linearity of pushforwards. Moreover, each $\varphi_\# \eta_k$ is a Gaussian on $\cE'$ with mean $\varphi(m_k)$, since for any linear functional $f:\cE'_{m_k} \to \R$, we have
        \[
        f_\# \big(\varphi_\# \eta_k\big) = \big(f \circ \varphi_{m_k}\big)_\# \eta_k,
        \]
        and $f \circ \varphi_{m_k}$ is a linear functional on $\mathcal{E}_{m_k}$, so the result must be a Gaussian on $\R$. It remains to derive the formula for the covariance operator of the pushforward. Choose a linear isometry $g:\R^d \to \cE_{m_k}$ and a covariance operator $\widetilde{\Sigma} \in \mathsf{Sym}_d^+$ such that $\Sigma= g \widetilde{\Sigma} g^{-1}$ (see Proposition \ref{prop:covariance_well_defined} and Definition \ref{def:gaussian_inner_product}). Then $\varphi_{m_k} \circ g :\R^d \to \cE'_{\varphi(m_k)}$ is a linear isometry taking the covariance operator $\widetilde{\Sigma}$ to 
        \[
        \big(\varphi_{m_k} \circ g\big) \widetilde{\Sigma} \big(\varphi_{m_k} \circ g\big)^{-1} = \varphi_{m_k} \Sigma \varphi_{m_k}^{-1}.
        \]

        The converse follows by the same argument, using $\varphi^{-1}$ in place of $\varphi$.
    \end{proof}

    \subsection{Distances Between Gaussian Mixtures on Trivial Bundles}\label{sec:distance_gaussian_mixtures_trivial}

    We now extend Theorem \ref{thm:Gaussian_mixture_metrics} to the setting of Gaussian mixtures on trivial vector bundles.
    
    \paragraph{Distance Between Gaussians on Trivial Bundles} Next, we characterize distances between Gaussian measures on a trivial bundle with respect to a family of metrics. In the following, suppose that we have some fixed metric $d_\cM$ on our base manifold $\cM$ (e.g., geodesic distance with respect to a Riemannian metric). 
    
    For a product bundle $\cM \times \R^d \to \cM$, let $d_{\cM \times \R^d}:(\cM \times \R^d) \times (\cM \times \R^d) \to \R$ be the $\ell_2$-product metric with respect to $d_\cM$ and Euclidean distance. That is,
    \[
    d_{\cM \times \R^d}((m_0,v_0),(m_1,v_1))^2 = d_\cM(m_0,m_1)^2 + \|v_0 - v_1\|^2.
    \]
    Let $W_2^{\cM \times \R^d}$ denote the associated 2-Wasserstein distance. We will use the following function, which is clearly a smooth bijection.
    
    \begin{definition}\label{def:coordinate_permutation_map}
        The \emph{coordinate permutation map} is
    \begin{align*}
        \sigma:(\cM \times \cM) \times (\R^d \times \R^d) &\to (\cM \times \R^d) \times (\cM \times \R^d) \\
        ((p_0,p_1),(v_0,v_1)) &\mapsto ((p_0,v_0),(p_1,v_1)).
    \end{align*}
    \end{definition}

    \begin{lemma}\label{lem:gaussian_distance_product}
        Let $\eta_i = N_{\cM \times \R^d}(m_i,\Sigma_i)$ be Gaussian measures on the product bundle $\cM \times \R^d$ for $i \in \{0,1\}$. Then
        \[
        W_2^{\cM \times \R^d}(\eta_0,\eta_1) = d_\cM(m_0,m_1)^2 + \mathrm{tr}\big(\Sigma_0 + \Sigma_1 - 2 \big(\Sigma_0^{\frac{1}{2}} \Sigma_1 \Sigma_0^{\frac{1}{2}}\big)^\frac{1}{2} \big).
        \]
        Moreover, there is an optimal coupling of the form $\sigma_\# \overline{\pi}$ for some $\overline{\pi} \in \gauss((\cM \times \cM) \times (\R^d \times \R^d))$, where $\sigma$ is the coordinate permuatation map.
    \end{lemma}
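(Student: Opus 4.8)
The plan is to reduce everything to the Euclidean Gaussian case (Proposition~\ref{prop:distance_between_gaussians}) by using the fact that a Gaussian measure on the product bundle $\cM \times \R^d$, in the sense of Definition~\ref{def:gaussian_mixture_vector_bundle}, is concentrated on a single fiber. Concretely, after identifying the fiber $\{m_i\}\times\R^d$ with $\R^d$ via the canonical isometry, $\eta_i = N_{\cM\times\R^d}(m_i,\Sigma_i)$ is the product measure $\delta_{m_i}\otimes \nu_i$ on $\cM\times\R^d$, where $\nu_i = N_d(0,\Sigma_i) \in \gauss_d$ is the mean-zero Euclidean Gaussian with covariance $\Sigma_i$ (here $\Sigma_i$, a covariance operator on the fiber with the standard inner product, is literally an element of $\mathsf{Sym}_d^+$). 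In particular $\int d_{\cM\times\R^d}((m_0,0),(p,v))^2\,d\eta_i = d_\cM(m_0,m_i)^2 + \mathrm{tr}(\Sigma_i) < \infty$, so $\eta_i$ lies in the relevant Wasserstein space.

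Next I would describe the structure of an arbitrary coupling $\gamma \in \Pi(\eta_0,\eta_1)$. Pushing $\gamma$ forward along $((p_0,v_0),(p_1,v_1)) \mapsto (p_0,p_1)$ gives a coupling of the $\cM$-marginals $\delta_{m_0}$ and $\delta_{m_1}$, and the only such coupling is $\delta_{(m_0,m_1)}$. Hence $\gamma$ is concentrated on $\{(p_0,p_1)=(m_0,m_1)\}$, so, with $\sigma$ the coordinate permutation map of Definition~\ref{def:coordinate_permutation_map}, we can write $\gamma = \sigma_\#\big(\delta_{(m_0,m_1)} \otimes \widetilde\gamma\big)$ for a uniquely determined probability measure $\widetilde\gamma$ on $\R^d\times\R^d$; taking the two $\R^d$-marginals shows $\widetilde\gamma \in \Pi(\nu_0,\nu_1)$, and conversely every such $\widetilde\gamma$ yields a coupling of $\eta_0,\eta_1$ by the same formula.

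Now the cost computation is immediate: since $d_{\cM\times\R^d}((p_0,v_0),(p_1,v_1))^2 = d_\cM(p_0,p_1)^2 + \|v_0-v_1\|^2$ and $\gamma$ lives where $(p_0,p_1)=(m_0,m_1)$,
\[
\int d_{\cM\times\R^d}^2 \, d\gamma \;=\; d_\cM(m_0,m_1)^2 \;+\; \int_{\R^d\times\R^d} \|v_0-v_1\|^2 \, d\widetilde\gamma(v_0,v_1).
\]
Taking the infimum over $\gamma \in \Pi(\eta_0,\eta_1)$, equivalently over $\widetilde\gamma \in \Pi(\nu_0,\nu_1)$, and applying Proposition~\ref{prop:distance_between_gaussians} to the mean-zero Gaussians $\nu_0,\nu_1$ gives
\[
W_2^{\cM\times\R^d}(\eta_0,\eta_1)^2 \;=\; d_\cM(m_0,m_1)^2 + \mathrm{tr}\big(\Sigma_0 + \Sigma_1 - 2(\Sigma_0^{\frac12}\Sigma_1\Sigma_0^{\frac12})^{\frac12}\big),
\]
which is the asserted identity. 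For the ``moreover'' claim, Proposition~\ref{prop:distance_between_gaussians} (the $m_0=m_1=0$ case) provides an optimal coupling $\widetilde\pi$ of $\nu_0,\nu_1$ that is a mean-zero Gaussian on $\R^{2d} \approx \R^d\times\R^d$; then $\overline\pi := \delta_{(m_0,m_1)}\otimes\widetilde\pi$ is a Gaussian measure on the trivial rank-$2d$ bundle $(\cM\times\cM)\times(\R^d\times\R^d)$ over $\cM\times\cM$ in the sense of Definition~\ref{def:gaussian_mixture_vector_bundle}, and by the second paragraph $\sigma_\#\overline\pi \in \Pi(\eta_0,\eta_1)$ realizes the infimum.

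The only mildly delicate point -- the ``main obstacle'', such as it is -- is the measure-theoretic bookkeeping in the second paragraph: justifying that a coupling whose $\cM\times\cM$-pushforward is the Dirac mass $\delta_{(m_0,m_1)}$ genuinely disintegrates as the product $\delta_{(m_0,m_1)}\otimes\widetilde\gamma$ (it is supported on the corresponding slice) and that its slice marginals are exactly $\nu_0$ and $\nu_1$. This is routine, but it is where one must keep track of the identification of $\{m_i\}\times\R^d$ with $\R^d$ and of the fact that the standard fiberwise inner product is precisely what makes the covariance operators genuine symmetric positive-definite matrices, so that Proposition~\ref{prop:distance_between_gaussians} applies verbatim.
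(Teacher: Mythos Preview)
Your proposal is correct and follows essentially the same approach as the paper: both arguments observe that any coupling is supported on $(\{m_0\}\times\R^d)\times(\{m_1\}\times\R^d)$, split the cost into the constant $d_\cM(m_0,m_1)^2$ term plus a Euclidean transport cost, and then invoke Proposition~\ref{prop:distance_between_gaussians} for mean-zero Gaussians to obtain the formula and the Gaussian optimal coupling. The only cosmetic difference is that you parameterize couplings directly via $\sigma_\#(\delta_{(m_0,m_1)}\otimes\widetilde\gamma)$, whereas the paper introduces an auxiliary bijection $\rho$ and later factors $\rho^{-1}=\sigma\circ\tau$ to reach the same conclusion.
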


    \begin{proof}
        Any coupling $\pi \in \Pi(\eta_0,\eta_1)$ must be supported on the product of the supports of the $\eta_i$; that is, 
        \[
        \mathrm{supp}(\pi) \subset (\{m_0\} \times \R^d) \times (\{m_1\} \times \R^d) \approx \R^d \times \R^d.
        \]
        We then have
        \begin{align}
            W_2^{\cM \times \R^d}(\eta_0,\eta_1)^2 &= \inf_{\pi \in \Pi(\eta_0,\eta_1)} \int_{(\cM \times \R^d) \times (\cM \times \R^d)} d_{\cM \times \R^d}((p_0,v_0),(p_1,v_1))^2 d\pi((p_0,v_0),(p_1,v_1)) \nonumber \\
            &= \inf_{\pi \in \Pi(\eta_0,\eta_1)} \int_{(\{m_0\} \times \R^d) \times (\{m_1\} \times \R^d)} d_\cM(m_0,m_1)^2 \nonumber \\
            &\qquad \qquad  + \int_{(\{m_0\} \times \R^d) \times (\{m_1\} \times \R^d)} \|v_0 - v_1\|^2 d\pi((m_0,v_0),(m_1,v_1)) \nonumber \\ 
            &= d_\cM(m_0,m_1)^2 \label{eqn:gaussian_distance_1} \\
            &\qquad \qquad + \inf_{\pi \in \Pi(\eta_0,\eta_1)} \int_{(\{m_0\} \times \R^d) \times (\{m_1\} \times \R^d)} \|v_0 - v_1\|^2 d\pi((m_0,v_0),(m_1,v_1)). \nonumber
        \end{align}
        Now consider the bijection
        \begin{align*}
            \rho: (\{m_0\} \times \R^d) \times (\{m_1\} \times \R^d) &\to \R^d \times \R^d \\
            ((m_0,v_0),(m_1,v_1)) &\mapsto (v_0,v_1).
        \end{align*}
        We claim that this induces a correspondence
        \[
        \rho_\#:\Pi(\eta_0,\eta_1) \to \Pi(N_d(0,\Sigma_0),N_d(0,\Sigma_1)).
        \]
        Indeed, for $\pi \in \Pi(\eta_0,\eta_1)$ and any measurable subset $A \subset \R^d$, we have 
        \[
        \rho_\# \pi (A \times \R^d) = \pi(\rho^{-1}(A \times \R^d)) = \pi((\{m_0\} \times A) \times \{m_1\} \times \R^d) = \eta_0(\{m_0\} \times A) = N_d(0,\Sigma_0)(A),
        \]
        and the computation for the other marginal is similar. Combining this observation with the change-of-variables formula yields 
        \begin{align*}
        &\inf_{\pi \in \Pi(\eta_0,\eta_1)} \int_{(\{m_0\} \times \R^d) \times (\{m_1\} \times \R^d)} \|v_0 - v_1\|^2 d\pi((m_0,v_0),(m_1,v_1)) \\
        &\qquad \qquad \qquad = \inf_{\pi \in \Pi(\eta_0,\eta_1)} \int_{\R^d \times \R^d} \|v_0 - v_1\|^2 d(\rho_\# \pi)(v_0,v_1) \\
        &\qquad \qquad \qquad = \inf_{\xi \in \Pi(N_d(0,\Sigma_0),N_d(0,\Sigma_1))} \int_{\R^d \times \R^d} \|v_0 - v_1\|^2 d\xi(v_0,v_1).
        \end{align*}
        Thus the second term of \eqref{eqn:gaussian_distance_1} is equivalent to computing the Wasserstein distance between mean-zero Gaussians in $\R^d$, so Proposition \ref{prop:distance_between_gaussians} implies that the claimed formula for \newline $W_2^{\cM \times \R^d}(\eta_0,\eta_1)$ holds. Moreover, the proposition tells us that there is an optimal coupling $\xi$ which is a Gaussian on $\R^d \times \R^d$ with mean zero. The pushforward $(\rho^{-1})_\# \xi \in \Pi(\eta_0,\eta_1)$ is an optimal coupling for the original Wasserstein distance.
        To prove the last statement of the proposition, first consider the measurable map
        \begin{align*}
            \tau: \R^d \times \R^d &\to (\mathcal{M} \times \cM) \times (\R^d \times \R^d) \\
            (v_0,v_1) &\mapsto ((m_0,m_1),(v_0,v_1)).
        \end{align*}
        It is easy to see that $\tau_\#$ takes mean zero Gaussians on $\R^d \times \R^d$ to Gaussians on the product bundle $(\cM \times \cM) \times (\R^d \times \R^d)$; explicitly,
        \[
        \tau_\# N_{2d}(0,\Sigma) = N_{(\cM \times \cM) \times (\R^d \times \R^d)}((m_0,m_1),\Sigma).
        \]
        Next, observe that $\rho^{-1} = \sigma \circ \tau$, so it follows that the optimal coupling $(\rho^{-1})_\# \xi$ can be expressed as $\sigma_\# \overline{\pi}$, where $\overline{\pi} := \tau_\# \xi \in \gauss((\cM \times \cM) \times (\R^d \times \R^d))$. 
    \end{proof}

    This result extends to the setting of a general trivial bundle (not just a product bundle) $\cE \to \cM$. Let $\varphi:\cE \to \cM \times \R^d$ be a trivialization and define a metric $d_{\varphi}$ on $\cE$ as the pullback of the product metric; that is,
    \[
    d_{\varphi}(q_0,q_1) := \varphi^\ast d_{\cM \times \R^d}(q_0,q_1) = d_{\cM \times \R^d}(\varphi(q_0),\varphi(q_1)).
    \]
    Let $W_2^\varphi$ denote the Wasserstein distance associated to this metric. 

    \begin{proposition}\label{prop:trivial_bundle_gaussian_distance}
        Let $\cE \to \cM$ be a trivial bundle with $\varphi$ and $W_2^\varphi$ as above and let $\eta_i = N_\cE(m_i,\Sigma_i)$ for $i \in \{0,1\}$. Then 
        \[
        W_2^{\varphi}(\eta_0,\eta_1) = d_\cM(m_0,m_1)^2 + \mathrm{tr}\big(\Sigma_0 + \Sigma_1 - 2 \big(\Sigma_0^{\frac{1}{2}} \Phi_{m_0,m_1}^{-1} \Sigma_1 \Phi_{m_0,m_1} \Sigma_0^{\frac{1}{2}}\big)^\frac{1}{2} \big),
        \]
        where
        \[
        \Phi_{m_0,m_1} := \varphi_{m_1}^{-1} \circ \varphi_{m_0} : \cE_{m_0} \to \cE_{m_1}.
        \]
    \end{proposition}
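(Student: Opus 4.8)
The plan is to push everything forward along the trivialization $\varphi$ and then invoke Lemma~\ref{lem:gaussian_distance_product}. Since $d_\varphi := \varphi^\ast d_{\cM\times\R^d}$ by definition, the map $\varphi:(\cE,d_\varphi)\to(\cM\times\R^d,d_{\cM\times\R^d})$ is an isometry of metric spaces; consequently $\gamma\mapsto(\varphi\times\varphi)_\#\gamma$ is a bijection between $\Pi(\eta_0,\eta_1)$ and $\Pi(\varphi_\#\eta_0,\varphi_\#\eta_1)$ preserving the transport cost (change of variables), so $W_2^\varphi(\eta_0,\eta_1) = W_2^{\cM\times\R^d}(\varphi_\#\eta_0,\varphi_\#\eta_1)$. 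Recall that throughout we take trivializations to be bundle isometries onto $\cM\times\R^d$ with its standard structure (as justified by the proposition preceding the remarks), so Proposition~\ref{prop:gaussians_bundle_isomorphisms} applies and gives $\varphi_\#\eta_i = N_{\cM\times\R^d}(m_i,A_i)$, where, under the canonical identification $\{m_i\}\times\R^d\cong\R^d$, we set $A_i := \varphi_{m_i}\Sigma_i\varphi_{m_i}^{-1}$; this lies in $\mathsf{Sym}_d^+$ because conjugating a self-adjoint positive operator by a linear isometry yields a self-adjoint positive operator. Since $\varphi$ covers $\mathrm{id}_\cM$, base points are preserved, so the base-space contribution remains $d_\cM(m_0,m_1)^2$.

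With this in hand, Lemma~\ref{lem:gaussian_distance_product} applied to the product-bundle Gaussians $\varphi_\#\eta_0$ and $\varphi_\#\eta_1$ gives
\[
W_2^\varphi(\eta_0,\eta_1) \;=\; d_\cM(m_0,m_1)^2 + \mathrm{tr}\Big(A_0 + A_1 - 2\big(A_0^{1/2}A_1A_0^{1/2}\big)^{1/2}\Big),
\]
and it remains only to re-express the trace term in terms of $\Sigma_0$, $\Sigma_1$, and $\Phi_{m_0,m_1}$. This is the one step requiring care: the key facts are that trace is invariant under conjugation by linear isomorphisms, and that conjugation by a \emph{linear isometry} additionally commutes with taking the positive operator square root. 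Since $\varphi_{m_0}$ is an isometry, $A_0^{1/2} = \varphi_{m_0}\Sigma_0^{1/2}\varphi_{m_0}^{-1}$, so, using $\varphi_{m_1}^{-1}\varphi_{m_0} = \Phi_{m_0,m_1}$ and $\varphi_{m_0}^{-1}\varphi_{m_1} = \Phi_{m_0,m_1}^{-1}$,
\[
A_0^{1/2}A_1A_0^{1/2} \;=\; \varphi_{m_0}\,\Sigma_0^{1/2}\,(\varphi_{m_0}^{-1}\varphi_{m_1})\,\Sigma_1\,(\varphi_{m_1}^{-1}\varphi_{m_0})\,\Sigma_0^{1/2}\,\varphi_{m_0}^{-1} \;=\; \varphi_{m_0}\big(\Sigma_0^{1/2}\Phi_{m_0,m_1}^{-1}\Sigma_1\Phi_{m_0,m_1}\Sigma_0^{1/2}\big)\varphi_{m_0}^{-1}.
\]
Because $\Phi_{m_0,m_1} = \varphi_{m_1}^{-1}\circ\varphi_{m_0}$ is a composition of isometries, hence an isometry $\cE_{m_0}\to\cE_{m_1}$, the operator $\Sigma_0^{1/2}\Phi_{m_0,m_1}^{-1}\Sigma_1\Phi_{m_0,m_1}\Sigma_0^{1/2}$ is self-adjoint and positive definite on $\cE_{m_0}$; its positive square root therefore exists and, conjugated by the isometry $\varphi_{m_0}$, equals $(A_0^{1/2}A_1A_0^{1/2})^{1/2}$. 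Taking traces and using $\mathrm{tr}(A_i) = \mathrm{tr}(\Sigma_i)$ together with the analogous identity for the square-root term collapses the expression to $\mathrm{tr}\big(\Sigma_0 + \Sigma_1 - 2(\Sigma_0^{1/2}\Phi_{m_0,m_1}^{-1}\Sigma_1\Phi_{m_0,m_1}\Sigma_0^{1/2})^{1/2}\big)$, which is the claimed formula.

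The main obstacle is not any single hard estimate but rather keeping the bookkeeping straight: tracking which inner-product space ($\cE_{m_0}$, $\cE_{m_1}$, or $\R^d$) each operator acts on, and invoking the bundle-isometry hypothesis exactly where it is needed---namely, to push covariance operators through $\varphi$ as conjugations (rather than as general congruence transformations $L(\cdot)L^{\ast}$), and to commute $\varphi_{m_0}$ and $\Phi_{m_0,m_1}$ past operator square roots. If an explicit optimal coupling for $W_2^\varphi$ is also wanted, it comes for free as the $(\varphi\times\varphi)^{-1}$-pushforward of the optimal coupling furnished by Lemma~\ref{lem:gaussian_distance_product}.
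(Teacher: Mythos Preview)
Your proof is correct and follows essentially the same route as the paper: push forward along the bundle isometry $\varphi$, apply Lemma~\ref{lem:gaussian_distance_product} to the resulting product-bundle Gaussians, and then simplify the trace term using cyclic invariance of trace together with the fact that conjugation by a linear isometry commutes with positive square roots. Your write-up is, if anything, a bit more explicit than the paper's about the bookkeeping (which inner-product space each operator lives on, and precisely where the isometry hypothesis is used), but the strategy and the key steps are the same.
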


    \begin{proof}
        By Proposition \ref{prop:gaussians_bundle_isomorphisms}, $\varphi_\# \eta_i$ is a Gaussian on the product bundle $\cM \times \R^d$; specifically,
        \[
        \varphi_\# \eta_i = N_{\cM \times \R^d}(m_i, \varphi_{m_i} \Sigma \varphi_{m_i}^{-1}).
        \]
        By definition of $d_{\varphi}$, $\varphi$ is an isometry of metric spaces, so induces an isometry at the level of Wasserstein spaces through the pushforward map; that is, 
        \[
        W_2^\varphi(\eta_0,\eta_1) = W_2^{\cM \times \R^d}(\varphi_\# \eta_0, \varphi_\# \eta_1).
        \]
        Applying Lemma \ref{lem:gaussian_distance_product}, we have
        \begin{align*}
        W_2^\varphi(\eta_0,\eta_1) &= W_2^{\cM \times \R^d}(\varphi_\# \eta_0, \varphi_\# \eta_1) \\
        &= d_\cM(m_0,m_1)^2 \\
        &\qquad \qquad + \mathrm{tr}\big(\varphi_{m_0} \Sigma_0 \varphi_{m_0}^{-1})\\
        &\qquad \qquad+ \mathrm{tr}(\varphi_{m_1} \Sigma_1 \varphi_{m_1}^{-1}  - 2 \big((\varphi_{m_0} \Sigma_0 \varphi_{m_0}^{-1})^{\frac{1}{2}} \varphi_{m_1}\Sigma_1\varphi_{m_1}^{-1} (\varphi_{m_0}\Sigma_0 \varphi_{m_0}^{-1})^{\frac{1}{2}}\big)^\frac{1}{2} \big) \\
        &= d_\cM(m_0,m_1)^2  + \mathrm{tr}\big( \Sigma_0   +  \Sigma_1  - 2 \big(\Sigma_0 ^{\frac{1}{2}} \Phi_{m_0,m_1}^{-1}\Sigma_1\Phi_{m_0,m_1} \Sigma_0^{\frac{1}{2}}\big)^\frac{1}{2} \big),
        \end{align*}
        where the last equality follows by linearity and cyclic permutation invariance of trace, together with the fact that the square root of a symmetric positive definite matrix is invariant under change of basis, in the sense that $(\varphi \Sigma \varphi^{-1})^{\frac{1}{2}} = \varphi \Sigma^{\frac{1}{2}} \varphi^{-1}$. 
    \end{proof}

    \paragraph{Distance Between Gaussian Mixtures on Trivial Bundles} Finally, we extend the results above to the setting of Gaussian mixture measures on trivial bundles. 

    We begin with the product bundle $\cM \times \R^d$, with metric $d_{\cM \times \R^d}$ and associated Wasserstein metric $W_2^{\cM \times \R^d}$ defined as above. Let $\mu_i = \sum_{k=1}^{K_i} w_{i}^k \eta_{i}^k$, $i \in \{0,1\}$ be elements of $\GM(\cM \times \R^d)$. As in the classical setting of Gaussian mixtures on $\R^d$, we can consider alternative metrics on the space of Gaussian mixtures. First, let $W_2^{\gauss(\cM \times \R^d)}(\mu_0^\ast, \mu_1^\ast)$ denote the Wasserstein distances between the measures when they are considered as discrete distributions on the space of Gaussians $\gauss(\cM \times \R^d)$; as in the Euclidean setting, we write \[\mu_i^\ast = \sum_{k=1}^{K_i} w_{i}^k \delta_{\eta_{i}^k} \in \mathcal{P}_2(\gauss(\cM \times \R^d)).\] Second, consider the \emph{mixture Wasserstein distance}
    \begin{align*}
    &MW_2^{\cM \times \R^d}(\mu_0,\mu_1)^2 \\
    &\qquad  := \inf_{\pi \in \Pi(\mu_0,\mu_1) \cap \GM_\sigma(\cM \times \R^d)} \int_{(\cM \times \R^d) \times (\cM \times \R^d)} d_{\cM \times \R^d}((p_0,v_0),(p_1,v_1))^2 d\pi ((p_0,v_0),(p_1,v_1)),
    \end{align*}
    where
    \[
    \GM_\sigma(\cM \times \R^d) := \{\pi = \sigma_\# \overline{\pi} \mid \overline{\pi} \in \GM((\cM \times \cM) \times (\R^d \times \R^d))\} \subset \mathcal{P}_2((\cM \times \R^d) \times (\cM \times \R^d)),
    \]
    with $\sigma$ denoting the coordinate permutation map from Definition \ref{def:coordinate_permutation_map}.

    We have the following generalization of Theorem \ref{thm:Gaussian_mixture_metrics} (Theorem \ref{thm:Gaussian_mixture_metrics} is recovered by setting $\cM = \R^d$ and $d_\cM$ to be Euclidean distance), whose proof follows similar ideas as the proof in \cite{delon-agnes:2019}.

    \begin{lemma}\label{lem:gaussian_mixture_vector_bundles}
        Let $\mu_0, \mu_1 \in \GM(\cM \times \R^d)$, with $\mu_i = \sum_{k=1}^{K_i} w_{i}^k \eta_{i}^k$ and $\eta_{i}^k = N_{\cM \times \R^d}(m_{i}^k,\Sigma_{i}^k)$. Then \[MW_2^{\cM \times \R^d}(\mu_0,\mu_1) = W^{\gauss(\cM \times \R^d)}_2(\mu_0^\ast,\mu_1^\ast).\]
    \end{lemma}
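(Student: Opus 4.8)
The plan is to prove the two inequalities $MW_2^{\cM \times \R^d}(\mu_0,\mu_1) \le W_2^{\gauss(\cM \times \R^d)}(\mu_0^\ast,\mu_1^\ast)$ and $MW_2^{\cM \times \R^d}(\mu_0,\mu_1) \ge W_2^{\gauss(\cM \times \R^d)}(\mu_0^\ast,\mu_1^\ast)$ separately, mimicking the strategy of \cite[Proposition 4]{delon-agnes:2019}, but substituting Lemma \ref{lem:gaussian_distance_product} for the Euclidean Gaussian coupling facts and Corollary \ref{cor:identifiability_vector_bundles} for Euclidean identifiability of mixtures. Since neither quantity depends on the chosen representations of $\mu_0,\mu_1$ (collapsing repeated Gaussian components changes neither $\mu_i$ as a measure nor $\mu_i^\ast$ as a measure on $\gauss(\cM\times\R^d)$), I would first reduce to the case where each $\mu_i = \sum_{k} w_i^k \eta_i^k$ is in minimal form, i.e.\ the $\eta_i^k$ are pairwise distinct.

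For the ``$\le$'' direction, let $(\pi_{kl}^\ast) \in \Pi(w_0,w_1)$ be an optimal plan for the discrete problem defining $W_2^{\gauss(\cM\times\R^d)}(\mu_0^\ast,\mu_1^\ast)$. For each $(k,l)$, Lemma \ref{lem:gaussian_distance_product} furnishes an optimal coupling $\gamma_{kl} = \sigma_\# \overline{\gamma}_{kl} \in \Pi(\eta_0^k,\eta_1^l)$ with $\overline{\gamma}_{kl}$ a Gaussian on $(\cM\times\cM)\times(\R^d\times\R^d)$ and $\int d_{\cM\times\R^d}^2\, d\gamma_{kl} = W_2^{\cM\times\R^d}(\eta_0^k,\eta_1^l)^2$. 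Set $\pi := \sum_{k,l}\pi_{kl}^\ast\gamma_{kl} = \sigma_\#\big(\sum_{k,l}\pi_{kl}^\ast \overline{\gamma}_{kl}\big)$. Since $\sum_{k,l}\pi_{kl}^\ast\overline{\gamma}_{kl}$ is a convex combination of Gaussians, it lies in $\GM((\cM\times\cM)\times(\R^d\times\R^d))$, so $\pi \in \GM_\sigma(\cM\times\R^d)$; and the marginal identities $\sum_l \pi_{kl}^\ast = w_0^k$, $\sum_k\pi_{kl}^\ast = w_1^l$ together with linearity of pushforward give $\pi \in \Pi(\mu_0,\mu_1)$. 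Then $\int d_{\cM\times\R^d}^2\, d\pi = \sum_{k,l}\pi_{kl}^\ast W_2^{\cM\times\R^d}(\eta_0^k,\eta_1^l)^2 = W_2^{\gauss(\cM\times\R^d)}(\mu_0^\ast,\mu_1^\ast)^2$, which is the desired inequality.

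For the ``$\ge$'' direction, let $\pi \in \Pi(\mu_0,\mu_1)\cap\GM_\sigma(\cM\times\R^d)$ be arbitrary, say $\pi = \sigma_\#\overline{\pi}$ with $\overline{\pi} = \sum_{j=1}^{J}\lambda_j\zeta_j$, each $\zeta_j$ a mean-zero Gaussian supported on the fiber over some $(p_0^j,p_1^j)\in\cM\times\cM$. Pushing forward, $\pi = \sum_j \lambda_j\,\sigma_\#\zeta_j$, and each $\sigma_\#\zeta_j$ is a coupling of the two marginals $\alpha_j,\beta_j$ of $\zeta_j$, which are mean-zero Gaussians on $\cM\times\R^d$ supported over $p_0^j$ and $p_1^j$ respectively. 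Since $\sum_j\lambda_j\alpha_j = \mu_0$ and $\sum_j\lambda_j\beta_j = \mu_1$, Corollary \ref{cor:identifiability_vector_bundles} forces each $\alpha_j$ to equal some $\eta_0^{k(j)}$ and each $\beta_j$ to equal some $\eta_1^{l(j)}$ (these indices being unique by minimality). Define $\pi_{kl} := \sum_{j\,:\,k(j)=k,\ l(j)=l}\lambda_j$; grouping $\sum_j\lambda_j\alpha_j = \mu_0$ by the value of $k(j)$ and invoking the uniqueness part of Proposition \ref{prop:identifiability_vector_bundles} gives $\sum_l\pi_{kl} = w_0^k$, and symmetrically $\sum_k\pi_{kl} = w_1^l$, so $(\pi_{kl})\in\Pi(w_0,w_1)$. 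Finally,
\begin{align*}
\int d_{\cM\times\R^d}^2\, d\pi
&= \sum_{j}\lambda_j \int d_{\cM\times\R^d}^2\, d(\sigma_\#\zeta_j) \\
&\ge \sum_{j}\lambda_j\, W_2^{\cM\times\R^d}(\eta_0^{k(j)},\eta_1^{l(j)})^2 \\
&= \sum_{k,l}\pi_{kl}\, W_2^{\cM\times\R^d}(\eta_0^{k},\eta_1^{l})^2 \\
&\ge W_2^{\gauss(\cM\times\R^d)}(\mu_0^\ast,\mu_1^\ast)^2 ,
\end{align*}
where the first inequality uses that $\sigma_\#\zeta_j$ is a (typically suboptimal) coupling of $\eta_0^{k(j)}$ and $\eta_1^{l(j)}$, and the last uses feasibility of $(\pi_{kl})$ for the discrete problem. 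Taking the infimum over $\pi$ finishes this direction, and combining the two inequalities completes the proof.

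The main obstacle is the ``$\ge$'' direction, and specifically the two bookkeeping steps that rest on identifiability: (i) recognizing that the marginals $\alpha_j,\beta_j$ of the Gaussian pieces of $\overline{\pi}$ are literally among the mixture components $\eta_0^k,\eta_1^l$ (not merely close to them), which is exactly what Corollary \ref{cor:identifiability_vector_bundles} buys us; and (ii) verifying that the collapsed weights $\pi_{kl} = \sum_{j:k(j)=k,\,l(j)=l}\lambda_j$ genuinely form a transport plan between $w_0$ and $w_1$, which needs Proposition \ref{prop:identifiability_vector_bundles} to match $\sum_{j:k(j)=k}\lambda_j$ with $w_0^k$. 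Everything else — linearity of pushforwards, the change-of-variables computations describing $\sigma_\#\zeta_j$ and its marginals, and the fact that a marginal of a mean-zero Gaussian is again a mean-zero Gaussian — is routine.
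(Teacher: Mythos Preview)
Your argument is correct and follows essentially the same route as the paper: the ``$\le$'' direction builds a feasible $\pi\in\GM_\sigma(\cM\times\R^d)$ from an optimal discrete plan via Lemma~\ref{lem:gaussian_distance_product}, and the ``$\ge$'' direction decomposes an arbitrary feasible $\pi=\sigma_\#\overline{\pi}$ into Gaussian pieces and invokes Corollary~\ref{cor:identifiability_vector_bundles} to identify their marginals with the $\eta_i^k$. Your explicit reduction to minimal form together with the appeal to Proposition~\ref{prop:identifiability_vector_bundles} to verify $\sum_l\pi_{kl}=w_0^k$ is in fact a bit more careful than the paper's reindexing step, but the substance is the same.
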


    \begin{proof}
        For notational convenience, let $\cE = \cM \times \R^d$ and $\mathcal{F} = (\cM \times \cM) \times (\R^d \times \R^d)$ for the rest of the proof. Let $\omega \in \Pi(\mu_0^\ast,\mu_1^\ast)$ be an optimal coupling for $W_2^{\gauss(\cE)}(\mu_0^\ast, \mu_1^\ast)$, so that 
        \[
        W_2^{\gauss(\cE)}(\mu_0^\ast, \mu_1^\ast)^2 = \sum_{k,\ell} \omega_{k\ell} W_2^{\cE}(\eta_{0}^k,\eta_{1}^\ell)^2.
        \] 
        For each pair of indices $(k,\ell)$, choose an optimal coupling $\pi^{k \ell} \in \Pi(\eta_{0}^{k},\eta_{1}^{\ell})$ for the Wasserstein distance $W_2^\cE(\eta_{0}^{k},\eta_{1}^{\ell})$, and set $\pi = \sum_{k,\ell} \omega_{k\ell} \pi^{k \ell}$ (the superscript notation $\pi^{k \ell}$ is intended to distinguish from the notation $\pi_{k \ell}$, which we used for an entry in a matrix). Then $\pi \in \Pi(\mu_0,\mu_1)$. We also claim that $\pi \in \GM_\sigma(\cE)$. Indeed, from Lemma \ref{lem:gaussian_distance_product}, each $\pi^{k\ell}$ is of the form $\sigma_\# \overline{\pi}^{k\ell}$ for some $\overline{\pi}^{k\ell} \in \gauss(\mathcal{F})$ and, therefore, $\pi = \sigma_\# \overline{\pi}$, where $\overline{\pi} := \sum_{k,\ell} \omega_{k\ell} \overline{\pi}^{k\ell} \in \GM(\mathcal{F})$. Moreover,
        \begin{align*}
            &\int_{\cE \times \cE} d_{\cE}((p_0,v_0),(p_1,v_1))^2 d\pi ((p_0,v_0),(p_1,v_1))\\
            &= \sum_{k,\ell} \omega_{k\ell} \int_{\cE \times \cE} d_{\cE}((p_0,v_0),(p_1,v_1))^2 d\pi^{k\ell}((p_0,v_0),(p_1,v_1))\\
            &= \sum_{k,\ell} \omega_{k\ell} W_2^\cE(\eta_{0}^{k},\eta_{1}^{\ell})^2 = W_2^{\gauss(\cE)}(\mu_0^\ast, \mu_1^\ast)^2.
        \end{align*}
        Since $\pi$ is not necessarily optimal for the mixture Wasserstein distance, it follows that \newline $MW_2^{\cE}(\mu_0,\mu_1) \leq W_2^{\gauss(\cE)}(\mu_0^\ast, \mu_1^\ast)$.

        To see the reverse inequality, suppose that $\pi$ is an arbitrary coupling in the feasible set for $MW_2^{\cE}(\mu_0,\mu_1)$. Let $\pi = \sigma_\# \overline{\pi}$ for some $\overline{\pi} \in \GM(\mathcal{F})$ of the form
        $
        \overline{\pi} = \sum_{j=1}^K \omega_{j} \overline{\pi}^{j},
        $
        where the $\omega_{j}$ are positive real numbers satisfying $\sum_{j}\omega_{j} = 1$ and where each $\overline{\pi}^{j}$ is a Gaussian on the vector bundle $\mathcal{F}$. Therefore $\pi = \sum_{j=1}^K \omega_{j} \pi^{j}$, where $\pi^j := \sigma_\# \overline{\pi}^j$. Let $\rho_i:\cE \times \cE \to \cE$ be coordinate projection onto the left, for $i=0$, or right, for $i=1$, component. The marginal condition on $\pi$ implies that
        \[
        \sum_{k=1}^{K_0} w_0^k \eta_0^k = \mu_0 = (\rho_0)_\# \pi = \sum_{j=1}^K \omega_j (\rho_0)_\# \pi^j.
        \]
        Corollary \ref{cor:identifiability_vector_bundles} then tells us that, for each $j$, there is some $k$ such that $(\rho_0)_\# \pi^j = \eta_0^k$. Moreover, it must be that $(\rho_1)_\# \pi^j = \eta_1^\ell$ for some $\ell$, by the second marginal condition on $\pi$. Putting these observations together and reindexing with double indices, for convenience, we write $\pi = \sum_{k,\ell} \omega_{k\ell} \pi^{k\ell}$, where the marginals of $\pi^{k\ell}$ are $\eta_0^k$ and $\eta_1^\ell$, respectively. It follows that $\omega := (\omega_{k\ell})_{k,\ell}$ defines a coupling of $\mu_0^\ast$ and $\mu_1^\ast$, so a computation very similar to the one in the previous paragraph shows that the value of the mixture Wasserstein distance objective on the coupling $\pi$ is bounded below by $W_2^{\gauss(\cE)}(\mu_0^\ast,\mu_1^\ast)$. Since $\pi$ was an arbitrary element of $\Pi(\mu_0,\mu_1) \cap \GM_\sigma(\cE)$, this completes the proof.
    \end{proof}

    Now consider an arbitrary trivial vector bundle $\cE \to \cM$. For a trivialization $\varphi:\cE \to \cM \times \R^d$, define the \emph{associated mixture Wasserstein distance} between Gaussian mixtures $\mu_0,\mu_1 \in \GM(\cE)$ as 
    \[
    MW_2^{\varphi}(\mu_0,\mu_1)^2 := \inf_{\pi \in \Pi(\mu_0,\mu_1) \cap \GM_\sigma^\varphi(\cE)} \int_{\cE \times \cE} d_{\cE}(w_0,w_1)^2 d\pi (w_0,w_1),
    \]
    where
    \[
    \GM_\sigma^\varphi(\cE) := \{(\varphi^{-1} \circ \sigma)_\# \overline{\pi} \mid \overline{\pi} \in \GM((\cM \times \cM) \times (\R^d \times \R^d))\}.
    \]

    \begin{theorem}\label{thm:main_result}
        With the notation above, $MW_2^{\varphi}(\mu_0,\mu_1) = W_2^{\gauss(\cE)}(\mu_0^\ast,\mu_1^\ast)$, where $\gauss(\cE)$ is considered as a metric space with (the restriction of) Wasserstein distance $W_2^\varphi$.
    \end{theorem}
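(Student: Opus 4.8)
The plan is to transport the entire problem along the trivialization $\varphi$, thereby reducing Theorem \ref{thm:main_result} to the product-bundle case already established in Lemma \ref{lem:gaussian_mixture_vector_bundles}. Since $d_\varphi$ is by construction the pullback of $d_{\cM \times \R^d}$, the map $\varphi:(\cE,d_\varphi)\to(\cM\times\R^d,d_{\cM\times\R^d})$ is an isometric bijection of metric spaces; consequently the pushforward $\varphi_\#$ is an isometry of the associated $2$-Wasserstein spaces $\mathcal{P}_2(\cE)\to\mathcal{P}_2(\cM\times\R^d)$, and the product map $(\varphi\times\varphi)_\#$ is an isometry on $\mathcal{P}_2$ of the corresponding product spaces. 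In particular $(\varphi\times\varphi)_\#$ carries $\Pi(\mu_0,\mu_1)$ bijectively onto $\Pi(\varphi_\#\mu_0,\varphi_\#\mu_1)$ and preserves the transport cost, i.e. $\int d_\cE^2\,d\pi=\int d_{\cM\times\R^d}^2\,d\big((\varphi\times\varphi)_\#\pi\big)$.

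Next I would verify that $(\varphi\times\varphi)_\#$ sends the constraint set $\GM_\sigma^\varphi(\cE)$ onto $\GM_\sigma(\cM\times\R^d)$. Writing $\mathcal{F}=(\cM\times\cM)\times(\R^d\times\R^d)$ and unwinding the definition, $\GM_\sigma^\varphi(\cE)=\{\,((\varphi^{-1}\times\varphi^{-1})\circ\sigma)_\#\overline{\pi}\ :\ \overline{\pi}\in\GM(\mathcal{F})\,\}$, so $(\varphi\times\varphi)_\#$ composed with $((\varphi^{-1}\times\varphi^{-1})\circ\sigma)_\#$ equals $\sigma_\#$; as $\overline{\pi}$ ranges over all of $\GM(\mathcal{F})$ its image is exactly $\{\sigma_\#\overline{\pi}:\overline{\pi}\in\GM(\mathcal{F})\}=\GM_\sigma(\cM\times\R^d)$. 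Combining this with the cost-preservation above gives $MW_2^\varphi(\mu_0,\mu_1)=MW_2^{\cM\times\R^d}(\varphi_\#\mu_0,\varphi_\#\mu_1)$. For the right-hand side of the theorem, I would use that $\varphi_\#$ restricts to a bijection $\gauss(\cE)\to\gauss(\cM\times\R^d)$ by Proposition \ref{prop:gaussians_bundle_isomorphisms}, and that this bijection is an isometry for the metrics $W_2^\varphi$ and $W_2^{\cM\times\R^d}$ — precisely the identity $W_2^\varphi(\eta_0,\eta_1)=W_2^{\cM\times\R^d}(\varphi_\#\eta_0,\varphi_\#\eta_1)$ noted in the proof of Proposition \ref{prop:trivial_bundle_gaussian_distance}. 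An isometric bijection of these base spaces induces an isometry of the second-order Wasserstein spaces of measures over them; since $(\varphi_\#)_\#\mu_i^\ast=(\varphi_\#\mu_i)^\ast$ by linearity of pushforward (and minimal form is preserved because $\varphi_\#$ is injective on Gaussians), we obtain $W_2^{\gauss(\cE)}(\mu_0^\ast,\mu_1^\ast)=W_2^{\gauss(\cM\times\R^d)}\big((\varphi_\#\mu_0)^\ast,(\varphi_\#\mu_1)^\ast\big)$. Applying Lemma \ref{lem:gaussian_mixture_vector_bundles} to $\varphi_\#\mu_0,\varphi_\#\mu_1\in\GM(\cM\times\R^d)$ equates the two right-hand sides, and chaining the three equalities finishes the argument.

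I do not expect a genuine obstacle here: the work is a transport-of-structure argument, and Lemma \ref{lem:gaussian_mixture_vector_bundles} and Proposition \ref{prop:trivial_bundle_gaussian_distance} already supply all the analytic content. The one point demanding care — rather than a true difficulty — is the notational overload of $\varphi$ (a bundle map, its fiberwise restrictions $\varphi_m$, and the induced product map $\varphi\times\varphi$ acting on couplings), together with confirming that $\GM_\sigma^\varphi(\cE)$ was defined exactly so that $(\varphi\times\varphi)_\#$ intertwines it with $\GM_\sigma(\cM\times\R^d)$; making these identifications explicit is the bulk of the write-up.
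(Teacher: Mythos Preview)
Your proposal is correct and follows essentially the same approach as the paper: both arguments reduce to the product-bundle case by pushing forward along $\varphi$, establish the two identities $MW_2^\varphi(\mu_0,\mu_1)=MW_2^{\cM\times\R^d}(\varphi_\#\mu_0,\varphi_\#\mu_1)$ and $W_2^{\gauss(\cE)}(\mu_0^\ast,\mu_1^\ast)=W_2^{\gauss(\cM\times\R^d)}((\varphi_\#\mu_0)^\ast,(\varphi_\#\mu_1)^\ast)$, and then invoke Lemma~\ref{lem:gaussian_mixture_vector_bundles}. Your write-up is slightly more explicit about the role of $\varphi\times\varphi$ on couplings and the matching of constraint sets, but the logical structure is identical.
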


    \begin{proof}
        We claim that 
        \[
        MW_2^{\varphi}(\mu_0,\mu_1) = MW_2^{\cM \times \R^d}(\varphi_\# \mu_0, \varphi_\# \mu_1 ).
        \]
        Indeed, for any $\overline{\pi} \in \GM((\cM \times \cM) \times (\R^d \times \R^d))$, the equality
        \begin{align*}
        &\int_{(\cM \times \R^d)\times (\cM \times \R^d)} d_{\cM \times \R^d}((p_0,v_0),(p_1,v_1))^2 d(\sigma_\# \overline{\pi})((p_0,v_0),(p_1,v_1)) \\
        & \qquad \qquad \qquad \qquad = \int_{\cE \times \cE} d_\varphi (w_0, w_1)^2 d ((\varphi^{-1} \circ \sigma)_\# \overline{\pi}) (w_0,w_1),
        \end{align*}
        holds, by the change-of-variables formula, the definition of $d_\varphi$ and functoriality of pushforwards. Minimizing over couplings (of $\varphi_\# \mu_0$ and $\varphi_\# \mu_1$) of the form $\sigma_\# \overline{\pi}$ on the left hand side recovers $MW_2^{\cM \times \R^d}(\mu_0,\mu_1)$, and this is equivalent to minimizing over couplings (of $\mu_0$ and $\mu_1$) of the form $(\varphi^{-1} \circ \sigma)_\# \overline{\pi}$ on the right hand side, which recovers $MW_2^\varphi(\mu_0,\mu_1)$. 

        Next, we show that 
        \[
        W_2^{\gauss(\cE)}(\mu_0^\ast,\mu_1^\ast) = W_2^{\gauss(\cM \times \R^d)}((\varphi_\# \mu_0)^\ast, (\varphi_\# \mu_1)^\ast).
        \]
        Observe that
        \[
        \varphi_\# \mu_i = \sum_k w_{ik} \varphi_\# \eta_{ik} \Rightarrow (\varphi_\# \mu_i)^\ast = \sum_k w_{ik} \delta_{\varphi_\# \eta_{ik}}.
        \]
        It follows that $\Pi(\mu_0^\ast,\mu_1^\ast) = \Pi((\varphi_\# \mu_0)^\ast, (\varphi_\# \mu_1)^\ast)$ (considered as elements of $\R^{K_1 \times K_2}$). For any coupling $\omega$, the fact that $\varphi_\#$ is an isometry from $W^\cE_2$ to $W^{\cM \times \R^d}_2$ implies
        \[
        \sum_{k,\ell} \omega_{k\ell} W_2^\cE(\eta_{0k},\eta_{1\ell})^2 = \sum_{k,\ell} \omega_{k\ell} W_2^{\cM \times \R^d}(\varphi_\# \eta_{0k},\varphi_\# \eta_{1\ell})^2,
        \]
        and this proves the claim.
        
        Now the result follows from Lemma \ref{lem:gaussian_mixture_vector_bundles}:
        \[
        W_2^{\gauss(\cE)}(\mu_0^\ast,\mu_1^\ast) = W_2^{\gauss(\cM \times \R^d)}((\varphi_\# \mu_0)^\ast, (\varphi_\# \mu_1)^\ast) = MW_2^{\cM \times \R^d}(\varphi_\# \mu_0, \varphi_\# \mu_1) = MW_2^\varphi(\mu_0,\mu_1).
        \]
    \end{proof}

\paragraph{Summary of the Mixture Wasserstein Distance} For the convenience of the reader, we summarize the notation and constructions of the previous subsections here. The mixture Wasserstein distance for a trivial vector bundle $\cE \to \cM$ with trivialization $\varphi:\cE \to \cM \times \R^d$, $MW_2^\varphi$, between Gaussian mixtures
\[
\mu_i = \sum_{k=1}^{K_i} w_i^k \eta_i^k \in \GM(\cE), \quad \eta_i^k = N_\cE(m_i^k, \Sigma_i^k), \quad i \in \{0,1\}
\]
can be expressed as 
\begin{equation}\label{eq: wasserstein-type distance}
MW_2^\varphi(\mu_0,\mu_1)^2 = \min_{\omega \in \Pi(\mu_0,\mu_1)} \sum_{k,\ell} \omega_{k\ell} W_2^\cE(\eta_{0}^{k},\eta_{1}^{\ell})^2
\end{equation}
where
\[
W_2^\cE(N_\cE(m_0, \Sigma_0),N_\cE(m_1, \Sigma_1))^2 =\sum_{k,\ell}
    d_\cM(m_0,m_1)^2  + \mathrm{tr}\big( \Sigma_0   +  \Sigma_1  - 2 \big(\Sigma_0 ^{\frac{1}{2}} \Phi_{m_0,m_1}^{-1}\Sigma_1\Phi_{m_0,m_1} \Sigma_0^{\frac{1}{2}}\big)^\frac{1}{2} \big)
\]
and
\[
\Phi_{m_0,m_1} = \varphi_{m_1}^{-1} \circ \varphi_{m_0}: \cE_{m_0} \to \cE_{m_1}.
\]

\subsection{Choosing Trivializations}\label{sub: trivialization selection}

In this subsection, we address the question of how to choose an appropriate trivialization to fit into the pipeline described in the preceding subsection. We focus on the case that the vector bundle is of the form $T\cM \to \cM$, where $\cM$ is a parallelizable Riemannian manifold. We will make the simplifying assumption that the manifold is endowed with a distinguished point $p \in \cM$ and a distinguished isometry $P_{m}:T_p \cM \to T_m \cM$ for every point $m \in \cM$, such that $P_p$ is the identity map and such that the isometries vary smoothly in $m$. 

\begin{example}\label{ex:manifold_data}
    We now provide some examples of where the data described above might arise. 
    \begin{enumerate}
        \item In the numerical experiments in the following section, the underlying manifold $\cM$ is a \emph{punctured sphere}---that is, a sphere $\s^n$ with a point removed. In this case, the distinguished point $p$ is taken to be antipodal to the removed point. For every $m \in \cM$, there is a unique geodesic joining $p$ to $m$, and the isometry $P_m:T_p \cM \to T_m \cM$ is given by parallel transporting along the geodesic. 

        In practice, one begins with a Gaussian mixture on the sphere $\s^n$ and chooses a distinguished point $p \in \s^n$, say, by always taking the north pole or by choosing a Fr\'{e}chet mean of the basepoints of the Gaussians. Generically, the antipodal point $-p$ will not be a basepoint of any Gaussian in the mixture, so that we can restrict our analysis to the setting of the punctured sphere $\cM = \s^n \setminus \{-p\}$.

        \item More generally, if there is some point $p \in \cM$ such that there is a unique geodesic to any other point $m \in \cM$, then the parallel transport approach described above defines the desired collection of isometries. This is the setting used in \cite{WGOT} for comparing wrapped Gaussians.

        \item If $\cM$ is a Lie group with left (or right)-invariant Riemannian metric, then the natural choice of basepoint is the identity $e \in \cM$, and the isometry $P_m$ is the derivative of the left (or right) multiplication map.
    \end{enumerate}
\end{example}

Given $p \in \cM$ and $\{P_m\}_{m \in \cM}$ as above, we define a bundle isometry $T\cM \to \cM \times \R^d$ as follows. Choosing any orthonormal basis $F = (f_1,\ldots,f_d)$ for $T_{p} \cM$, one can extend this to a global moving frame for $\cM_p$ by defining $F(m) = \{f_j(m) = P_m(f_j), j = 1,...,d \}$. Let $\varphi^{p,F}:T\cM \to \cM \times \R^d$ be defined by 
 \[
 \varphi^{p,F}(v) = \left(m, \big(\langle v,f_j(m), \rangle_m\big)_{j=1}^d \right) \qquad \mbox{for $v \in T_m\cM$.}
 \]

We now observe that, while this construction depends on choices of $p$, $\{P_m\}_{m \in \cM}$ and $F$, the induced distance $MW_2^{\varphi^{p,F}}$  is actually invariant under the choice of $F$. Since $p$ and $\{P_m\}_{m \in \cM}$ arise naturally in several examples of interest (Example \ref{ex:manifold_data}), the following result shows that the mixture Wasserstein distance is a natural tool in practice.

\begin{proposition}\label{prop: basis invariance}
    With the notation defined above, the mixture Wasserstein distance \newline $MW_2^{\varphi^{p,F}}$ does not depend on the choice of orthonormal basis $F$. That is, for any orthonormal bases $F=(f_1,\ldots,f_d)$ and $G = (g_1,\ldots,g_d)$ for $T_p \cM$, we have $MW_2^{\varphi^{p,F}}(\mu_0,\mu_1) = MW_2^{\varphi_{p,G}}(\mu_0,\mu_1)$ for any Gaussian mixtures $\mu_0,\mu_1$.
\end{proposition}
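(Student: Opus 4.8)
The plan is to show that changing the orthonormal basis from $F$ to $G$ amounts to composing the trivialization $\varphi^{p,F}$ with a fixed orthogonal change-of-coordinates on the $\R^d$ factor, and that such a transformation is an isometry of the relevant metric, hence leaves the mixture Wasserstein distance unchanged. Concretely, since $F = (f_1,\ldots,f_d)$ and $G = (g_1,\ldots,g_d)$ are both orthonormal bases of $T_p\cM$, there is an orthogonal matrix $A \in O(d)$ with $g_j = \sum_i A_{ij} f_i$. Because the isometries $P_m$ are linear, the moving frames satisfy $g_j(m) = P_m(g_j) = \sum_i A_{ij} f_i(m)$ for every $m$, so the same fixed matrix $A$ relates the two frames in every fiber simultaneously. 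I would first record that $\varphi^{p,G} = \Psi_A \circ \varphi^{p,F}$, where $\Psi_A:\cM \times \R^d \to \cM \times \R^d$ is the bundle isometry $(m,v) \mapsto (m, A^T v)$; this is a direct computation from the defining formula $\varphi^{p,F}(v) = (m,(\langle v,f_j(m)\rangle_m)_j)$ together with orthonormality of the frames.

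Next I would observe that $\Psi_A$ is an isometry of the product metric $d_{\cM \times \R^d}$, since it fixes the $\cM$ coordinate and acts on the $\R^d$ coordinate by an orthogonal map, which preserves the Euclidean norm; hence $d_{\varphi^{p,G}} = d_{\varphi^{p,F}}$ as metrics on $T\cM$, because $d_{\varphi^{p,G}} = (\varphi^{p,G})^\ast d_{\cM \times \R^d} = (\varphi^{p,F})^\ast \Psi_A^\ast d_{\cM \times \R^d} = (\varphi^{p,F})^\ast d_{\cM \times \R^d} = d_{\varphi^{p,F}}$. Since the two trivializations induce literally the same metric on $T\cM$, they induce the same Wasserstein distance $W_2^{\varphi^{p,F}} = W_2^{\varphi^{p,G}}$ on $\mathcal{P}_2(T\cM)$ and, in particular, the same metric space structure on $\gauss(T\cM)$.

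The remaining point is that the constraint sets in the definitions of $MW_2^{\varphi^{p,F}}$ and $MW_2^{\varphi^{p,G}}$ also coincide. By definition $\GM_\sigma^{\varphi}(\cE)$ consists of measures $(\varphi^{-1} \circ \sigma)_\#\overline\pi$ with $\overline\pi \in \GM((\cM\times\cM)\times(\R^d\times\R^d))$; using $\varphi^{p,G} = \Psi_A \circ \varphi^{p,F}$ one has $(\varphi^{p,G})^{-1}\circ\sigma = (\varphi^{p,F})^{-1}\circ\Psi_A^{-1}\circ\sigma$, and since $\Psi_A^{-1}\circ\sigma$ differs from $\sigma$ only by the fixed bundle isometry $\Psi_A^{-1}$ (which, being a bundle isometry, preserves the class of Gaussian mixtures by Proposition \ref{prop:gaussians_bundle_isomorphisms}), the image $\GM_\sigma^{\varphi^{p,G}}(\cE)$ equals $\GM_\sigma^{\varphi^{p,F}}(\cE)$ — the pushforward by $\Psi_A^{-1}$ simply reshuffles the parametrizing Gaussian mixtures $\overline\pi$ bijectively. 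With identical objective functions (same metric) and identical feasible sets, the two infima agree. Alternatively, and perhaps more cleanly, one can invoke Theorem \ref{thm:main_result} to reduce the whole statement to the equality $W_2^{\gauss(\cE),\varphi^{p,F}}(\mu_0^\ast,\mu_1^\ast) = W_2^{\gauss(\cE),\varphi^{p,G}}(\mu_0^\ast,\mu_1^\ast)$, which is immediate once we know $W_2^{\varphi^{p,F}} = W_2^{\varphi^{p,G}}$ on $\gauss(\cE)$; I would likely present it this way to keep the argument short. I do not expect a serious obstacle here: the only thing to be careful about is checking that the change of frame is governed by a single basis-independent orthogonal matrix across all fibers, which is exactly where the linearity of the $P_m$ (and the fact that $G(m)$ is defined by transporting $G$, not by some fiberwise-varying recipe) is used.
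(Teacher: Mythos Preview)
Your proposal is correct. Both routes you outline work, and the paper takes essentially your ``alternative'' route: it invokes Theorem~\ref{thm:main_result} to reduce to showing that $W_2^{\varphi^{p,F}}$ on $\gauss(\cE)$ is frame-independent. The difference is in how that last point is verified. You argue that $\varphi^{p,G} = \Psi_A \circ \varphi^{p,F}$ for a fixed orthogonal $A$, so the pullback metrics $d_{\varphi^{p,F}}$ and $d_{\varphi^{p,G}}$ on $T\cM$ are literally equal, whence all induced Wasserstein distances coincide. The paper instead plugs into the explicit formula of Proposition~\ref{prop:trivial_bundle_gaussian_distance} and computes directly that $\Phi_{m_0,m_1}^{p,F} = (\varphi_{m_1}^{p,F})^{-1}\circ\varphi_{m_0}^{p,F} = P_{m_1}\circ P_{m_0}^{-1}$, which visibly does not involve $F$. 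Your isometry argument is cleaner and more conceptual; the paper's computation has the side benefit of identifying $\Phi_{m_0,m_1}$ with the transport maps $P_m$, which it then records in the subsequent remark. Your first route (matching the feasible sets $\GM_\sigma^{\varphi}$ directly) also works, though note that the composition $(\varphi)^{-1}\circ\sigma$ in the paper's definition is implicitly componentwise, so the bundle isometry you need to commute past $\sigma$ acts on $(\cM\times\cM)\times(\R^d\times\R^d)$ rather than on a single copy of $\cM\times\R^d$; your sketch handles this correctly in spirit but would benefit from making that explicit.
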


\begin{proof}
        By Theorem \ref{thm:main_result}, 
        $MW_2^{\varphi^{p,F}}(\mu_0,\mu_1) = W_2^{\gauss(\cE)}(\mu_0^\ast,\mu_1^\ast)$, where, for $\mu_i = \sum_{k=1}^{K_i} w_i^k \eta_i^k$,
        \[
        W_2^{\gauss(\cE)}(\mu_0^*, \mu_1^*)^2 = \min_{\pi \in \Pi(w_0, w_1)} \Sigma_{k,l} \pi_{kl} W_2^{\varphi^{p,F}}(\eta_{0}^k, \eta_{1}^\ell)^2.
        \]
        By Proposition \ref{prop:trivial_bundle_gaussian_distance}, for $\eta_j = N_\cE(m_j,\Sigma_j)$, 
        \[
        W_2^{\varphi^{p,F}}(\eta_0,\eta_1) = d_\cM(m_0,m_1)^2 + \mathrm{tr}\big(\Sigma_0 + \Sigma_1 - 2 \big(\Sigma_0^{\frac{1}{2}} (\Phi_{m_0,m_1}^{p,F})^{-1} \Sigma_1 \Phi_{m_0,m_1}^{p,F} \Sigma_0^{\frac{1}{2}}\big)^\frac{1}{2} \big),
        \]
        with $\Phi_{m_0,m_1}^{p,F} = (\varphi_{m_1}^{p,F})^{-1} \circ \varphi_{m_0}^{p,F} : \cE_{m_0} \to \cE_{m_1}$. We claim that 
        \[(
        \varphi_{m_1}^{p,F})^{-1} \circ \varphi_{m_0}^{p,F} = P_{m_1} \circ P_{m_0}^{-1};
        \]
        since the map on the right hand side does not depend on $F$, the result follows. The claim is verified by a direct calculation: for $v \in T_{m_0} \cM$, write $v = \sum_j v_j f_j(m_0)$. Then
        \[
        P_{m_1} \circ P_{m_0}^{-1}(v) = P_{m_1}\big(\sum_j v_j f_j\big) = \sum_j v_j f_j(m_1)
        \]
        and
        \[
        (\varphi_{m_1}^{p,F})^{-1} \circ \varphi_{m_0}^{p,F}(v) = (\varphi_{m_1}^{p,F})^{-1}\big((v_j)_j\big) = \sum_j v_j f_j(m_1).
        \]
\end{proof}

\begin{remark}
    The result and its proof show that, in the situation described in this subsection, the part of $W_2^{\gauss(\cE)}$ that involves comparing Gaussians on $\cE$ can be computed by moving each Gaussian to the basepoint; that is, $\eta = N_\cE(m,\Sigma)$ can be replaced with $(P_m^{-1})_\# \eta$. Independence of choice of frame in constructing the trivialization then just amounts to the fact that Wasserstein distance on a Euclidean space is invariant under isometries.
\end{remark}

\section{Experimental Studies}\label{section: applications}

In this section, we demonstrate applications of Wasserstein-type distances on some nonlinear domains. We first introduce a general setup and present procedures for estimating Gaussian mixture parameters from sample data. Then we move to examples involving simulated data on a unit sphere and the Kendall shape space of planar triangles. Finally, we present an application on real data in the preshape space of planar closed shapes. We note that the Python package \textbf{geomstats} (\cite{geomstats}) was used for geometric computations in these experiments.

\subsection{Basic Experimental Setup}\label{sub: general case}

Let $\cM$ be a d-dimensional parallelizable Riemmanian manifold. One can use the following procedure to define a global moving frame on $\cM$: 1) choose a distinguished point $p \in \cM$, 2) randomly generate $d$ linearly independent tangent vectors in $T_p\cM$; 3) calculate a tangent space PCA of those points, and take the tangent space principal components as an orthonormal basis $F$. Given a reference point $p$ and a reference frame $F$, we calculate an element of the global moving frame $F(m)$ by parallel transporting the tangent vectors in $F$ to $T_m(\cM)$ along the unique geodesic from $p$ to $m$ in $\cM$.

The next issue is the estimation of Gaussian mixture parameters from given samples on $\cM$. As mentioned earlier, there exist computational solutions for estimating mean, covariance, and weight parameters (see for example \cite{Hauberg2018DirectionalSW}) for mixture components but they can become costly especially on nonlinear domains. Computing a Gaussian mean on manifolds is already an iterative procedure, and when combined with an outer loop over mixture components it becomes prohibitive. Instead of expectation maximization-type solutions, we make use of clustering-based approaches: Riemmanian $K$-means (\cite{geomstats}) and $K$-Modes Kernel Mixtures (KMKM) clustering (\cite{deng-ICPR:2022}). Riemannian $K$-means is still an expensive method and we use it for low-dimensional examples. For estimation on high-dimensional shape manifolds, we utilize the $K$-mode clustering method.

    Riemmanian $K$-means is an extension of the well known $K$-means algorithm to nonlinear manifolds. Given a data set $X = \{ x_i \in \cM,\  i = 1,...,n\}$, the $K$-means algorithm returns a cluster index $\mathcal{I} = \{\ell_i \in (1,...,K), i =1,...,n\}$ representing the cluster assignments for each $x_i$. In order to obtain estimates for Gaussian mixture parameters, we treat the clusters found by $K$-means as samples from Gaussian mixture components. Let $\hat{m}_k$ denote the Fr\'{e}chet mean \cite{frechet} for the sample points in cluster $k$, and let $n_k$ denote the number of sample points in cluster $k$. We estimate the weight for component $k$ as $\hat{w}_k = n_k/n$, and calculate the tangent space covariance in the coordinates of the moving frame $F(m) = \{ f_j(m), j = 1,...,m\}$ as $\hat{\Sigma}_k = \frac{1}{n_k-1} V^t V$, where $V_{j,i} = \langle \log_{\hat{m}_k}(x_i), f_j(m_k) \rangle$. Our estimate of the Gaussian mixture is then $\hat{\mu} = \Sigma_{k=1}^{K} \hat{w}_k N_{\cE}(\hat{m}_k, \hat{\Sigma}_k )$.

    In the second approach for estimating model parameters, we use a $k$-modes kernel-mixture clustering algorithm presented in \cite{deng-ICPR:2022,deng-ISBI:2022}. This metric-based nonparametric procedure takes in the matrix of pairwise shape metric  between all data points and uses it to cluster individual shapes around estimated modes. Modes are significant local maxima of the underlying probability distribution and are computed as the points with the most neighbors. An important strength of this approach is that neighborhoods are determined automatically from the data, avoiding the need for any manual selection of hyperparameters. The output of this procedure are: (1) cluster label for most of input points, (2) points that are modes of each clusters, and (3) outliers that do not belong to any cluster. We treat each cluster as a mixture component, with its mode as the estimated mean $\hat{m}_k$. Furthermore, we compute a covariance matrix $\hat{\Sigma}_k$ in the tangent space of this estimated mean, with respect to the global frame. These steps are same as in the previous item and result in a Gaussian mixture $\hat{\mu}$.

Given estimates of Gaussian mixtures $\hat{\mu}_{i} = \sum_{k=1}^{K_i} w_k N(m_k, \Sigma_k)$, $i =0,1$, parameterized with respect to a moving frame $(p,F)$, we calculate the Wasserstein-type distance between them by using linear programming to solve:
\begin{equation}\label{eq: mixture distance}
    {MW_2^{\varphi_{p,F}}}(\hat{\mu}_{0},\hat{\mu}_{1})^2 = \underset{w \in \Pi(w_0,w_1)}{\min}  \sum_{k,\ell}^{K_0,K_1} w_{k\ell} W_2^{\cE}(\hat{\mu}_{0}^k,\hat{\mu}_{1}^\ell)^2
\end{equation} 
In the following we take specific examples of $\cM$ and demonstrate some applications of $MW_2$.

\subsection{Simulated data on Punctured 2-Sphere}\label{sub: sphere}

\begin{figure}
    \centering
    Moving Frame at Sample Points\\
    \includegraphics[width=.4\textwidth]{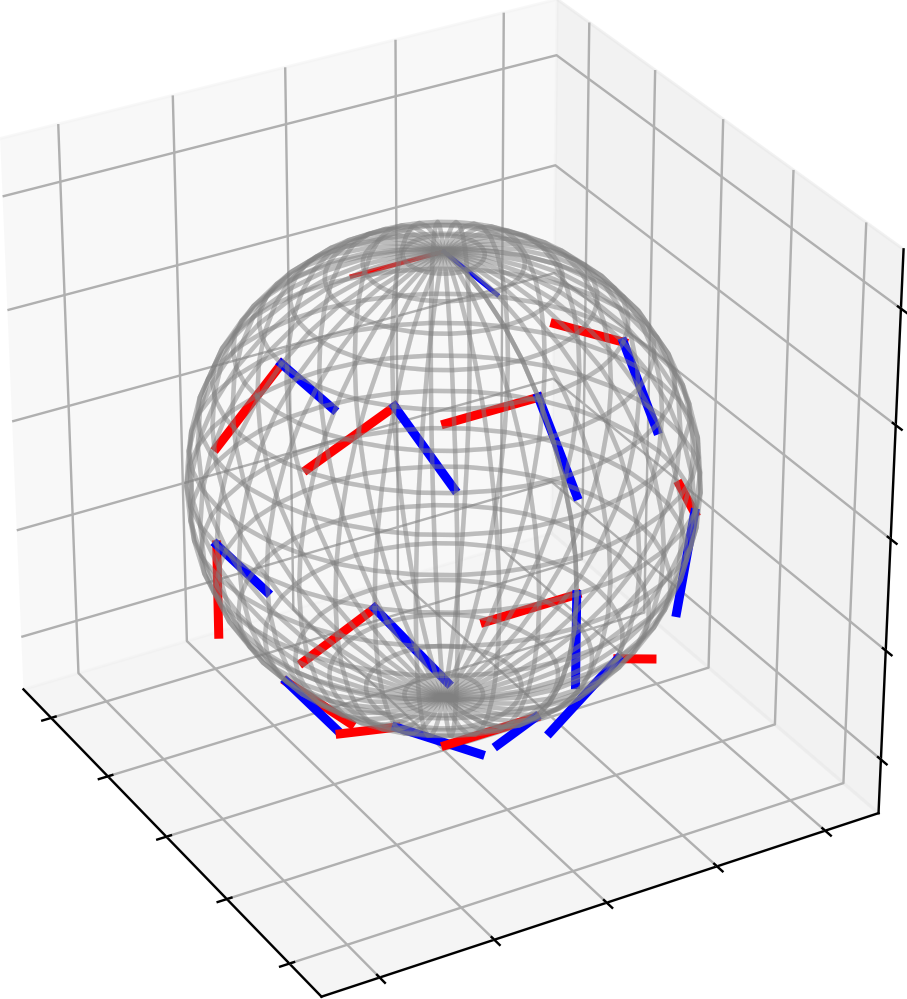}
    \includegraphics[width=.4\textwidth]{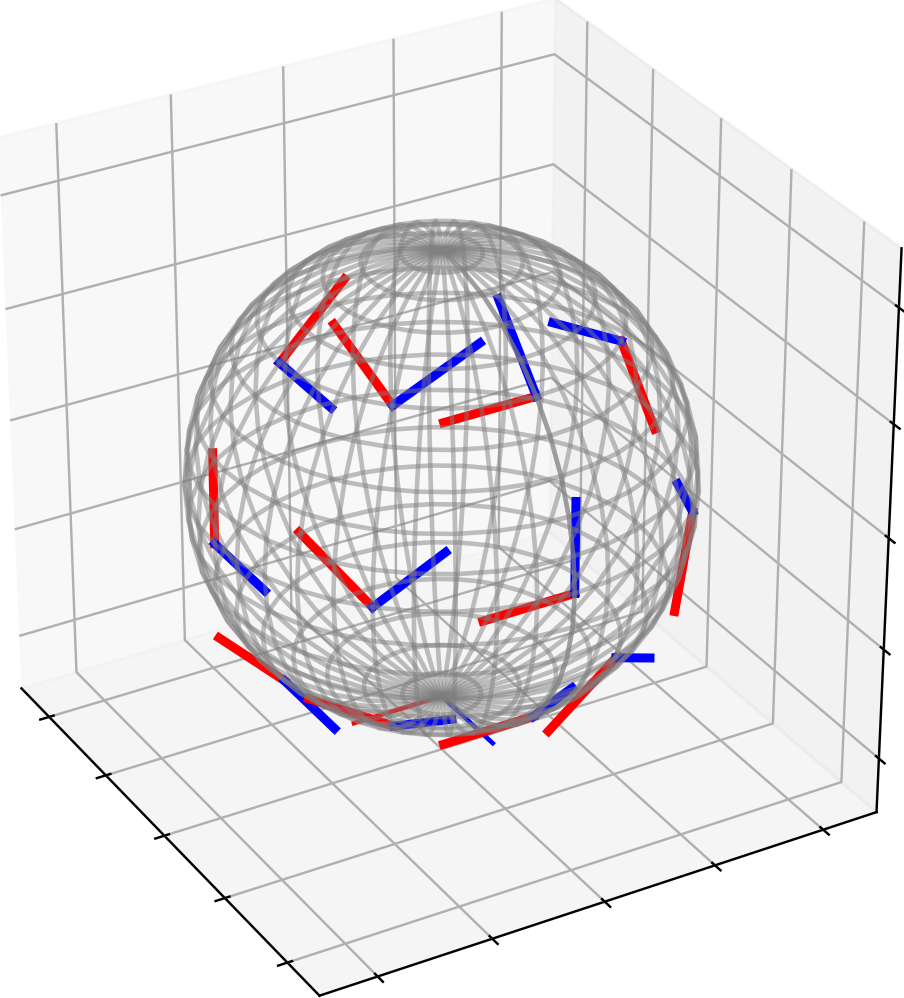}
    \caption{Examples of two reference frames transported to a select number of points on the punctured sphere, providing a consistent coordinate system over the entire manifold. Left: Moving frames, transported from reference point [0,0,1]; Right: Moving frames, transported from reference point [0,0,-1]}
    \label{fig: moving frame}
\end{figure}

As the first example, we consider the domain $\mathbb{S}^2$ and simulate several Gaussian mixtures on $\s^2$ to demonstrate the MW metric. In this experiment, we select either $p = [0,0,1]^T$ or $-p = [0,0,-1]^T$ as a distinguished point and we set $F = [[-1,0,0]^T, [0,-1,0]^T] \in \mathbb{R}^{3\times2}$ as a reference frame---we then assume that Gaussian mixtures have no component in the tangent space of $-p$ (respectively, in the tangent space of $p$). Figure \ref{fig: moving frame} (left) shows the reference frame transported to several points on the $\s^2$. The right side shows a similar plot when the distinguished point is $[0,0,-1]$ instead, illustrating the dependence of the moving frame on basepoint. 

To generate sample data $X$ from a Gaussian with mean $m$ and covariance $\Sigma$ on $\s^2$, we use the following steps. Recall that the moving frame at $m$ is given by $F(m)$.  We generate a set $V = \{v_i \sim N(0, \Sigma), \Sigma \in \mathsf{Sym}_d^+, i = 1,...,N\}$. Then, we compute $X_i = \exp_m(\big(\langle v_i,f_j(m), \rangle_m\big)_{j=1}^d)$ to obtain samples from a Gaussian. Further,  to generate samples from a Gaussian mixture, we just generate samples from individual Gaussians with frequencies proportional to their weights. Given a set of points $X \in \s^2$, represented in terms of their extrinsic coordinates $X \in \mathbb{R}^{N \times 3}$, one can fit a Gaussian mixture model to those points using the Riemmanian K-means approach described above. Figure \ref{fig:fit frame} illustrates this process for a chosen moving frame. The left panel shows samples from a Gaussian mixture with two components, the middle panel shows a data clustering using colors, and the right panel shows an moving frame at $p$ transported to the Fr\'{e}chet means of the clusters. We use these moving frames to express the tangent-space covariance of the data in each cluster, as discussed in Section \ref{sub: general case}. The relative weights of the components $\{w_k\}$ are estimated using the relative frequencies. The resulting Gaussian mixture is denoted by $\hat{\mu} = \Sigma_{k=1}^K \hat{w}_k N(\hat{m}_k, \hat{\Sigma}_k)$.

Given two such estimated Gaussian mixtures $\hat{\mu}_{i}$, $i =0,1$, parameterized with respect to a moving frame $(p,F)$, we calculate the Wasserstein-type distance between them using linear programming to solve Eqn.~\ref{eq: mixture distance}.

 \begin{figure}
    \centering
    Identifying Gaussian Mixture Parameters with Respect to Moving frame\\
    \includegraphics[width=.32\textwidth]{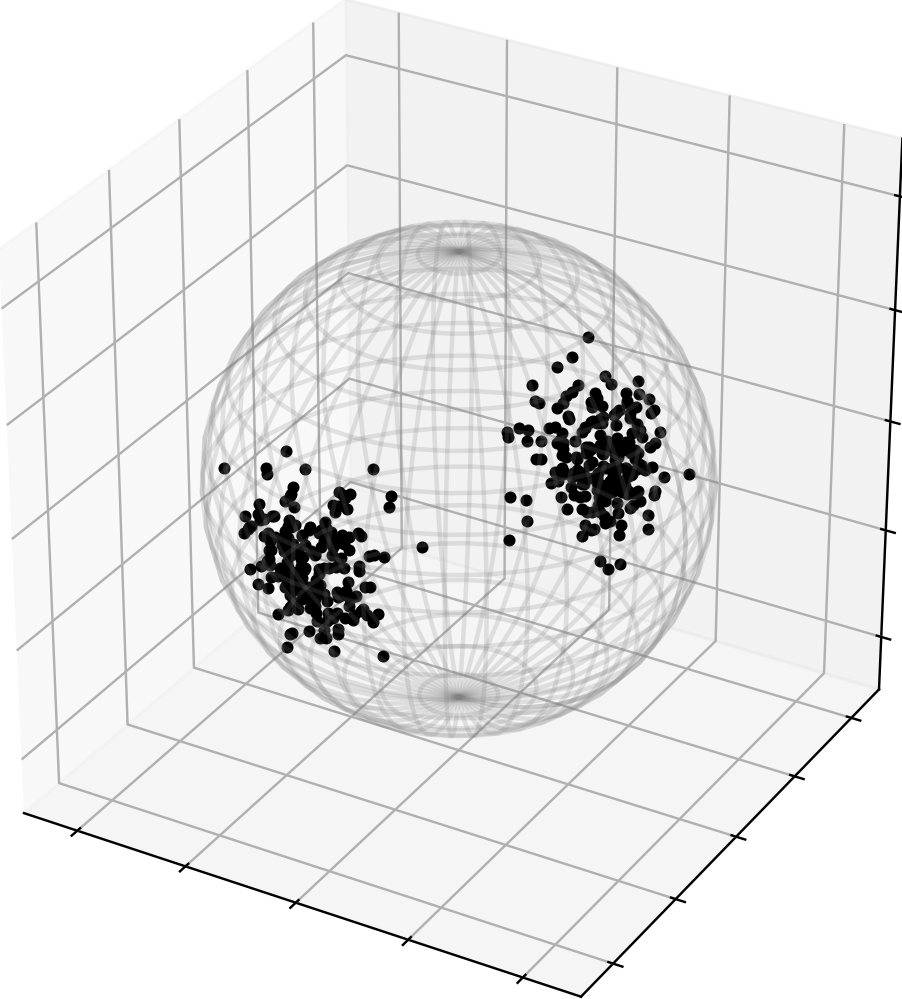}
    \includegraphics[width=.32\textwidth]{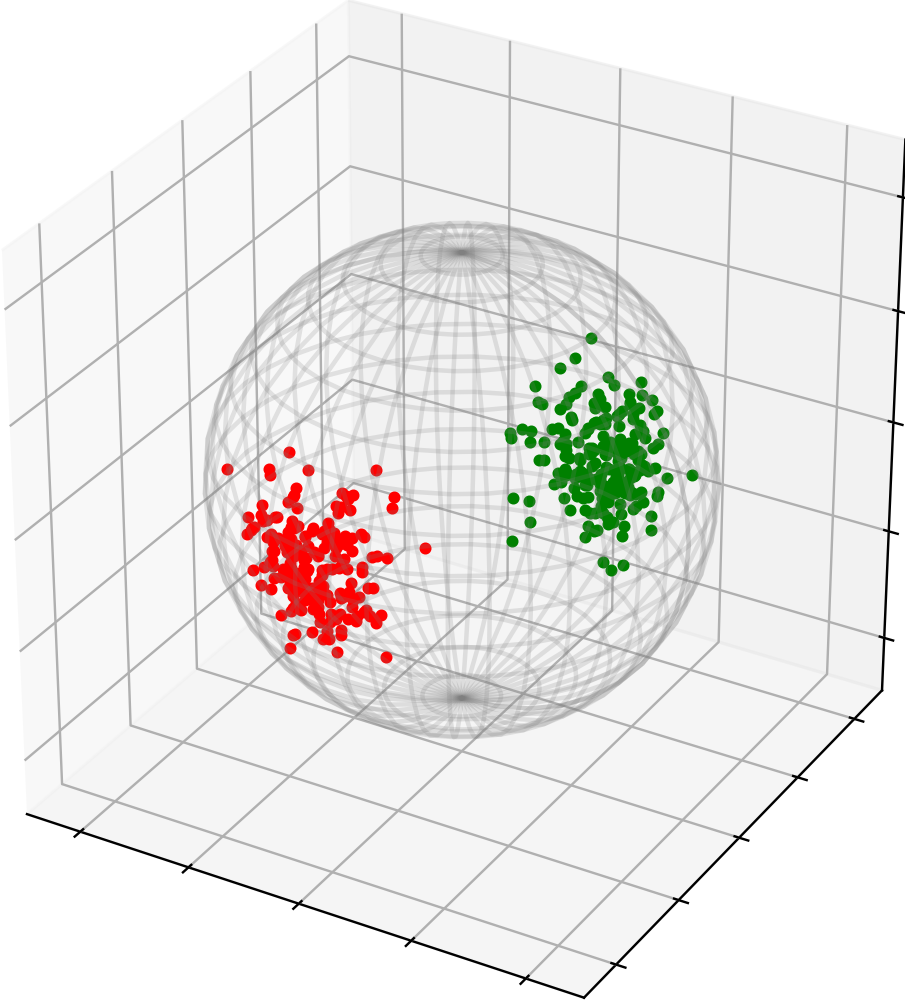}
    \includegraphics[width=.32\textwidth]{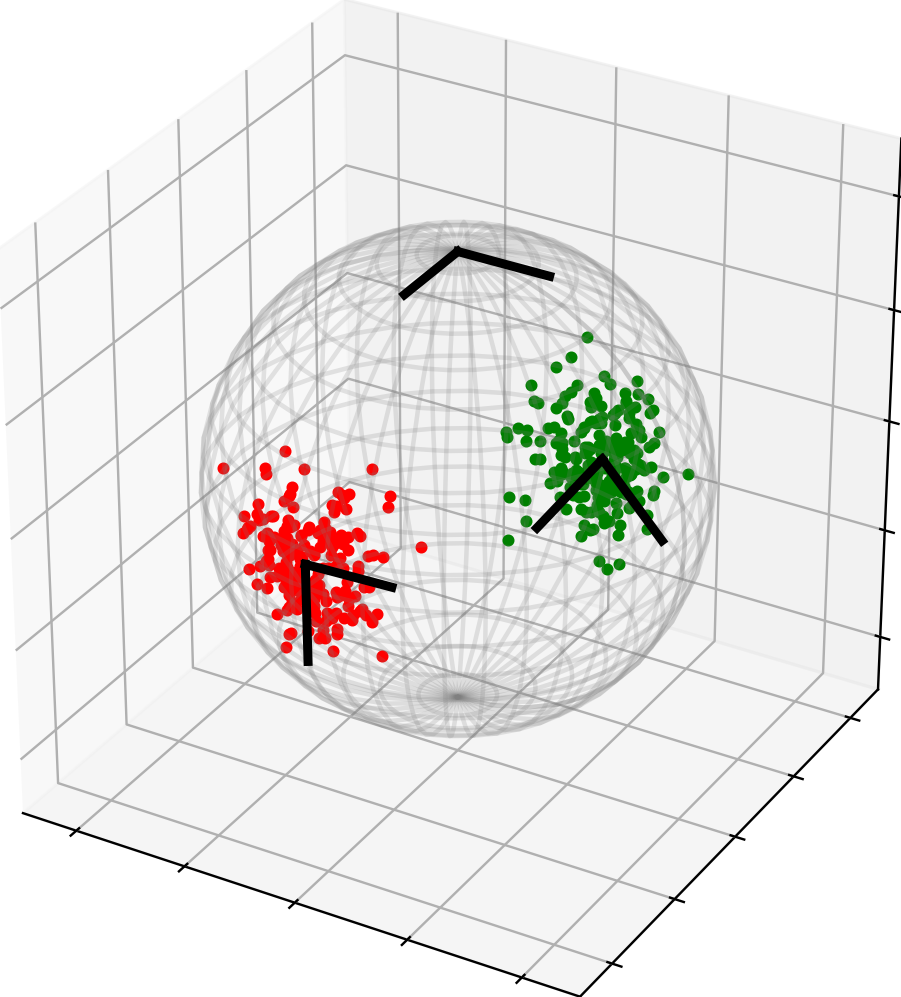}
    \caption{Steps for estimating Gaussian mixture parameters with respect to a moving frame on a punctured $\s^2$. Left: Sphere with sample data; Middle: Sample data colored by K-means cluster assignment; Right: Moving frame transported to Fr\'{e}chet means of clusters.}
    \label{fig:fit frame}
\end{figure}

Figure \ref{fig: experiment2} shows several examples of  evaluations of the Wasserstein distance between Gaussian mixtures estimated from the same sets of data, but with different moving frames. In particular, the two moving frames used the distinguished points $[0,0,1]^T$ and $[0,0,-1]^T$, respectively. The titles of the plots contain the Wasserstein-type distance calculated with respect to the moving frame, and color represents a Gaussian mixture. Example 1 shows a case where the Gaussian mixtures differ significantly, in terms of means, covariances and weights. Consequently, the Wasserstein-type distance between them is large: $0.6098$. Using a different moving frame, we obtain a similar value of $0.6071$ emphasizing the relative stability of this computation with respect to the choice of moving frame. In Example 2, the two Gaussian mixtures are relatively similar: they have the same means and covariances, and differ only in weights. In both cases, we observe that the two moving frames result only in small numerical differences in the distances.

\begin{figure}
    \centering
     \begin{tabular}{|cc|cc|}
     \hline
            \includegraphics[width=.223\textwidth]{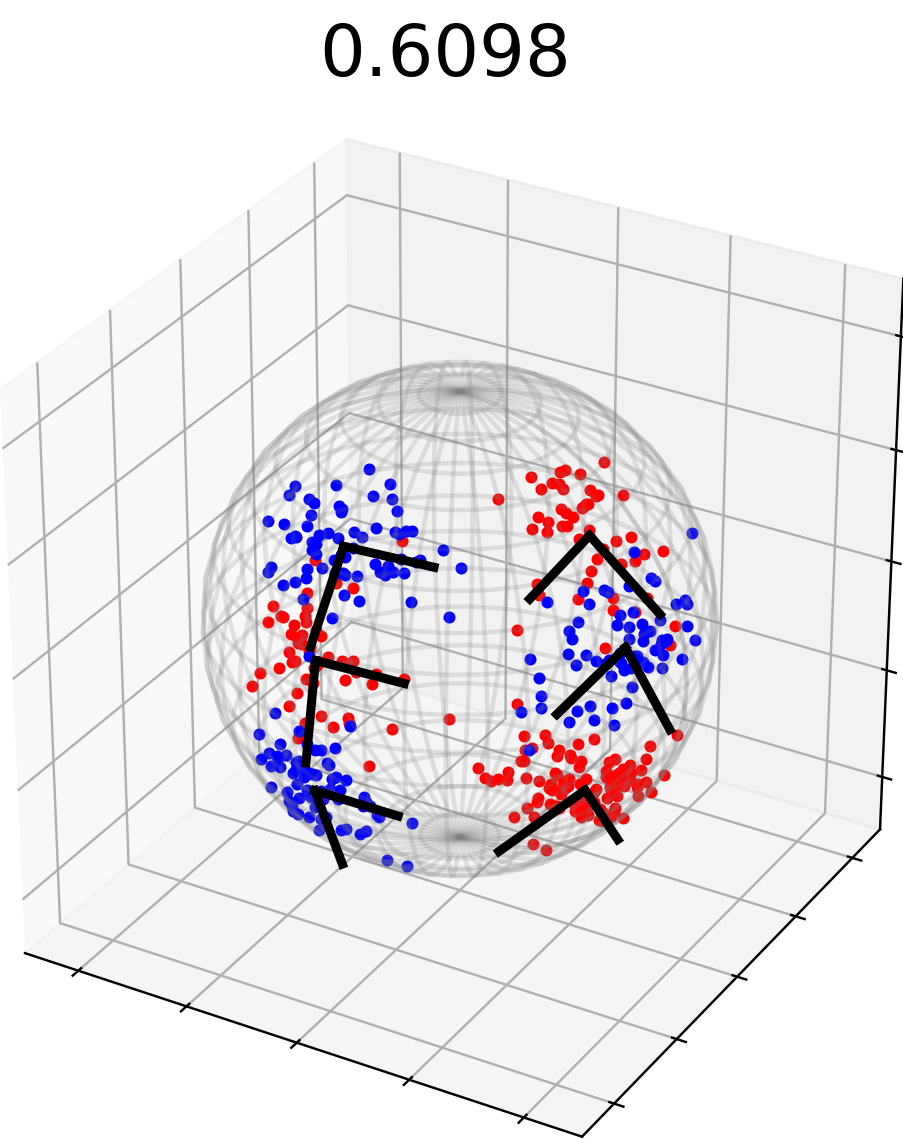} &
            \includegraphics[width=.223\textwidth]{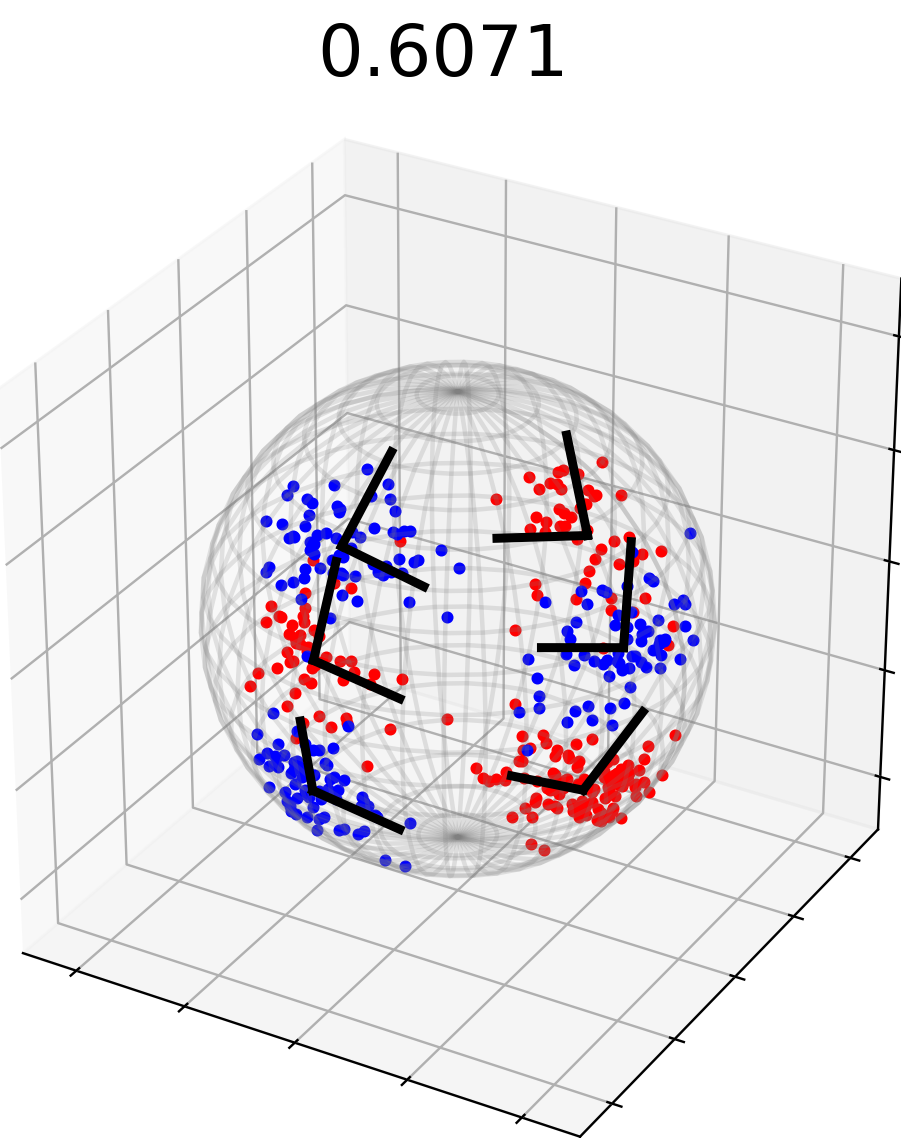} &
            \includegraphics[width=.223\textwidth]{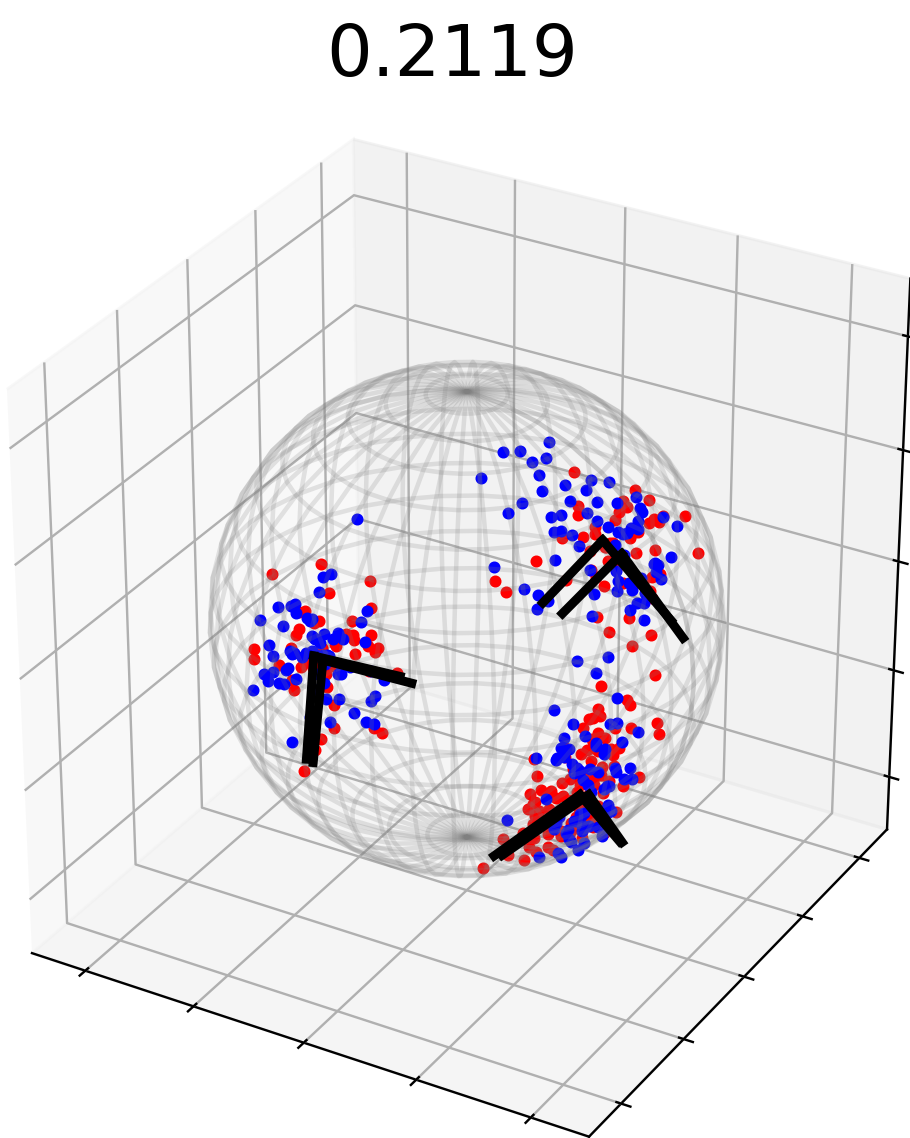} &
            \includegraphics[width=.223\textwidth]{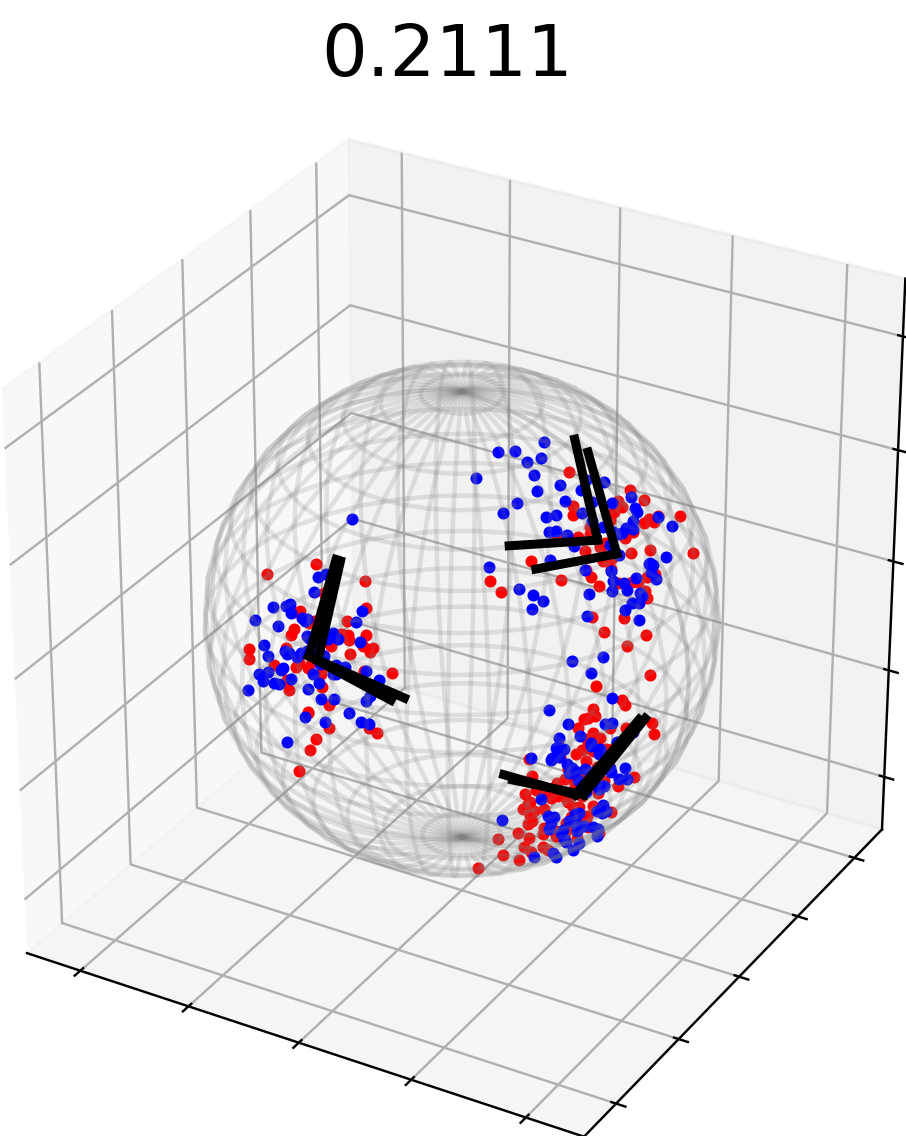} \\
            \multicolumn{2}{|c|}{Example 1} & \multicolumn{2}{c|}{Example 2} \\
            \hline
            \end{tabular}
    \caption{Examples of Wasserstein-type Distances between Gaussian mixtures on the tangent bundle of a punctured $\s^2$ computed using two different moving frames. Color denotes a Gaussian mixture; plot titles show calculated distances. Example 1; Samples from two Gaussian mixtures with different means, covariances, and weights. Example 2: Samples from two Gaussian mixtures with same means and covariances, but different weights.
    }
    \label{fig: experiment2}
\end{figure}

\subsection{Comparing Shape Populations of Triangles}

In this paper we are interested in shape spaces of objects as the domains for imposing and comparing probability distributions. In other words, we want to compare shape populations, modeled as Gaussian mixtures, using Wasserstein-type distances. Recall that shape is a geometric property that is invariant to rotation, translation, and scaling. Before we consider shapes of planar contours, we analyze a simpler case analyzing shapes of planar triangles. The shape space of planar triangles is denoted by the quotient space $\s^3/SO(2)$ which can be further identified with $\s^2$~\cite{Kendall_triangle}. Hence, our analysis of triangle shapes is performed on a (punctured) $\s^2$. 
The steps of the computation---establishing a moving frame, estimating Gaussian mixture parameters, and calculating Wasserstein-type distances between Gaussian mixtures---are similar to the previous subsection. 

Now we provide details for the $\s^2$ representation of planar triangles. 
Let $\{x_i \in \real^2, i = 1,2,3\}$ be the set of all planar traingles. We identify $x_i$ with elements $z_i\in \mathbb{C}$, such that $z_i = (x_{i,1} + j x_{i,2})$, $j = \sqrt{-1}$. After we remove rigid translations and global scaling, we obtain the set
$\mathcal{P}_T = \{ z \in \mathbb{C}^3 | \frac{1}{k}\sum_{i=1}^3 z_i = 0, \|z\| = 1 \}$. 
$\mathcal{P}_T$ is referred to as the {\it preshape space}, because we have not yet removed rigid rotations. The shape space of 2D triangles is thus:
${\cal S}_T = \{ [z] = \{e^{j\varphi}z| \varphi \in \mathbb{S}^1, z \in \mathcal{P}_T \}\}$.
An element $[z] \in {\cal S}_T$ corresponds to a unique triangular shape, with the parameter $\varphi$ denoting its  rotation with respect to a chosen coordinate system. Any $[z] \in {\cal S}_T$ can be isometrically mapped to a point on $\mathbb{S}^2$ using the Hopf Fibration presented in Appendix~\ref{hopf equations}, and we use this fact to estimate parameters for and calculate Wasserstein-type distances between Gaussian mixtures in ${\cal S}_T$.

\begin{figure}
    \centering
    \includegraphics[width = .99\textwidth]{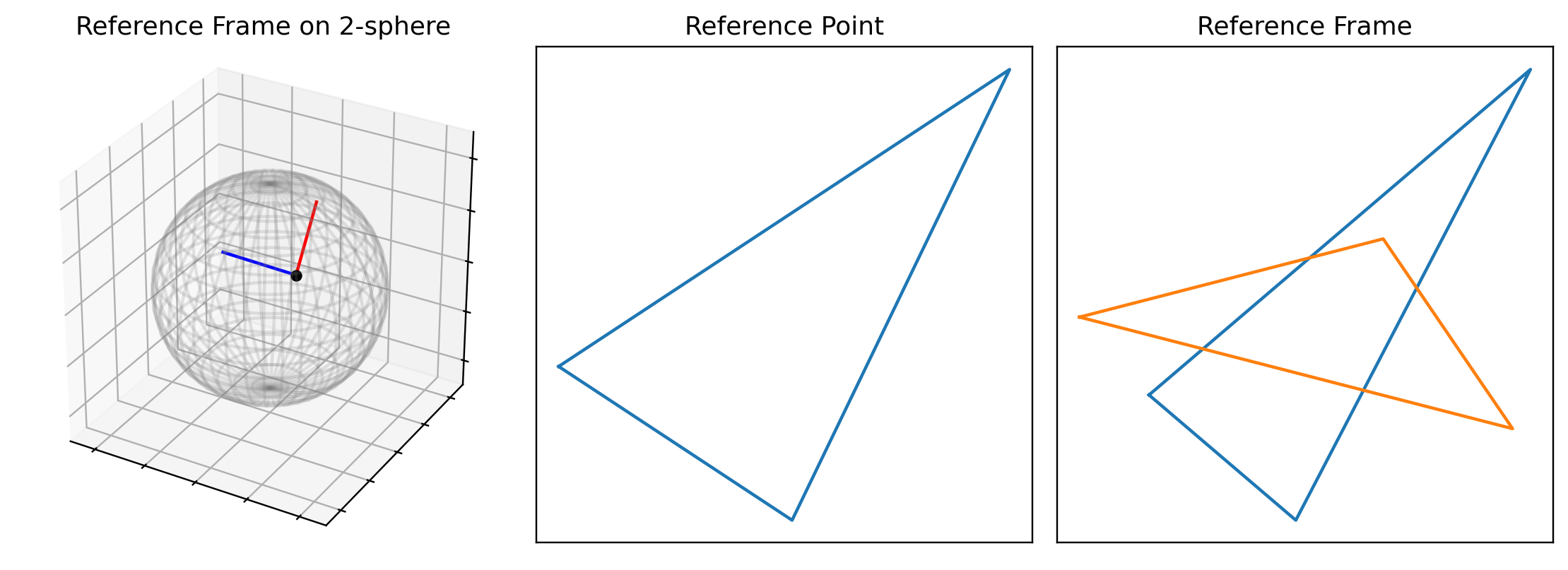}
    \caption{A reference point and a reference frame for Kendall shape space of 2D triangles. Left: Representation of moving frame under Hopf map on $\s^2$; Middle: Triangular representation of reference point; Right: Triangular representations of tangent vectors in reference frame.}
    \label{fig: triangle reference frame}
\end{figure}

Similar to the previous section, we arbitrarily select a reference point $p \in{\cal S}_T$, use the Hopf fibration to map it to $\tilde{p} \in \mathbb{S}^2$, and generate a random basis $F$ for the tangent space $T_{\tilde{p}} \s^2$. One such moving frame is shown in the left panel of Fig.~\ref{fig: triangle reference frame}, along with the triangular representations of the reference point, and the tangent vectors in the reference frame, shown in the middle and left panels of Fig.~\ref{fig: triangle reference frame}, respectively. Next, we generate random samples from Gaussian mixtures on $\mathbb{S}^2$ using the procedure outlined before. Given sample data and a moving frame for $\mathbb{S}^2$, we use the Riemannian K-means algorithm described in Section \ref{sub: general case} to estimate Gaussian mixture parameters from the sample data. Plots of the sample data, colored by cluster assignment, are presented in the first and third plots of Fig.~\ref{fig: triangle sample data}. The accompanying panels show these colored points as planar triangles to visualize clustered shapes. Given parameter estimates, we can calculate the Wasserstein-type distance using Eqn.~\ref{eq: mixture distance}.

\begin{figure}
    \centering
    \includegraphics[width=.99\textwidth]{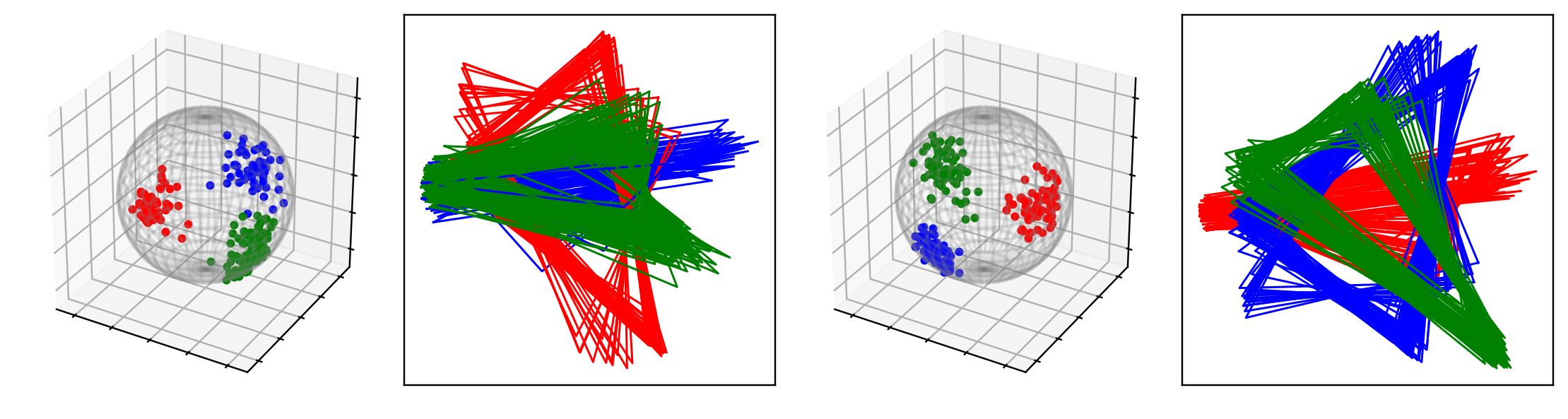}
    \caption{Samples from two Gaussian mixtures defined on the tangent bundle of ${\cal S}_T$, colored by cluster assignment. The triangles in panel 2 (resp. 4) correspond to the points on the sphere in panel 1 (resp. 3).   }
    \label{fig: triangle sample data}
\end{figure}

\begin{figure}
    \centering
    \includegraphics[width=.99\textwidth]{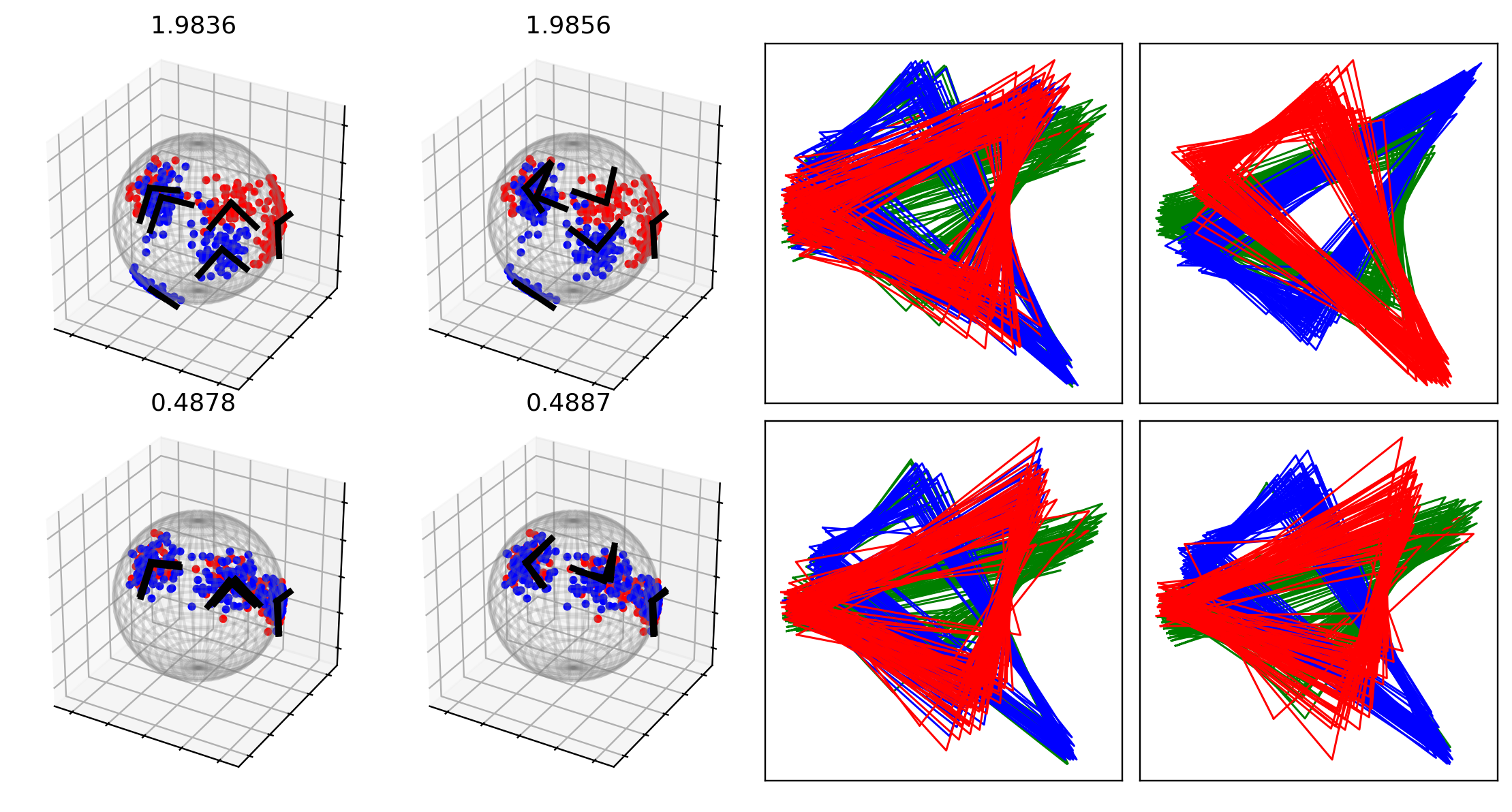}
    \caption{Comparison of Wasserstein-type Distances between Gaussian mixtures on ${\cal S}_T$, estimated with the same procedure from the same data, using two different moving frames. Top Left; Gaussian mixtures with different means, covariances, and weights. Bottom Left; Gaussian mixtures with same means and covariances, but different weights. Top Right: Triangular representations of sample points from Gaussian mixtures for experiment in top row (left corresponds to blue, right corresponds to red); Bottom Right: Triangular representations of sample points from Gaussian mixtures for experiment in bottom row (left corresponds to blue, right corresponds to red) }
    \label{fig: experiment4}
\end{figure}

Fig.~\ref{fig: experiment4}  presents some examples of comparing populations of planar triangles using our Wasserstein metric. The plot titles on the top state the Wasserstein-type distance calculated with respect to the chosen moving frames, and the point colors (red vs blue) label the Gaussian mixtures. The two right panels display the triangle shapes of these points in these Gaussian mixtures. The top rows present an example where the Gaussian mixtures differ significantly, in terms of means, covariances and weights, while the example in the bottom row has two Gaussian mixtures that are relatively similar: they have the same means and covariances, and differ only in weights. In the first case the Wasserstein-type distances that are relatively large and are relatively stable with respect to choice of moving frame. The distances are naturally smaller in the second example.

\subsection{Comparing Shape Populations of Nanoparticles}\label{sub: nanoparticles}

In this section, we focus on capturing, quantifying, and comparing shapes of silver nanoparticles observed in industrial manufacturing. 
Silver nanoparticles are produced through a solution phase process, leveraging the radiochemistry of electron beam-induced nanoparticle growth (additional information is available in the paper~\cite{woehl-etal:2012}). Over the course of the synthesis, the shapes of these nanoparticles evolve due to chemical reactions such as atomic addition to particles and particle merging. 
This solution phase process is captured using in situ transmission electron microscopy over a span of 62 seconds, with images taken at a rate of one image per second. Each image, on average, displays around 280 silver nanoparticles. The outlines of these nanoparticles are extracted using segmentation methodology presented in~\cite{vo-park:2018}. Each image in the video is pre-processed (including a step which involves removing particles below a certain size threshold) and segmented, returning a set of planar closed curves denoting the outlines of the individual nanoparticles in that frame. The left panel of Fig.~\ref{fig: nanoparticle_clustering_1} shows some examples of extracted contours in imaged frames from the data set.

\begin{figure}
    \begin{tabular}{ccc}
    \centering
    \includegraphics[width=.31\textwidth]{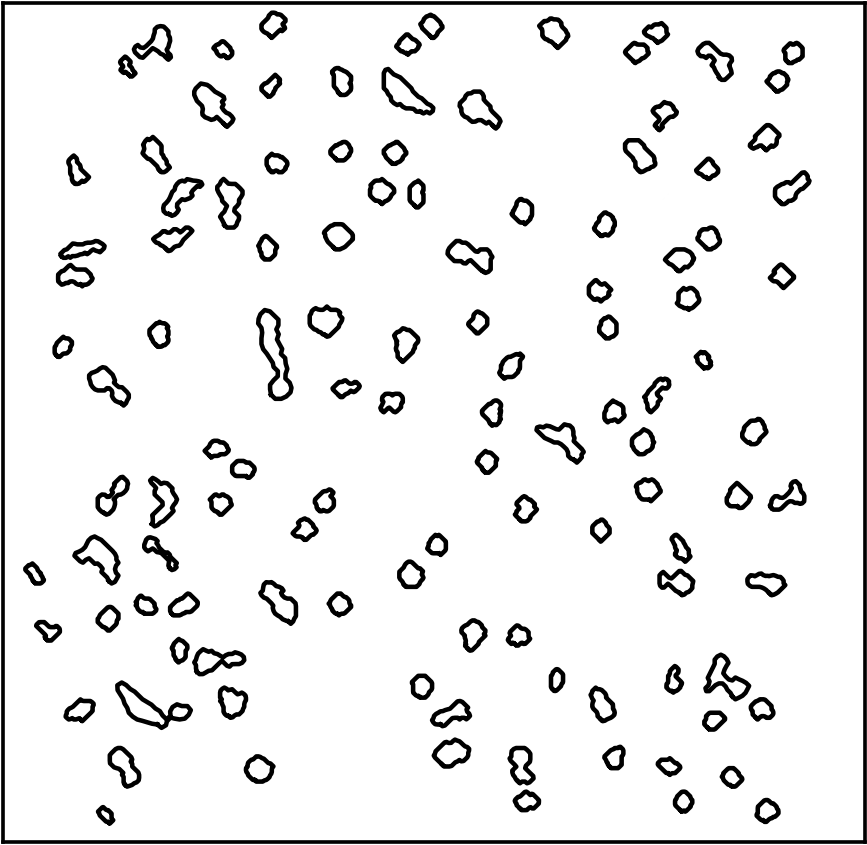} 
    \includegraphics[width=.31\textwidth]{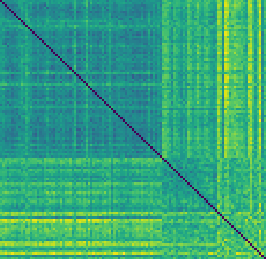}
    \includegraphics[width=.31\textwidth]{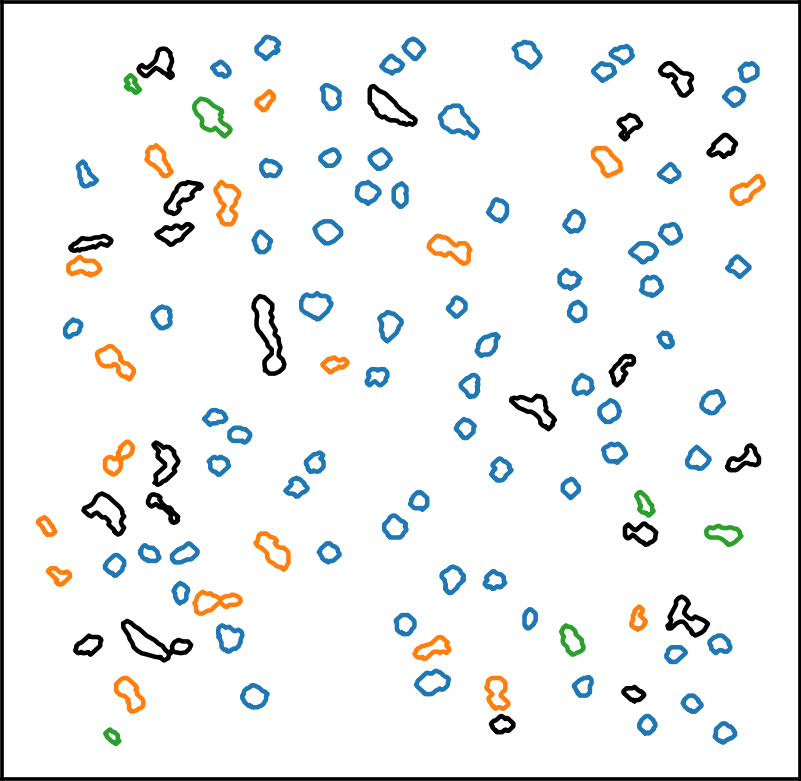}
    \end{tabular}
    \caption{$K$-Modes clustering. Left: Particles in image at $t=38$ before clustering. Middle: pairwise shape distance matrix for all particles that image, sorted by clusters. Right: Particles in image at $t= 38$ colored according to their cluster assignment.}
    \label{fig: nanoparticle_clustering_1}
\end{figure}

In nanomanufacturing, the shapes of nanoparticles are indicators of the material properties. One hypothesis is that constituent nanoparticle shapes can control the resulting material's physical properties. Thus, a vital tool is to model and quantify the particle shape populations associated with individual images and compare them across images. In any image, we treat extracted closed curves as samples from a probability distribution on a shape space, with each curve's location, orientation, and scale treated as nuisance variables. A brief introduction to the shape space ${\cal S}_c = \s^{(2T-1)}/SO(2)$ is presented in Appendix~\ref{appendixA}. Here each contour is represented by an array ${\bf q} \in \real^{2\times T}$ made up of equispaced points on the SRVF curve of the contour. The appendix also defines a shape metric $d_s$, and the computation of  sample statistics (mean and covariance) of a set of shapes under $d_s$. A set of shapes rotationally aligned to their mean can be treated as points on the preshape space, the unit sphere $\s^{(2T-1)}$. On this unit sphere, we take the reference point $p = [1,0,...,0]^T$, and select $F = [[0,1,0,...,0]^T, ..., [0,0,...,0,1]^T]
$ in order to define the moving frame $(p,F)$. We estimate and compare shape distributions with respect to this moving frame.

\paragraph{Estimating Gaussian Mixture Parameters}
Our approach is to model the distribution of shapes in a given video image as a mixture of Gaussians on the preshape space $\s^{(2T-1)}$. There are several steps that make up this approach. 

Given a set of particle contours extracted from a video image, the first step is to cluster them according to their shapes. Fig.~\ref{fig: nanoparticle_clustering_1} shows the process of applying the mode-based clustering process discussed in Section~\ref{sub: general case}. The left panel shows the video image corresponding to time $38$ in the data set, prior to clustering. The middle panel shows the within frame pairwise shape distance matrix, sorted by cluster for these shapes. Green denotes smaller distances, and yellow denotes larger distances. One can see that more than half of the particles fall into the largest cluster. The algorithm automatically selects three clusters and labels the remaining particles as outliers. The right panel shows these particles colored according to their assigned clusters, in blue, orange, and green. The outliers are drawn in black.  

Given a clustering of the shapes in video image $t$, we compute the shape mean $m_k^t$ and tangent space covariance $\Sigma_k^t$ of shapes in cluster $k$, as described in Appendix \ref{appendixA}, and estimate the Gaussian component corresponding to cluster $k$ as $\mu_k^t = N_{\cE}({m}_k^t, {\Sigma}_k^t$). Letting $n_k^t$ be the number of shapes in cluster $k$, and setting $n^t = \sum_k n_k$, we estimate component weights as $w_k^t = \frac{n_k}{n}$. Thus, for each video image, indexed by time $t$, we obtain a Gaussian mixture  $\mu_t = \Sigma_{k=1}^{K_t} w^k N(m^k, \Sigma_k)$.

We then calculate the pairwise Wasserstein-type distances between distributions associated with all images in the video using Eqn.~\ref{eq: mixture distance}. The resulting distance matrix is presented in the leftmost panel of Fig.~\ref{fig:heatmap and mds}.

\begin{figure}
    \includegraphics[width=.99\textwidth]{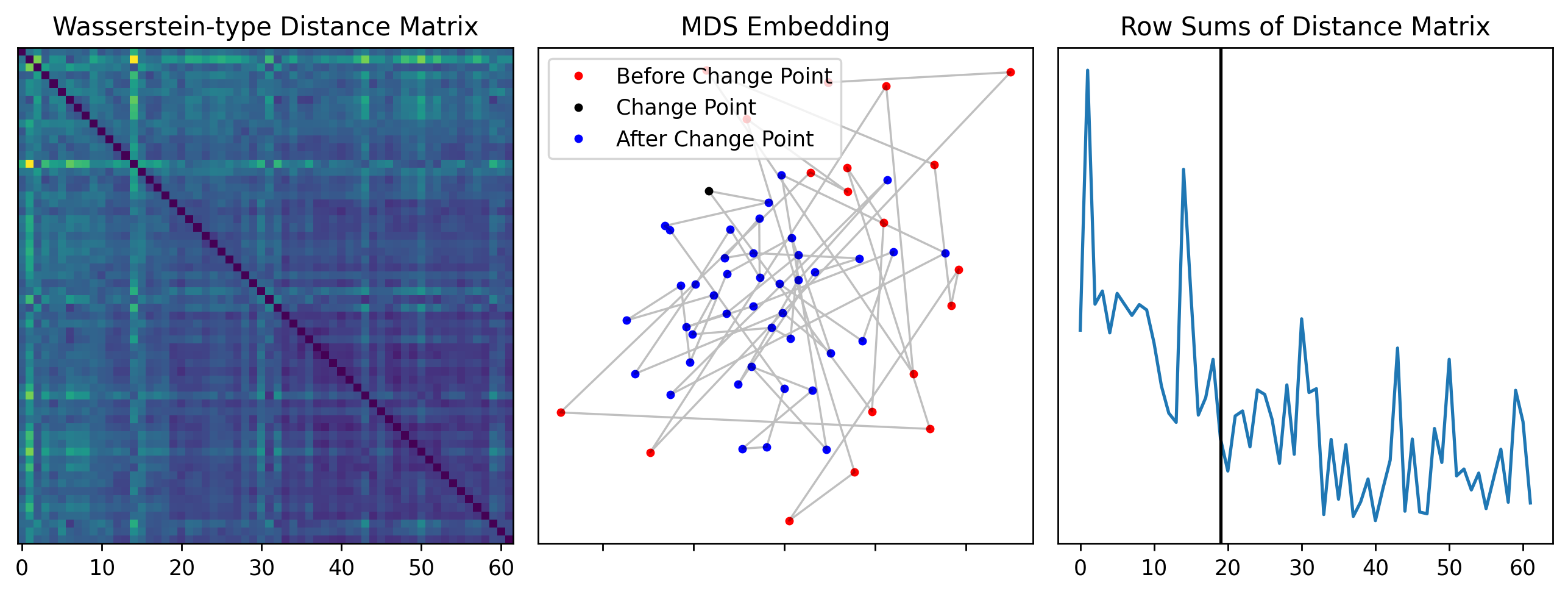}
    \caption{Left: Pairwise Wasserstein-type distance Matrix between time indexed populations, $\mu_t$. Middle: 2D MDS plot based on Wasserstein-type distance matrix, colored by relationship of $t$ to the estimated change point. Right: Sums of rows of the distance Matrix plotted versus $t$, with change point marked by black line.}
    \label{fig:heatmap and mds}
\end{figure}

\paragraph{Change-Point Detection in Time-Series of Shape Populations}
Inspection of the Wasserstein-type distance matrix suggests that the shape distributions in the latter part of the process appear to be closer to each other than to those in the earlier part. In order to test this statistically, we use the {\it E-divisive} procedure for change point detection (\cite{matteson2013nonparametric}). This method is particularly well-suited to our situation, as it only depends on distances between populations, requires minimal assumptions, and provides a straightforward method for testing the hypothesis of no additional change points.

The {\it E-divisive} algorithm is an iterative procedure where candidate change points are selected as the time point which maximizes the two-sample energy statistic (\cite{energy_test}) produced by splitting the data at that time point, and the statistical significance of the candidate change point is inferred on the basis of a permutation test based on the same two-sample energy statistic. The algorithm has several hyperparameters: (1) $R$ the number of permutations, (2) $p_0$ the p-value for each permutation test, (3) $\alpha \in (0,2)$, the power of distance in the test statistics, and (4) $min\_size$, the minimum segment length to be considered for bisection. We applied the {\it E-divisive} algorithm to our Wasserstein-type distance matrix, with parameters $p_0 = 0.0125$, $R = 499$, $min\_size = 12$, and $\alpha = 1$. The algorithm found one statistically significant change point at time point $t=19$, with $p$-value $0.002$. The next candidate change point occurs at time point $t=37$, but is rejected with a p-value of $0.034$, thereby terminating the algorithm.     
These findings lead us to reject the hypothesis of no change points, and provide support for the original observation that the distributions of shapes in the latter part of the manufacturing process differ from those in the earlier part.



\paragraph{ Modeling Shape Dynamics}

\begin{figure}
    \centering
    \includegraphics[width=.525\textwidth]{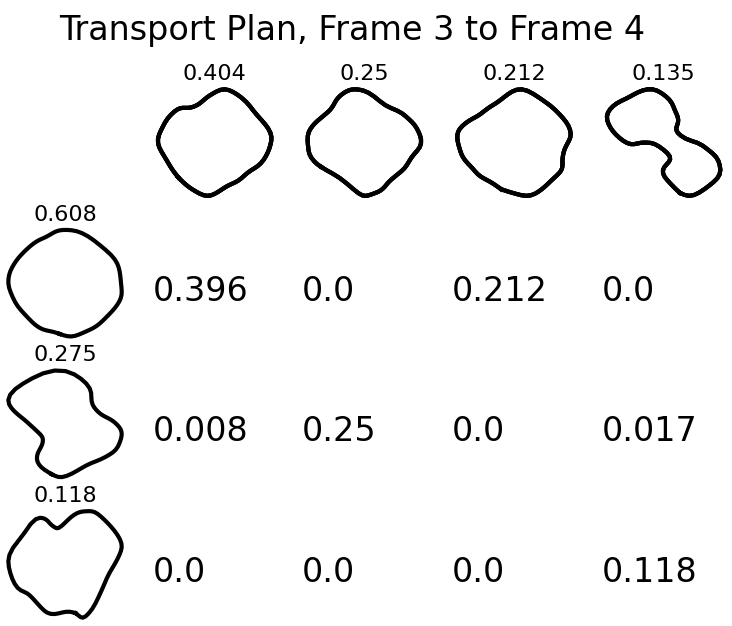}
    \includegraphics[width=.455\textwidth]{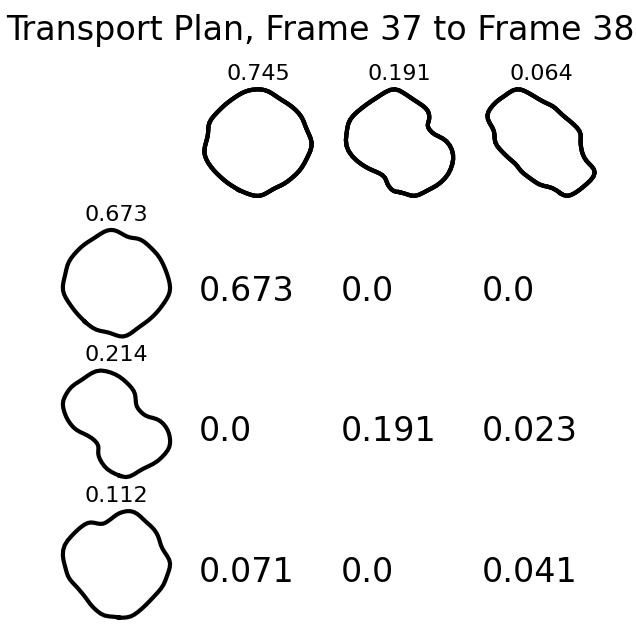}
    \caption{Visual representation of inter-frame population transport plans. The means of Gaussian mixture components are displayed in the margins. Numbers above mean shapes correspond to estimated weight for that component; array of numbers in center of plot are the transport plans, they represent the amount of mass transported between the corresponding marginal components.}
    \label{fig: transport plans}
\end{figure}

In addition to testing \textit{if} a change point exists, it may also be desirable to describe \textit{how} the distributions change over time. The optimal transport plans between shape distributions associated with successive video images provide a natural way to quantify the transitions, and the preshape representation of the means combined with the mixture assumption can make the transport plans simple to interpret. Recall that, for Gaussian mixtures $\mu_{t} = \Sigma_{k=1}^{K_t} w_k N(m_k, \Sigma_k)$ and $\mu_{t+1}= \Sigma_{\ell=1}^{K_t} w_\ell N(m_\ell, \Sigma_\ell)$ associated with images at time $t$ and $t+1$, the optimal transport plan is given by; 

\begin{equation*}
    \pi = \arg \min_{w \in \Pi(w_0,w_1)}  \sum_{k,\ell}^{K_0,K_1} w_{k\ell} W_2^{\cE}(\hat{\mu}_{0}^k,\hat{\mu}_{1}^\ell)^2
\end{equation*}

For example, in the left panel of Figure \ref{fig: transport plans}, we see the optimal transport plan $\pi$ between shape distributions of associated with video images at times $t=3$ and $t=4$. The mean shapes for the source distribution (frame 3) are drawn on the left column, and mean shapes for the target distribution (frame 4) along the top. The weights for the Gaussian mixture components are written above their corresponding mean shape. The optimal transport plan is presented in the rows/columns of the plot. This matrix shows how much mass is transported from each component in the source distribution, to each component in the target distribution. 

These transition matrices can be used to analyze the dynamics of nanoparticle shapes during the manufacturing process. For example, the mass in the cluster with the most circular mean in image $t=3$ (with weight 0.608) ends up being split between two clusters in the transition to image at $t=4$. On the other hand, the mass in the cluster with the most circular mean at $t=37$ (with weight 0.673) is all transported to a single cluster at $t=38$, which also gains most of the mass from another component as well. This dynamic seems to hold for the process in general; transport plans between consecutive frames show a tendency towards distributions with more mass being centered around more circular shapes, especially in the later part of the process. 

\section{Conclusion \& Discussion}\label{sec:conclusion}
This paper develops a framework for representing and comparing populations (probability distributions) on certain nonlinear domains. The domains of interest are trivial vector bundles, with a focus on (finite-dimensional) parallelizable Riemannian manifolds. The populations are represented by mixtures of Gaussians on tangent bundles of these manifolds, and the populations are compared using a convenient expression for a Wasserstein-type distance. This distance is called Wasserstein-type because the search for optimal couplings is restricted to joint mixtures of Gaussians. The paper demonstrates this framework for several examples involving simulated and real data. It uses simulated populations on a unit sphere $\s^2$ to explain how one can compare distributions. The process also involves steps for modeling  populations using mixtures of Gaussians and estimating mixture parameters using clustering methods. 

An important application of this framework is in comparing populations of shapes using image data. This paper uses videos of nanoparticles during a manufacturing process to pursue this application. One associates particles in an imaged frame as samples from a shape population, and compares different image frames using the Wasserstein-type metric between associated shape populations. It further develops a procedure for detecting change point in temporal evolution of shape population during manufacturing. 

In the future, we would like to adapt this framework for solving shape regression problems. In these problems, the shape populations of objects serve as response variables with some Euclidean input variable influencing the outcomes. The goal is to develop statistical models capturing the relationships between input variables and output shape populations. 

    \bibliographystyle{plain}
    \bibliography{refs} 

\begin{thebibliography}{10}

\bibitem{Ambrosio2013}
L.~Ambrosio and N.~Gigli.
\newblock {\em A User's Guide to Optimal Transport}, pages 1--155.
\newblock Springer Berlin Heidelberg, Berlin, Heidelberg, 2013.

\bibitem{bauer2022elastic}
M.~Bauer, N.~Charon, E.~Klassen, S.~Kurtek, T.~Needham, and T.~Pierron.
\newblock Elastic metrics on spaces of euclidean curves: Theory and algorithms.
\newblock {\em arXiv preprint arXiv:2209.09862}, 2022.

\bibitem{benedetti2019framing}
R.~Benedetti and P.~Lisca.
\newblock Framing 3-manifolds with bare hands.
\newblock {\em L’Enseignement Math{\'e}matique}, 64(3):395--413, 2019.

\bibitem{bharath2018radiologic}
K.~Bharath, S.~Kurtek, A.~Rao, and V.~Baladandayuthapani.
\newblock Radiologic image-based statistical shape analysis of brain tumours.
\newblock {\em Journal of the Royal Statistical Society. Series C, Applied
  statistics}, 67(5):1357, 2018.

\bibitem{bogachev1998gaussian}
V.~Bogachev.
\newblock {\em Gaussian measures}.
\newblock Number~62. American Mathematical Soc., 1998.

\bibitem{bott1958parallelizability}
R.~Bott and J.~Milnor.
\newblock On the parallelizability of the spheres.
\newblock {\em Bulletin of the American Mathematical Society}, 64(3):87--89,
  1958.

\bibitem{cantarella2019random}
J.~Cantarella, T.~Needham, C.~Shonkwiler, and G.~Stewart.
\newblock Random triangles and polygons in the plane.
\newblock {\em The American Mathematical Monthly}, 126(2):113--134, 2019.

\bibitem{CHEN2021303}
X.~Chen and Y.~Yang.
\newblock Diffusion k-means clustering on manifolds: Provable exact recovery
  via semidefinite relaxations.
\newblock {\em Applied and Computational Harmonic Analysis}, 52:303--347, 2021.

\bibitem{chen2018optimal}
Y.~Chen, T.~Georgiou, and A.~Tannenbaum.
\newblock Optimal transport for gaussian mixture models.
\newblock {\em IEEE Access}, 7:6269--6278, 2018.

\bibitem{chowdhury2020gromov}
S.~Chowdhury and T.~Needham.
\newblock Gromov-wasserstein averaging in a riemannian framework.
\newblock In {\em Proceedings of the IEEE/CVF Conference on Computer Vision and
  Pattern Recognition Workshops}, pages 842--843, 2020.

\bibitem{collett1981discriminating}
D.~Collett and T.~Lewis.
\newblock Discriminating between the von mises and wrapped normal
  distributions.
\newblock {\em Australian Journal of Statistics}, 23(1):73--79, 1981.

\bibitem{delon-agnes:2019}
J.~Delon and A.~Desolneux.
\newblock A wasserstein-type distance in the space of gaussian mixture models,
  2019.

\bibitem{deng-ICPR:2022}
X.~Deng, R.~Sarkar, E.~Labruyere, J.~Olive-Marin, and A.~Srivastava.
\newblock Characterizing cell shape populations using k-mode kernel mixtures.
\newblock In {\em International Conference on Pattern Recognition (ICPR)},
  2022.

\bibitem{deng-ISBI:2022}
X.~Deng, R.~Sarkar, E.~Labruyere, J.~Olivo-Marin, and A.~Srivastava.
\newblock Characterizing cell populations using statistical shape modes.
\newblock In {\em 2022 IEEE 19th International Symposium on Biomedical Imaging
  (ISBI)}, pages 1--5, 2022.

\bibitem{DOWSON}
D.C Dowson and B.V Landau.
\newblock The fréchet distance between multivariate normal distributions.
\newblock {\em Journal of Multivariate Analysis}, 12(3):450--455, 1982.

\bibitem{dryden-covariance:2009}
I.~L. Dryden, A.~Koloydenko, and D.~Zhou.
\newblock {Non-Euclidean statistics for covariance matrices, with applications
  to diffusion tensor imaging}.
\newblock {\em The Annals of Applied Statistics}, 3(3):1102 -- 1123, 2009.

\bibitem{mardia-dryden-book}
I.~L. Dryden and K.~V. Mardia.
\newblock {\em Statistical Shape Analysis}.
\newblock John Wiley \& Son, 1998.

\bibitem{frechet}
M.~Fr\'echet.
\newblock Les \'el\'ements al\'eatoires de nature quelconque dans un espace
  distanci\'e.
\newblock {\em Annales de l'institut Henri Poincar\'e}, 10(4):215--310, 1948.

\bibitem{10.1307/mmj/1029003026}
C.~Givens and R.~Shortt.
\newblock {A class of Wasserstein metrics for probability distributions.}
\newblock {\em Michigan Mathematical Journal}, 31(2):231 -- 240, 1984.

\bibitem{guo-JMIV:2022}
X.~Guo, A.~Basu Bal, T.~Needham, and A.~Srivastava.
\newblock {Statistical shape analysis of brain arterial networks (BAN)}.
\newblock {\em The Annals of Applied Statistics}, 16(2):1130 -- 1150, 2022.

\bibitem{Hauberg2018DirectionalSW}
S.~Hauberg.
\newblock Directional statistics with the spherical normal distribution.
\newblock {\em 2018 21st International Conference on Information Fusion
  (FUSION)}, pages 704--711, 2018.

\bibitem{jain2016geometry}
B.J. Jain.
\newblock On the geometry of graph spaces.
\newblock {\em Discrete Applied Mathematics}, 214:126--144, 2016.

\bibitem{Kendall_triangle}
D.~G. Kendall.
\newblock Shape manifolds, procrustean metrics, and complex projective spaces.
\newblock {\em Bulletin of the London Mathematical Society}, 16(2):81--121,
  1984.

\bibitem{kendall-barden-carne}
D.~G. Kendall, D.~Barden, T.~K. Carne, and H.~Le.
\newblock {\em Shape and shape theory}.
\newblock Wiley, 1999.

\bibitem{lee2012smooth}
J.~Lee.
\newblock {\em Smooth manifolds}.
\newblock Springer, 2012.

\bibitem{WGOT}
A.~Mallasto and A.~Feragen.
\newblock Optimal transport distance between wrapped gaussian distributions.
\newblock {\em International Workshop on Bayesian Inference and Maximum Entropy
  Methods in Science and Engineering (MaxEnt).}, 38, 2018.

\bibitem{mardia2000directional}
K.~Mardia and P.~Jupp.
\newblock {\em Directional statistics}, volume~2.
\newblock Wiley Online Library, 2000.

\bibitem{matteson2013nonparametric}
D.~Matteson and N.~James.
\newblock A nonparametric approach for multiple change point analysis of
  multivariate data, 2013.

\bibitem{MCCANN}
R.J. McCann.
\newblock A convexity principle for interacting gases.
\newblock {\em Advances in Mathematics}, 128(1):153--179, 1997.

\bibitem{milnor1974characteristic}
J.~Milnor and J.~Stasheff.
\newblock {\em Characteristic classes}.
\newblock Number~76. Princeton university press, 1974.

\bibitem{geomstats}
N.~Miolane, N.~Guigui, A.~Brigant, J.~Mathe, B.~Hou, Y.~Thanwerdas, S.~Heyder,
  O.~Peltre, N.~Koep, H.~Zaatiti, H.~Hajri, Y.~Cabanes, T.~Gerald, P.~Chauchat,
  C.~Shewmake, D.~Brooks, B.~Kainz, C.~Donnat, S.~Holmes, and X.~Pennec.
\newblock Geomstats: A {P}ython package for {R}iemannian geometry in machine
  learning.
\newblock {\em Journal of Machine Learning Research}, 21(223):1--9, 2020.

\bibitem{COTFNT}
G.~Peyre and M.~Cuturi.
\newblock Computational optimal transport.
\newblock {\em Foundations and Trends in Machine Learning}, 11(5-6):355--607,
  2019.

\bibitem{small-shapes}
C.~G. Small.
\newblock {\em The Statistical Theory of Shape}.
\newblock Springer, 1996.

\bibitem{srivastava-joshi-etal:05}
A.~Srivastava, S.~H. Joshi, W.~Mio, and X.~Liu.
\newblock Statistical shape anlaysis: Clustering, learning and testing.
\newblock {\em IEEE Trans. Pattern Analysis and Machine Intelligence},
  27(4):590--602, 2005.

\bibitem{srivastava2016functional}
A.~Srivastava and E.~Klassen.
\newblock {\em Functional and shape data analysis}, volume~1.
\newblock Springer, 2016.

\bibitem{srivastava2011shape}
A~Srivastava, E~Klassen, S.~Joshi, and I.H. Jermyn.
\newblock Shape analysis of elastic curves in euclidean spaces.
\newblock {\em IEEE Transactions on Pattern Analysis and Machine Intelligence},
  33(7):1415--1428, 2011.

\bibitem{energy_test}
G.~Szekely and M.~Rizzo.
\newblock Energy statistics: A class of statistics based on distances.
\newblock {\em Journal of Statistical Planning and Inference}, 8, 08 2013.

\bibitem{takatsu:2008}
A.~Takatsu.
\newblock On wasserstein geometry of the space of gaussian measures, 2008.

\bibitem{villani2009}
C.~Villani.
\newblock {\em Optimal transport: old and new}, volume 338.
\newblock Springer, 2009.

\bibitem{vo-park:2018}
G.D. Vo and C.~Park.
\newblock Robust regression for image binarization under heavy noise and
  nonuniform background.
\newblock {\em Pattern Recognition}, 81:224--239, 2018.

\bibitem{wang2013linear}
W.~Wang, D.~Slep{\v{c}}ev, S.~Basu, John~A. Ozolek, and G.~Rohde.
\newblock A linear optimal transportation framework for quantifying and
  visualizing variations in sets of images.
\newblock {\em International journal of computer vision}, 101:254--269, 2013.

\bibitem{woehl-etal:2012}
T.~Woehl, J.~Evans, I.~Arslan, W.~Ristenpart, and N.~Browning.
\newblock Direct in situ determination of the mechanisms controlling
  nanoparticle nucleation and growth.
\newblock {\em ACS Nano}, 6(10):8599–8610, 2012.

\bibitem{yakowitz1968identifiability}
S.~Yakowitz and J.~Spragins.
\newblock On the identifiability of finite mixtures.
\newblock {\em The Annals of Mathematical Statistics}, 39(1):209--214, 1968.

\bibitem{younes-diffeo}
L.~Younes.
\newblock {\em Shapes and Diffeomorphisms}.
\newblock Springer Berlin, 2010.

\bibitem{covariance-zhengwu:2018}
Z.~Zhang, J.~Su, E.~Klassen, H.~Le, and A.~Srivastava.
\newblock Rate-invariant analysis of covariance trajectories.
\newblock {\em J Math Imaging Vis}, 60:1306–1323, 2018.

\end{thebibliography}

\appendix

\section{Brief Introduction to the Shape Space of Planar Contours} \label{appendixA}

Here we describe a mathematical representation of shape of closed, planar contours as elements of a finite-dimensional unit sphere. 
Let ${\cal AC}_0$ denote the set of all absolutely-continuous curves 
of the type $\beta: \s^1 \to \real^2$ such that $\beta(t)= 0$ for some $t \in \s^1$. An element $\beta \in {\cal AC}_0$ represents a parameterized planar, closed curve passing through the origin. We are interested in quantifying the shape of $\beta$ in a manner that is invariant to its rotation, translation, scale, and re-parameterization. Taking an elastic approach to shape analysis of curves~\cite{srivastava2016functional,srivastava2011shape}, we represent $\beta$ using its square-root velocity function (SRVF) $q(t) = \frac{\dot{\beta}(t)}{\sqrt{| \dot{\beta}(t)|}}$. The mapping $\beta \mapsto q$ is a bijection from ${\cal AC}_0$ to $\ltwo(\s^1, \real^2)$. Representing a curve $\beta$ by its SRVF $q$ removes the effect of its translations. Further, we rescale $\beta$ to have unit length so that $\|q\|^2 = \mbox{length}(\beta) = 1$. The set of all scaled SRVFs forms a unit Hilbert sphere $\s_{\infty} \subset \ltwo$. To remove reparametrizations, we select a representative shape (this can be same as the distinguished point $p$ on the manifold ${\cal M}$ needed to build a global frame). We reparameterize this representative curve to be arc-length, and then register, through reparameterization, all individual curves (in a given dataset) to this curve. Now we have removed translation, scaling, and reparameterization.  

Next we consider a discretized representation of curves as follows. We sample an SRVF $q$ using $T$ uniformly-spaced sample points $\{t_i \in \s^1 , i= 1, \dots, T\}$ and denote the samples by an array ${\bf q} \in \real^{2 \times T}$ where ${\bf q}_i = q(t_i)$. To ensure unit scale, we rescale the array ${\bf q}$ to have Frobenious norm one and thus we have ${\bf q} \in \s^{2T-1}$. To remove rotation, we use Procrustes alignment in a pairwise fashion as follows. Define the action of $SO(2)$ on $\s^{2T-1}$ as $(O, {\bf q}) = Oq$ and form equivalence classes $[{\bf q}] = \{ O {\bf q}| O \in SO(2)\}$. The shape space of discrete contours, ${\cal S}_c$, is the set of all equivalence classes and denoted by the quotient space $\s^{2T-1}/SO(2)$.
Given any two curves $\beta_1, \beta_2 \in {\cal AC}_0$ and their corresponding discrete SRVFs ${\bf q}_1, {\bf q}_2 \in \s^{2T-1}$, 
$O^* = \argmin_{O \in SO(2)} \| {\bf q}_1 - O {\bf q}_2\|^2$.

The shape metric is then given by: 
$d_s([{\bf q}_1], [{\bf q}_2]) = \cos^{-1}(\inner{{\bf q}_1}{O^*{\bf q}_2})$.
Similarly, given a number of contours $\beta_1, \beta_2, \dots, \beta_n$, one can compute the sample mean of their shapes $[{\bf q}_1], [{\bf q}_2], \dots, [{\bf q}_n]$ according to: 
\[
[\hat{m}] = \argmin_{[{\bf q}] \in {\cal S}} \sum_{i=1}^n d_s([{\bf q}], [{\bf q}_i])^2\ .
\]
An iterative algorithm for finding the minimizer is presented in several places, including~\cite{srivastava2011shape}. In this paper, we use a mode-based procedure~\cite{deng-ICPR:2022,deng-ISBI:2022} to reach estimates of mean shape more efficiently. Once we have computed the sample mean, we can rotationally align individual curves to the mean and express them in a preferred orientation according to: 
\[
{\bf q}_i^* = O_i^* {\bf q}_i, \ \ \mbox{where}\ \ O^* = \argmin_{O \in SO(2)} \| \hat{m} - O {\bf q}_i\|^2\ .
\]
These aligned shapes can be treated as elements of $\s^{2T-1}$ for the purpose of statistical modeling and comparisons. Furthermore, we can compute the shooting vectors ${\bf v}_i^* = \exp_{\hat{m}}^{-1}({\bf q}_i^*) \in T_{\hat{\mu}_n}(\s^{2T-1})$ (on the unit sphere) and define a covariance matrix $\hat{\Sigma} = \frac{1}{n-1} \sum_{i=1}^n {\bf v}_i {\bf v}_i^T $ $\in \real^{(2T-1) \times (2T-1)}$. This gives us a way to represent contour shapes as elements of a finite-dimensional unit sphere, and to define their sample statistics such as means and covariance. One can use these statistics to impose a Gaussian model $\eta = N_{\cal E}(m, \Sigma)$ on $\s^{2T-1}$. 

\section{Mapping Between ${\cal S}_T$ and $\s^2$}\label{hopf equations}

A planar triangle is represented by a matrix $x \in  \real^{3 \times 2}$ or a complex vector $z \in \cC^3$. Let the $i^{th}$ element of $z$ be $z_i = x_{i,1} + j x_{i,2}$. The bijective mappings between the Kendall shape space of triangles ${\cal S}_T$ and $\s^2$ (using Hopf Fibration) are as follows. The forward map from ${\cal S}_T$ to $\s^2$ is given by: 
\begin{eqnarray*}
&& (x_{1,1},\dots, x_{3,2})\ \  \mapsto \left(\theta = \cos^{-1}(\frac{y_3}{r}), \
\varphi =\tan^{-1}(\frac{y_2}{y_1})\right),\ \ \mbox{where}\\ 
&& y_1 = 2(x_{1,2} x_{2,2}+ x_{1,1} x_{2,1}),\ \  y_2 = 2(x_{2,1} x_{2,2} -x_{1,1} x_{1,3}),\ \  y_3 = 1-2(x_{1,2}^2+x_{2,1}^2),\ 
\end{eqnarray*}
and $r = \sqrt{y_1^2 + y_2^2 + y_3^2}$. The backward map from $(\varphi, \theta) \in \s^2$ to ${\cal S}_T$ is given by: 
\begin{eqnarray*}
&&x_{1,1} = \cos(\frac{\psi+\varphi}{2})\sin(\theta/2),\ x_{1,2} = \sin(\frac{\psi+\varphi}{2})\sin(\theta/2),\ \  x_{2,1}=\cos(\frac{\psi-\varphi}{2})\cos(\theta/2),\\
&&x_{2,2} = \sin(\frac{\psi-\varphi}{2})\cos(\theta/2),\ \ x_{3,1} = -(x_{1,1} + x_{2,1}),\ \ x_{3,2} = -(x_{1,2} + x_{2,2})\ ,
\end{eqnarray*}
where $\theta \in [0,\pi]$, $\varphi \in [0,\pi]$ and $\psi \in [0,2\pi]$. 
The angle $\psi$ here is arbitrary and controls the rotation of the resulting triangle. 

\end{document}